\def\full{1} %1=full version, 0 = stoc 10 page version
\def\final{1} % use final=1 to turn off comments

\ifnum\final=0
\newcommand{\mynote}[3]{\todo[color=#2]{\tiny { {\sc #1}: {\sf #3}}}}
\else
\newcommand{\mynote}[3]{}
\fi

\newcommand{\asnote}[1]{\mynote{Adam}{yellow}{#1}}
\newcommand{\rbnote}[1]{\mynote{Raef}{green}{#1}}

\ifnum\full=1
\documentclass[11pt]{article}
\else
\documentclass{sig-alternate-2013}
\newfont{\mycrnotice}{ptmr8t at 7pt}
\newfont{\myconfname}{ptmri8t at 7pt}

\permission{Permission to make digital or hard copies of all or part of this work for personal or classroom use is granted without fee provided that copies are not made or distributed for profit or commercial advantage and that copies bear this notice and the full citation on the first page. Copyrights for components of this work owned by others than ACM must be honored. Abstracting with credit is permitted. To copy otherwise, or republish, to post on servers or to redistribute to lists, requires prior specific permission and/or a fee. Request permissions from permissions@acm.org.}
\conferenceinfo{STOC'15,}{June 14--17, 2015, Portland, Oregon, USA.\\
{\mycrnotice{Copyright is held by the owner/author(s). Publication rights licensed to ACM.}}}
\copyrightetc{ACM \the\acmcopyr}
\crdata{978-1-4503-3536-2/15/06\ ...\$15.00.\\
DOI: http://dx.doi.org/10.1145/2746539.2746632}

%Replace this line with the http://DOI string/url which is specific for your submission and included in %the ACM rightsreview confirmation email upon completing your ACM form

\clubpenalty=10000
\widowpenalty = 10000

\usepackage{times}
\usepackage{graphicx,color}
\usepackage{array,float}
\usepackage[usenames,dvipsnames]{xcolor}
\usepackage{amstext,amssymb,amsmath}
\usepackage{verbatim}
\usepackage{bm}
\usepackage{paralist}
\usepackage{ulem}\normalem
\usepackage[numbers]{natbib}
\usepackage{todonotes}
\usepackage[noend]{algorithmic}
\usepackage{algorithm}

\newtheorem{lem}{Lemma}[section]
\newtheorem{thm}[lem]{Theorem}

\newtheorem{defn}[lem]{Definition}

\newtheorem{claim}[lem]{Claim}
%\newdef{defn}{Definition}[section]
\renewcommand{\algorithmicrequire}{\textbf{Input:}}
\renewcommand{\algorithmicensure}{\textbf{Output:}}

\renewcommand{\cite}{\citep}
\newcommand{\thet}{\theta}
\newcommand{\linfty}[1]{\|#1\|_\infty}
\newcommand{\ip}[2]{\langle #1,#2\rangle}
\newcommand{\nptheta}{\hat\theta}
\newcommand{\symdiff}{\Delta}
\newcommand{\privtheta}{\theta^{priv}}
\renewcommand{\paragraph}[1]{\vspace{3pt}\noindent\textbf{#1}}
\newcommand{\scrX}{\ensuremath{\mathcal{X}}}
\newcommand{\scrY}{\ensuremath{\mathcal{Y}}}
\newcommand{\scrZ}{\ensuremath{\mathcal{Z}}}		
\newcommand{\eL}{\mathcal{L}}
\newcommand{\beL}{\bar{\mathcal{L}}}
\newcommand{\bad}{\sf Bad}
\newcommand{\good}{\sf Good}
\newcommand{\hash}{\sf Hash}
\newcommand{\rA}{\ensuremath{\rightarrow}}
\newcommand{\rrA}{\ensuremath{\longrightarrow}}
\newcommand{\qB}{\ensuremath{\mathbf{q}}}
\newcommand{\XB}{\ensuremath{\mathbf{X}}}
\newcommand{\zeroB}{\ensuremath{\mathbf{0}}}
\newcommand{\sm}{\mbox{\textendash}}
\newcommand{\ltwo}[1]{\|#1\|_2}
\newcommand{\lone}[1]{\|#1\|_1}
\newcommand{\linf}[1]{\|#1\|_{\infty}}
\newcommand{\eps}{\epsilon}
\newcommand{\A}{\mathcal{A}}
\newcommand{\cost}{\mathsf{cost}}
\newcommand{\stat}{\mathsf{stat}}
\newcommand{\D}{\mathcal{D}}
\newcommand{\J}{J}
\newcommand{\G}{\mathcal{G}}
\newcommand{\hc}{\mathcal{H}}
\newcommand{\bx}{\mathbf{x}}
\newcommand{\by}{\mathbf{y}}
\newcommand{\bz}{\mathbf{z}}
\newcommand{\hbz}{\hat{\mathbf{z}}}
\newcommand{\bw}{\mathbf{w}}
\newcommand{\bolda}{\mathbf{a}}
\newcommand{\boldb}{\mathbf{b}}
\newcommand{\indx}{\mathsf{index}}
\newcommand{\bit}{\mathsf{bit}}
\newcommand{\bbz}{\bar{\mathbf{z}}}
\newcommand{\bM}{\bar{M}}
\newcommand{\zcp}{\mathcal{Z}_{+}}
\newcommand{\zcn}{\mathcal{Z}_{-}}
\newcommand{\hypsc}{\{-\frac{1}{\epsilon},~\frac{1}{\epsilon}\}^m}
\newcommand{\hypc}{\{-\frac{1}{\sqrt{m}},~\frac{1}{\sqrt{m}}\}^m}
\newcommand{\Pc}{\mathcal{P}}
\newcommand{\hP}{\hat{\mathcal{P}}}
\newcommand{\T}{\mathcal{T}}
\newcommand{\risk}{{\sf R}}
\newcommand{\vol}{{\sf Vol}}
\newcommand{\ind}{{\mathbf{1}}}
\newcommand{\mineig}{\mu}
\newcommand{\I}{\mathbb{I}}
\newcommand{\Ico}{\mathcal{I}_{\sf out}}
\newcommand{\Ici}{\mathcal{I}_{\sf in}}
\newcommand{\E}{\mathbb{E}}
\newcommand{\Sc}{\mathcal{S}}
\newcommand{\Ec}{\mathcal{E}}
\newcommand{\V}{\mathcal{V}}
\newcommand{\F}{\mathcal{F}}
\newcommand{\samp}{\mathsf{Samp}}
\newcommand{\empL}{\mathcal{L}}
\newcommand{\hatw}{\hat{w}}
\newcommand{\dist}{{\sf Dist}_{\infty}}
\newcommand{\hdist}{{\sf Dist}}
\newcommand{\htheta}{\widetilde\theta}
\newcommand{\ptheta}{\theta^\perp}
\newcommand{\dagw}{w^\dagger}
\newcommand{\tildew}{\tilde{w}}
\newcommand{\tildeF}{\tilde{F}}
\newcommand{\tildef}{\tilde{f}}
\newcommand{\re}{\mathbb{R}}
\newcommand{\B}{\mathbb{B}}
\newcommand{\Bc}{\mathcal{B}}
\newcommand{\enc}{{\sf Enc}}
\newcommand{\dec}{{\sf Dec}}
\newcommand{\He}{{\sf Heavhit}}
\newcommand{\coll}{{\sf Coll}}
\newcommand{\splex}{\mathsf{simplex}}
\newcommand{\Q}{\mathcal{Q}}
\renewcommand{\S}{\mathbb{S}}
\newcommand{\teps}{\tilde{\epsilon}}
\newcommand{\hw}{\hat{w}}
\newcommand{\hmu}{\hat{\mu}}
\newcommand{\hmuA}{\hat{\mu}_{\sf A}}
\newcommand{\muA}{\mu_{\sf A}}
\newcommand{\hmuC}{\hat{\mu}_{\C}}
\newcommand{\muC}{\mu_{\C}}
\newcommand{\hmug}{\hat{\mu}_{\good}}
\newcommand{\istr}{i^{\ast}}
\newcommand{\mU}{\mathcal{U}}
\newcommand{\grad}{\bigtriangledown}
\newcommand{\mypar}[1]{\smallskip
\noindent{\bf\em {#1}:}}
\newcommand{\boldpar}[1]{\smallskip
\noindent{\bf{#1}:}}
\newcommand{\etal}{\emph{et al.}}
\newcommand{\ldp}{\bf{LDP}}

\newcommand{\ignore}[1]{}
\newcommand{\tprivJ}{{{\tilde J}^{\text{priv}}}}
\newcommand{\tnonoiseJ}{{{J}^\#}}
\newcommand{\z}{z}
\renewcommand{\b}{b}
\newcommand{\TODO}[1]{{\bf TODO: #1}}
\newcommand{\NOTE}[1]{{\bf NOTE: #1}}
\newcommand{\Ced}{{\Delta_{\epsilon,\delta}}}
\newcommand{\name}{\textsc{GUPT}\xspace}
\newcommand{\nameplain}{GUPT\xspace}
\newcommand{\aname}{a \textsc{GUPT}\xspace}
\newcommand{\Aname}{A \textsc{GUPT}\xspace}
\newcommand{\Weta}{\mathbf{W}^{(\eta)}}
\newcommand{\We}{\mathcal{W}^{(\eta)}}
\newcommand{\W}{\mathcal{W}}
\newcommand{\M}{\mathcal{M}}
\newcommand{\Meta}{\mathcal{M}^{(\eta)}}
\newcommand{\U}{\mathcal{U}}
\newcommand{\R}{\mathcal{R}}
\newcommand{\Z}{\mathcal{Z}}
\newcommand{\be}{\mathbf{e}}
\newcommand{\f}{\mathbf{f}}
\newcommand{\bc}{\mathbf{c}}
\newcommand{\hf}{\hat{f}}
\newcommand{\hbf}{\hat{\mathbf{f}}}
\newcommand{\C}{\mathcal{C}}
\newcommand{\er}{\mathsf{Error}}
\newcommand{\poly}{\mathsf{poly}}
\newcommand{\btr}{\mathsf{Round}}
\newcommand{\genproj}{\mathsf{GenProj}}
\newcommand{\gen}{\mathsf{RndGen}}
\newcommand{\struct}{\mathsf{struct}}
\newcommand{\err}{\textsc{Err}}
\newcommand{\sh}{\mbox{S-Hist}}
\newcommand{\fo}{\mbox{FO}}
\newcommand{\prot}{{\sf PROT}}
\newcommand{\gprot}{{\sf GenPROT}}
\newcommand{\code}{{\sf code}(d, m, \zeta)}
\newcommand{\mmer}{\mathsf{MinMaxError}}
\newcommand{\Lap}{\mathsf{Lap}}
\newcommand{\List}{\textsc{List}}
\newcommand{\pre}{\mathsf{P_{error}}}
\newcommand{\mpre}{\mathsf{P_{min-error}}}

\newcommand{\paren}[1]{{\left({#1}\right)}}

\begin{document}

\title{Local, Private, Efficient Protocols \\ for Succinct Histograms}

%\numberofauthors{2}

\author{Raef Bassily \qquad\qquad Adam Smith\titlenote{Work done while A.S. was on sabbatical at Boston University and Harvard University.}\\
       \affaddr{Department of Computer Science and Engineering}\\
       \affaddr{The Pennsylvania State University}\\
       \email{\{bassily, asmith\}@psu.edu}}

\maketitle

\begin{abstract}
  We give efficient protocols and matching accuracy lower
  bounds for frequency estimation in the local model for differential
  privacy. In this model, individual users randomize their data
  themselves, sending differentially private reports to an untrusted
  server that aggregates them.

  We study protocols that produce a succinct histogram
  representation of the data. A succinct histogram is a list of the most
  frequent items in the data (often called ``heavy hitters'')  along
  with estimates of their frequencies; the frequency of all other
  items is implicitly estimated as 0.

  If there are $n$ users whose items come from a universe of size $d$,
  our protocols run in time polynomial in $n$ and $\log(d)$.  With
  high probability, they estimate the accuracy of every item up to
  error $O(\sqrt{\log(d)/(\eps^2n)})$.
  Moreover, we show that this much error is necessary, regardless of
  computational efficiency, and even for the simple setting where only
  one item appears with significant frequency in the data set.

  Previous protocols (Mishra and Sandler, 2006; Hsu, Khanna
  and Roth, 2012) for this task either ran in time $\Omega(d)$ or had
  much worse error (about $\sqrt[6]{\log(d)/(\eps^2n)}$), and the only
  known lower bound on error was $\Omega(1/\sqrt{n})$. \rbnote{I think the term ``efficient'' here does not fit.}

  We also adapt a result of McGregor et al (2010) to the local
  setting. In a model with public coins, we show that each user need
  only send 1 bit to the server. For all known local
  protocols (including ours), the transformation preserves
  computational efficiency.

%%% Local Variables:
%%% mode: latex
%%% TeX-master: "../heavyhitters"
%%% End:

\end{abstract}

\section{Introduction}\label{sec:intro}

Consider a software producer that wishes to gather statistics on how
people use its software. If the software handles sensitive
information ---for example, a browser for anonymous web surfing or a
financial management software---users may not want to share their data
with the producer. A producer may not want to collect the raw data
either, lest they be subject to subpoena. How can the producer collect
high-quality \emph{aggregate} information about users while
providing guarantees to its users (and itself!) that it isn't storing
user-specific information?

In the \emph{local model} for private data analysis (also called the
\emph{randomized response} model\footnote{The term ``randomized response''
   may refer either to the model or a specific protocol;
  we use ``local model'' to avoid ambiguity.}), each individual user
randomizes her data herself using a randomizer $Q_i$ to obtain a report
(or ``signal'') $z_i$ which she sends to an
\emph{untrusted} server to be aggregated in to a summary $s$ that can be
used to answer queries about the data (Figure~\ref{fig:local}).
The server may provide public coins visible to all parties, but privacy guarantees depend only
on the randomness of the user's local coins. The local model has been
studied extensively because control of private data remains in users' hands.
\begin{figure}[bht]
\centering    \includegraphics[height=1.25in]{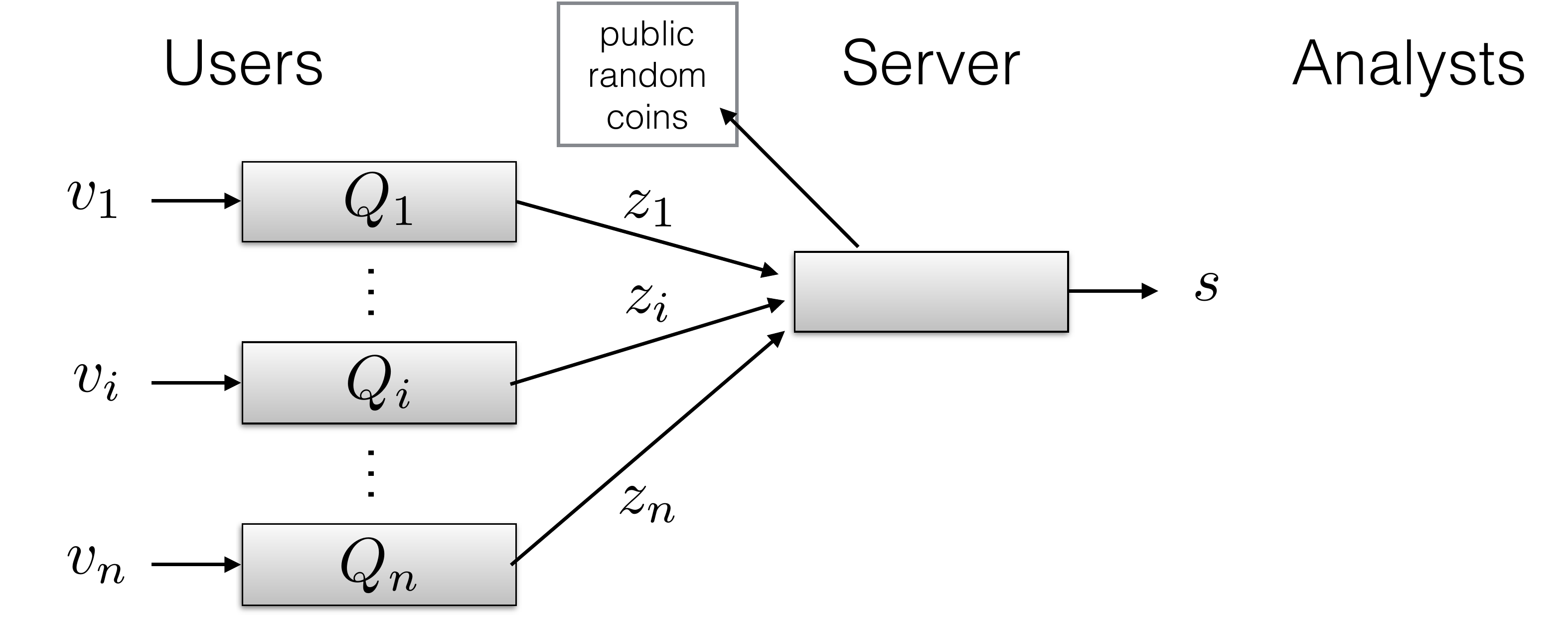}
\caption{The local model for private data analysis.}
\label{fig:local}
\end{figure}

We focus on
protocols that provide \emph{differential privacy} \citep{DMNS06} (or,
equivalently in the local model, \emph{$\gamma$-amplification}
\citep{EGS03} or \emph{FRAPP} \citep{AH05}).
\begin{defn}
We say that an algorithm $\Q:\V\rightarrow\Z$ is $(\eps, \delta)$-local differentially private (or $(\eps, \delta)$-$\ldp$), if for any pair $v, v'\in\V$ and any (measurable) subset $\Sc\subseteq\Z$, we have 
$$\Pr\left[\Q(v)\in\Sc\right]\leq
e^{\eps}\Pr\left[\Q(v')\in\Sc\right]+\delta.$$
The special case with $\delta=0$ is called \emph{pure} $\eps$-$\ldp$.
\label{def:dp}
\end{defn}

We describe new protocols and lower bounds for frequency estimation
and finding heavy hitters in the local privacy model. Local differentially
private protocols for frequency estimation are used in the
Chrome web browser (\citet{Rappor14,Rappor-unknowns15}), and can be used as the basis of
other estimation tasks (see \citet{MS06,DwNi04}). \rbnote{can we note here about the empirical performance of RAPPOR?}
\asnote{Mention multinomial estimation and Warner.}

We also show a generic result for LDP protocols: in the public-coin
setting, each user only needs to send 1 bit to the server.

Suppose that there are $n$ users, and each user $i$ holds a value
$v_i$ in a universe of size $d$ (labeled by integers in
$[d]=\{1,...,d\}$).  We wish to enable an analyst to estimate
frequencies: $f(v)=\frac{1}{n}\#\{i:\ v_i=v\} \, .$ Following
\citet{HKR10}, we look at summaries that provide two types of
functionality:\rbnote{can we add a brief definition for $\#$?}
\begin{itemize}
\item A \textbf{frequency oracle}, denoted $\fo$, is a data structure
  together with an algorithm $A$ that, for any $v\in\V$, allows
  computing an estimate $\hf(v)= A(\fo,v)$ of the frequency $f(v)$.

  The \emph{error} of the oracle $\fo$ is the maximum over items $v$
  of $|\hf(v)-f(v)|$. That is, we measure the $\ell_\infty$ error of
  the histogram estimate implicitly defined by $\hat f$. A protocol
  for generating frequency oracles has error $(\eta,\beta)$ if for all
  data sets, it produces an oracle with error $\eta$ with probability
  at least $1-\beta$.

\item A \textbf{succinct histogram}, denoted $\sh$, is a data
  structure that provides a (short) list of items $\hat v_1,...,\hat
  v_k$, called the \emph{heavy hitters}, together with estimated
  frequencies $(\hf(\hat v_j)\ :\ j\in [k])$.
  The frequencies of the items not in the list are implicitly
  estimated as $\hf(v)=0$. As with the frequency oracle, we measure
  the error of $\sh$ by the $\ell_\infty$ distance between the
  estimated and true frequencies, $\max_{v \in
    [d]}|\hf(v)-f(v)|$.

  If a data structure aims to provide error $\eta$, the
  list need never contain more than $O(1/\eta)$ items (since items with
  estimated frequencies below $\eta$ may be omitted from the list, at the price
  of at most doubling the error).
\end{itemize}

If we ignore computation, these two functionalities are equivalent since
a succinct histogram defines a frequency oracle directly and an analyst with a frequency oracle $\fo$ can query the oracle
on all possible items and retain only those with estimated frequencies
above a threshold $\eta$ (increasing the error by at most
$\eta$). However, when the universe size $d$ is large (for example, if
a user's input is their browser's home page or a financial summary),
succinct histograms are much more useful.

We say a protocol is \emph{efficient} if it has computation time,
communication and storage polynomial in $n$ and $\log (d)$ (the users'
input length).
Prior to this work, efficient protocols for both tasks satisfied only
$(\eps,\delta)$-LDP for $\delta>0$. Efficient protocols for frequency
oracles \citep{MS06,HKR10} were known with worst-case expected error
$O(\sqrt{\frac{\log (d)\log(1/\delta)} {\eps^2 n}})$, while the only
protocols for succinct histograms \citep{HKR10} had much worse error
--- about $\sqrt[6]{\frac{\log (d)\log(1/\delta)} {\eps^2 n}}$. Very
recently, \citet{Rappor-unknowns15} proposed a heuristic construction
for which worst-case bounds are not known.
None
of these protocols matched the best lower bound on accuracy,
$\Omega(1/\sqrt{n})$ \cite{HKR10}.
%\asnote{- communication:  public-coin model.}

\subsection{Our Results}

%\begin{description}
\paragraph{Efficient Local Protocols for Succinct Histograms with
  Optimal Error.} We provide the first
  polynomial time \emph{local} $(\eps,0)$-differentially
  private protocol for succinct histograms that has worst-case error
  $O(\sqrt{\frac{\log (d)} {\eps^2 n}})$. As we show, this error is
  optimal for local protocols (regardless of computation time). Furthermore, in the public
  coin model, each participant sends only 1 bit to the server.
\asnote{Can we also say that computation time is linear in $n$ +
  length of output? We
  might need linear-time decodable codes for that.}

Previous constructions were either inefficient \citep{MS06,HKR10}
(taking time polynomial in $d$ rather than $\log d$), or had much
worse error guarantees%
% \footnote{The bound we state here is higher, by
 %  a factor of about $\log d$, than the bound stated in the paper of
 % \cite{HKR10} (arxiv, version 2). The difference stems from a calculation
 %  mistake in that paper (in the top two equations on  page 13 of the latest version). We note that
 %  their bound, were it correct, would contradict our lower bounds.},%
\footnote{\citet{MS06} state error bounds for a single query to the
  frequency oracle, assuming the query is determined before the
  protocol is executed. Known frequency oracle constructions (both
  previous work and ours) achieve error $O(\sqrt{\log(1/\beta)/n})$ in
  that error model.}%
---at least $\Omega\left(\big(\frac{\log (d)} {\eps^2
  n}\big)^{1/6}\right)$. Furthermore, constructions with communication
sublinear in $d$ satisfied only $(\eps,\delta)$ privacy for
$\delta>0$.

  Our construction consists of two main pieces. Our first protocol
  efficiently recovers a heavy hitter from the input, given a promise
  that the heavy hitter is \emph{unique}: that is, all players either
  have a particular value $v$ (initially unknown to the server) or a
  default value $\bot$. The idea is to have each player send a
  highly noisy version of an error-correcting encoding of their input;
  the server can then recover (the codeword for) $v$ by averaging all
  the received reports and decoding the resulting vector.

  Our full protocol, which works for all inputs, uses ideas from the
  literature on low-space algorithms and compressive
  sensing, e.g., \cite{GGIMS02}. Specifically, using random hashing, we can partition the
  universe of possible items into bins in which there is likely to be only a single
  heavy hitter. Running many copies of the protocol for unique heavy
  hitters in parallel, we can recover the list of heavy hitters. A
  careful analysis shows that the cost to privacy is essentially the
  same as running only a single copy of the underlying protocol. \rbnote{NEW: added citations from comp sensing literature}

  Along the way, we provide simpler and more private frequency-oracle
  protocols. Specifically, we show that the ``JL'' protocol of
  \citet{HKR10} can be made $(\eps,0)$-differentially private, and can
  be simplified to use computations in much smaller dimension (roughly,
  $O(n)$ instead of $\Omega(n^4\log d)$).

\paragraph{Lower Bounds on Error.} We show that, regardless of computation
  time and communication, every local $(\eps,\delta)$-DP protocol for
  frequency estimation has
  worst-case error $\Omega(\sqrt{\frac{\log (d)} {\eps^2 n}}))$
  as long as $\delta\ll 1/n$. This shows that our efficient protocols
  have optimal error.

  The instances that give rise to this lower bound are simple: one
  particular item $v$ (unknown to the algorithm) appears with
  frequency $\eta$, while the remaining inputs are chosen uniformly at
  random from $[d]\setminus\{v\}$. The structure of these instances
  has several implications. First, our lower bounds apply equally well
  to worst-case error (over data sets), and ``minimax error''
  (worst-case error over distributions in estimating the underlying
  distribution on data).

  Second, the accuracy of frequency estimation protocols must depend
  on the universe size $d$ in the local model,  even if one item appears much more
  frequently than all others. In contrast, in a centralized model,
  there are $(\eps,\delta)$-differentially private protocols that achieve
  error independent of the universe size, assuming only that there is a
  small gap (about $\frac{\log(1/\delta)}{\eps n}$) between
  the frequencies of the heaviest and second-heaviest hitters.

  The proof of our lower bounds adapts (and simplifies) a framework developed by
  \citet{DuchiJW13} for translating lower bounds on statistical
  estimation to the local privacy model. We make their framework more
  modular, and  show that it can be used  to prove lower bounds for
  $(\eps,\delta)$-differentially private protocols for $0<\delta<1/n$
  (in its original instantiation, it applied only for $\delta=0$). One
   lemma, possibly of independent interest, states that the mutual information between the input and
  output of a local protocol is at most $O(\eps^2+\frac{\delta}{\eps}
  \log(d\eps/\delta))$. In particular, the relaxation with $\delta>0$
  does not allow one to circumvent information-theoretic lower bounds
  unless $\delta$ is very large.

  \paragraph{1-bit Protocols Suffice for Local Privacy.} We show that
  a slight modification to the compression technique of McGregor et
  al. \cite[Theorem 14]{MMPRTV10} yields the following: in a public
  coin model (where the server and players have access to a common
  random string), every $(\eps,0)$-DP local protocol can be
  transformed so that each user sends only a single bit to the
  server. Moreover, the transformation is efficient under the
  assumption that one can efficiently compute conditional
  probabilities $Q(y|x)$ for the randomizers in the protocol.
  To our knowledge, all the local protocols in the literature (in
  particular, our efficient protocol for heavy hitters) satisfy this extra
  computability condition.

  The randomness of the public coins affects utility but not
  privacy in the transformed protocol; in particular, the coins may be generated by the untrusted
  server, by applying a pseudorandom function to the user's ID (if it
  is available), or by expanding a short seed sent by the user  using
  a pseudorandom generator.

  The transformation, following \cite{MMPRTV10}, is based on rejection
  sampling: the public coins are used to select a random sample from a
  fixed distribution, and a player uses his input to decide whether or
  not the sample should be kept (and used by the server) or
  ignored. This decision is transmitted as 1 bit to the server. Local
  privacy ensures that the rejection sampling procedure accepts with
  sufficiently large probability (and leaks little information about
  the input).

\subsection{Other Related Work}
\rbnote{Add quick reference to Ulfar's recent paper.}

In addition to the works mentioned so far on frequency estimation
\cite{MS06,HKR10,Rappor14,Rappor-unknowns15}, many papers have studied the complexity of
local private protocols for specific tasks. \asnote{Make a table if time}
\asnote{Add paragraph about $O(d)$-time algorithms, and algorithms for
  other utility measures. E.g. DJW $\ell_2$ and $\ell_1$ error.}

Most relevant here are the results of \citep{KLNRS08,DuchiJW13}
on learning and statistical estimation in the local model.
\citet{KLNRS08}  showed that when data are drawn i.i.d. from a
distribution, then every LDP learning algorithm can be simulated in
the statistical queries model \citep{Kearns98}. In particular, they
showed that learning parity and related functions requires an
exponential amount of data. Their simulation technique is the
inspiration for our communication reduction result.

Recently, \citet{DuchiJW13} studied a class of convex statistical
estimation problems, giving tight (minimax-optimal) error
guarantees. One of the local randomizers developed in \citep{DuchiJW13}
was the basis for the ``basic randomizer'' which is a building block
for our protocols. Moreover, our lower bounds are based
on the information-theoretic framework they establish.

Finally, our efficient protocols are based on ideas from the large
literature on streaming algorithms and compressive sensing (as were
the efficient protocols of \citet{HKR10}). For example, the use of
hashing to isolate unique ``heavy'' items appears  in
the context of sparse approximations to a vector's Fourier
representation \cite{GGIMS02} (and arguably that  idea has roots in
learning algorithms for Fourier coefficients such as \cite{Kushilevitz-Mansour91}).
% similar ideas appeared in \cite{GGIMS02} in the context of sparse
% approximation of Fourier representation of finite-dimensional
% signals. Also, in \cite{Indyk-Kapralov14}, the idea of hashing items
% into unique bins is used to isolate large Fourier coefficients of a
% signal, which are then used for sparse recovery from a small number of
% Fourier measurements. 
This provides further
evidence of the close relationship between low-space algorithms and
differential privacy (see, e.g., \citep{panprivate,DNPR10,BlockiBDS12jl,KenthapadiKMM12,Upadhyay13}). 
\rbnote{NEW: added citations and brief description of the context}

\ifnum\full=1

\section{Building Blocks}\label{sec:upper}

\subsection{Useful Tools}\label{subsec:use-tools}

In this subsection, we will introduce some of the tools that we will use in our constructions.

First, we describe a basic randomizer (Algorithm~\ref{def:local-rand}) that will be used in our constructions as a tool to ensure that each user generates an $\eps$-differentially private report. This randomizer is a
more concise version of one of the randomizers in \citet{DuchiJW13}.

%Such basic randomizer is very similar to the one proposed by Duchi et al. in \cite{DuchiJW13}, \cite{DJW}.

\mypar{Basic randomizer} Our basic randomizer $\R$ takes as input either an $m$-bit string represented by one of the vertices of the hypercube $\hypc$, or a special symbol represented by the all-zero $m$-length vector $\zeroB$. The randomizer $\R$ picks a bit $x_j$ at random from the input string $\bx$ (where $j$ is the index of the chosen bit), then it randomizes and scales $x_j$ to generate a bit $z_j\in\{-c_{\eps} \sqrt{m}, c_{\eps} \sqrt{m}\}$ (for some fixed $c_{\eps}=O(1/\eps)$). Finally, $\R$ outputs the pair $(j, z_j)$.  As will become clear later in our constructions, the $m$-bit input of $\R$ will be a unique encoding of one of the items in $\V$ whereas the special symbol $\zeroB$ will serve notational purposes to describe a special situation in our constructions when a user sends no information about its item.

%represented by one of the vertices of the hypercube of edge length $2B$ for some $B>0$ (See the specific setting of $B$ in Theorem~\ref{thm:prop-local-rand} below).

%\begin{defn}[The basic randomizer $\R$]
%The basic randomizer $\R:\{-\frac{1}{\sqrt{m}},~\frac{1}{\sqrt{m}}\}^m\cup\{\zeroB\}\rightarrow\{-frac{\sqrt{m}}{\eps}, frac{\sqrt{m}}{\eps}\}$ is defined as follows: for every $\bx\in\hypc$,
%\begin{align}
%&\R(\bx)\hspace{-.1cm}=\left\{\begin{array}{cc}
%          Z_{+}(\bx) & \text{ w.p. } \frac{e^{\eps}}{e^{\eps}+1} \\
%          Z_{-}(\bx) & \text{ w.p. } \frac{1}{e^{\eps}+1}
%        \end{array}\right.
%\end{align}
%where $Z_{+}(\bx)$ is a random variable drawn uniformly from $\zcp\triangleq\left\{\bz\in \{-B, B\}^m:~\langle \bz, \bx\rangle \geq 0\right\}$ and $Z_{-}(\bx)$ is drawn uniformly from $\zcn\triangleq\left\{\bz\in\{-B, B\}^m:~\langle \bz, \bx\rangle < 0\right\}$, and
%$$\R(\zeroB)\leftarrow\{-B, B\}^m,$$
%i.e., $\R(\zeroB)$ is uniformly sampled from $\{-B, B\}^m$. We will be using a specific setting of $B$ given by Part~2 of Theorem~\ref{thm:prop-local-rand}.
%\label{def:local-rand}
%\end{defn}

\begin{algorithm}[h]
	\caption{$\R$: $\eps$-Basic Randomizer}
	\begin{algorithmic}[1]
		\REQUIRE $m$-bit string $\bx\in\hypc\cup\{\zeroB\}$, and the privacy parameter $\eps$.
		\STATE Sample $j\leftarrow [m]$ uniformly at random. 
\label{step:gen-rand-index}
        \IF{$\bx\neq \zeroB$}
                {\STATE Randomize $j$-th bit $x_j$ of the input $\bx\in\hypc$ as follows: \label{step:bit-random}
			    \begin{align}
			      &z_j\hspace{-.1cm}=\left\{\begin{array}{cc}
                                                        c_{\eps} m x_j & \text{ w.p. } \frac{e^{\eps}}{e^{\eps}+1} \\
					           -c_{\eps} m x_j & \text{ w.p. } \frac{1}{e^{\eps}+1}
					                 \end{array}\right.\nonumber
			    \end{align} 
			where $c_{\eps}=\frac{e^{\eps}+1}{e^{\eps}-1}=O\left(\frac{1}{\eps}\right)$. }
        \ELSE
	   {\STATE  Generate a uniform bit: $z_j\leftarrow\{-c_{\eps} \sqrt{m}, c_{\eps} \sqrt{m}\}$.}
 	\ENDIF
	\RETURN $\bz=\left(0, \ldots, 0, z_j, 0, \ldots, 0\right)\in \{-c_{\eps} \sqrt{m}, c_{\eps} \sqrt{m}\}^m$
        where $z_j$ is in the $j$-th position of $\bz$. (This
        output can be represented concisely by the pair $(j,
        z_j)$ using $\lceil\log m\rceil+1$ bits).\label{step:output_of_R}
	\end{algorithmic}
	\label{def:local-rand}
\end{algorithm}

\begin{thm}
$\R$ has the following properties:
\begin{enumerate}
\item $\R$ is $\eps$-$\ldp$ for every choice of the index $j$ \\ (that
  is, privacy depends only on the randomness in Step~\ref{step:bit-random}).
\item For every $\bx\in\hypc\cup\{\zeroB\}$, $\R(\bx)$ is an unbiased estimator of $\bx$. That is, $\E\left[\R(\bx)\right]=\bx$.  
\item $\R$ is computationally efficient (i.e., $\R$ runs in $O\left(m\right)$ time).
\end{enumerate}
\label{thm:prop-local-rand}
\end{thm}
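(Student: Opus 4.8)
The plan is to establish all three items by direct computation, conditioning throughout on the uniformly random index $j$ drawn in Step~\ref{step:gen-rand-index}; none of the three requires more than elementary manipulation, so I would organize the whole argument around that conditioning.

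\emph{Privacy.} I would fix the index $j$ and observe that, since every $\bx\in\hypc$ has $x_j\in\{-1/\sqrt m,\,1/\sqrt m\}$, the scaled value $c_{\eps} m x_j$ always lies in $\{-c_{\eps}\sqrt m,\,c_{\eps}\sqrt m\}$. Hence, whatever the input, the conditional law of $z_j$ given $j$ is supported on the \emph{same} two-point set $\{-c_{\eps}\sqrt m,\,c_{\eps}\sqrt m\}$, carrying masses $\bigl(\frac{e^{\eps}}{e^{\eps}+1},\frac{1}{e^{\eps}+1}\bigr)$ in some order when $\bx\neq\zeroB$, and $\bigl(\frac12,\frac12\bigr)$ when $\bx=\zeroB$. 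For any two of these distributions and any one of the two atoms, the likelihood ratio is at most $\max\{e^{\eps},\ \frac{e^{\eps}+1}{2},\ \frac{2e^{\eps}}{e^{\eps}+1}\}=e^{\eps}$, using only $e^{\eps}\geq1$. This gives, for every fixed $j$, the conditional-on-$j$ statement of item~1; and since the marginal distribution of $j$ is the same for all inputs, writing $\Pr[(j,z_j)\in\Sc]=\sum_j \Pr[J=j]\,\Pr[z_j\in\Sc_j\mid J=j]$ with $\Sc_j=\{z:(j,z)\in\Sc\}$ and applying the per-$j$ bound termwise shows $\R$ itself is $\eps$-$\ldp$.

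\emph{Unbiasedness.} Again conditioning on $j$: if $\bx\neq\zeroB$ then, by the choice $c_{\eps}=\frac{e^{\eps}+1}{e^{\eps}-1}$,
\[
\E[z_j\mid j]=c_{\eps} m x_j\Bigl(\tfrac{e^{\eps}}{e^{\eps}+1}-\tfrac{1}{e^{\eps}+1}\Bigr)=c_{\eps} m x_j\cdot\tfrac{e^{\eps}-1}{e^{\eps}+1}=m\,x_j .
\]
The output $\bz$ is $z_j$ in coordinate $j$ and $0$ elsewhere, i.e.\ $z_j\be_j$, so $\E[\bz\mid j]=m x_j\be_j$, and averaging over the uniform $j\in[m]$ gives $\E[\bz]=\sum_{j=1}^m\frac1m\,m x_j\be_j=\bx$. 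If $\bx=\zeroB$ then $z_j$ is symmetric about $0$, so $\E[z_j\mid j]=0$ and $\E[\bz]=\zeroB$.

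\emph{Efficiency and obstacles.} Item~3 is immediate: drawing $j$, reading the single bit $x_j$, flipping one biased coin, and emitting the output each cost $O(m)$ time (indeed $O(\log m)$ if one uses the concise representation $(j,z_j)$). There is no real obstacle here; the only point that merits a moment's care is checking \emph{both} directions of the likelihood ratio in the privacy step when a $\zeroB$ input is compared against a non-$\zeroB$ one, which is exactly what the elementary inequalities above are for.
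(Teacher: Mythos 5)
The paper states Theorem~\ref{thm:prop-local-rand} without including a proof, so there is nothing to compare against; but your argument is correct and is essentially the calculation the authors intended the reader to supply. Your conditioning on $j$ is exactly the right structure given how item~1 is phrased (``for every choice of the index $j$''), the likelihood-ratio casework covers all three input-pair types including the $\zeroB$-vs-non-$\zeroB$ comparison that is the only place one could slip, the identity $c_{\eps}\cdot\frac{e^{\eps}-1}{e^{\eps}+1}=1$ gives unbiasedness cleanly, and the efficiency observation matches the paper's $O(m)$ bound (with the sharper $O(\log m)$ remark for the concise $(j,z_j)$ representation being a correct side note consistent with the paper's own discussion after the theorem).
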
%\rbnote{add the proof}
%The proof of the above theorem follows from \cite{DJW} (see Section~4.3 therein).

As noted in Step~\ref{step:output_of_R} of the algorithm, we view the
output of $\R$ as a vector $\bz\in \re^m$ of the same length as the input
vector $\bx$. However, the output can be represented concisely by only
$\lceil\log m\rceil+1$ bits (required to describe the index $j$ and
$\z_j$). 

In some settings, we may compress this output to just 1 bit. This
comes from the fact that the privacy of $\R$ holds no matter how the index $j$
is chosen in Step~\ref{step:gen-rand-index}, so long as it is
independent of the input. (The randomness of $j$ \emph{is} important
for utility since it helps ensure that $\E\left[\R(\bx)\right]=\bx$.)
In particular, the randomness in the choice of $j$ may come from
outside the randomizer: it could be sent by the server, available as public
coins, or generated pseudorandomly from other information. In such
situations, the server receives $j$ through other channels and we may
represent the output using the single bit describing $z_j$.

\mypar{Johnson-Lindenstrauss Transform} Next, we go over the well-known Johnson-Lindenstrauss lemma that will be used to efficiently construct a private frequency oracle. This idea was originally used to provide an inefficient protocol for private estimation of heavy hitters in \cite{HKR10} (as opposed to providing just a private frequency oracle).

%Our construction protocol of a private frequency oracle in the next subsection is almost the same as the one in \cite{HKR10}. However, as noted earlier, we use such a protocol only to construct a frequency oracle and for this purpose, our protocol is computationally efficient.  Another difference from the previous constructions is the type of local randomization involved and hence, the resulting privacy guarantee. Here, we provide a pure $\eps$ differential privacy as opposed to the approximate notion of $(\eps, \delta)$-differential privacy considered in \cite{HKR10} and cite{MS05}.

\begin{thm}[Johnson-Lindenstrauss lemma]
Let $0<c<1$ and $d\in\mathbb{N}$. Let $\mathcal{U}$ be any set of $t$ points in $\re^{d}$ and let $m\geq \frac{8\log(t)}{c^2}$. There exists a linear map $\Phi:\re^d\rightarrow\re^m$ such that $\Phi$ is approximately an isometric embedding of $\mathcal{U}$ into $\re^m$. Namely, for all $\bx,~\by\in \mathcal{U}$, we have
\begin{align}
(1-c)\ltwo{\bx-\by}^2\leq\ltwo{\Phi(\bx &- \by)}^2\leq (1+c)\ltwo{\bx-\by}^2\nonumber\\
\vert \langle \Phi\bx, \Phi\by\rangle - \langle \bx, \by\rangle\vert&\leq O\left(c\left(\ltwo{\bx}^2+\ltwo{\by}^2\right)\right) \nonumber
\end{align}
Moreover, any random $m\times d$ matrix with entries drawn i.i.d. uniformly from $\{-\frac{1}{\sqrt{m}}, \frac{1}{\sqrt{m}}\}$ enjoys this property with probability at least $1-\beta$ when $m=O\left(\frac{\log(t)\log(1/\beta)}{c^2}\right)$. Note that in such case, this matrix does not depend on the points in $\mathcal{U}$ (it only depends on the size of $\mathcal{U}$).
\label{thm:JL}
\end{thm}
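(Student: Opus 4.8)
The plan is to deduce Theorem~\ref{thm:JL} from a single‑vector concentration statement — the \emph{distributional} Johnson--Lindenstrauss property — and then union‑bound over pairs of points. Concretely, let $\Phi$ be the random $m\times d$ matrix with entries $\Phi_{ij}=\sigma_{ij}/\sqrt m$, where the $\sigma_{ij}\in\{-1,+1\}$ are i.i.d.\ uniform. The claim is that for every \emph{fixed} $u\in\re^d$,
\begin{align}
\E\!\left[\ltwo{\Phi u}^2\right]=\ltwo{u}^2,\qquad \Pr\!\left[\,\big|\ltwo{\Phi u}^2-\ltwo{u}^2\big|>c\,\ltwo{u}^2\,\right]\le 2e^{-\Omega(c^2 m)}. \nonumber
\end{align}
The identity is immediate: the $i$-th coordinate of $\Phi u$ is $Y_i/\sqrt m$ with $Y_i=\sum_j\sigma_{ij}u_j$, so $\E[Y_i^2]=\ltwo{u}^2$ and $\ltwo{\Phi u}^2=\frac1m\sum_{i=1}^m Y_i^2$; by homogeneity we may take $\ltwo{u}=1$, so we must show $\frac1m\sum_i Y_i^2$ concentrates around $1$.

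For the tail bound I would run a moment‑generating‑function argument. Each $Y_i$ is a mean‑zero, variance‑one Rademacher sum, hence sub‑Gaussian, and the crucial estimate (Achlioptas' lemma) is $\E[e^{\lambda Y_i^2}]\le (1-2\lambda)^{-1/2}=\E[e^{\lambda G^2}]$ for $0\le\lambda<\tfrac12$ and $G$ a standard Gaussian; this is proved by expanding $e^{\lambda Y_i^2}=\sum_k \lambda^k Y_i^{2k}/k!$ and checking $\E[Y_i^{2k}]\le\E[G^{2k}]$ termwise. Since the $Y_i$ are independent, $\sum_i Y_i^2$ is MGF‑dominated by a $\chi^2_m$ variable, and the standard Chernoff/Bernstein computation for $\chi^2_m$ (optimizing over $\lambda$, which is what produces the constant in the bound $m\ge 8\log(t)/c^2$) yields $\Pr[|\frac1m\sum_i Y_i^2-1|>c]\le 2e^{-c^2 m/8}$ for $c\in(0,1)$. \emph{This concentration step is the main technical obstacle; everything else is bookkeeping.}

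Given the distributional property, both parts of the theorem follow. For the constructive, high‑probability claim, apply it with accuracy parameter $c$ to each vector $w=x-y$ with $x,y$ ranging over $\mathcal{U}\cup\{0\}$ (so $w$ also covers every $x$ and every $y$); there are $O(t^2)$ such vectors, and choosing $m=O(\log(t)\log(1/\beta)/c^2)$ makes the union of the failure events have probability at most $\beta$. On the complementary event the first inequality of the theorem holds verbatim for all $x,y\in\mathcal{U}$, and the inner‑product bound follows from the cosine rule $\langle x,y\rangle=\tfrac12(\ltwo{x}^2+\ltwo{y}^2-\ltwo{x-y}^2)$ (and the same identity after applying $\Phi$): the preserved norms of $x$, $y$, $x-y$ give $|\langle\Phi x,\Phi y\rangle-\langle x,y\rangle|\le\tfrac c2(\ltwo{x}^2+\ltwo{y}^2+\ltwo{x-y}^2)=O\!\left(c(\ltwo{x}^2+\ltwo{y}^2)\right)$, using $\ltwo{x-y}^2\le 2\ltwo{x}^2+2\ltwo{y}^2$. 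For the existential claim with $m\ge 8\log(t)/c^2$, the same union bound over the $\binom t2$ differences has total failure probability $<1$ (this is where the sharp constant is needed, most cleanly realized with Gaussian entries), so a suitable linear map $\Phi$ exists by the probabilistic method; to also cover the inner‑product inequality one includes the $O(t^2)$ differences of $\mathcal{U}\cup\{0\}$, which only affects the constant. Note that in either case the matrix is drawn with no knowledge of the points, only of $t=|\mathcal{U}|$.
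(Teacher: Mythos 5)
The paper does not actually prove Theorem~\ref{thm:JL}; it is stated in the ``Useful Tools'' subsection as a classical result and used as a black box, so there is no in-paper proof to compare against. Your write-up is the textbook argument --- single-vector concentration (distributional JL) via an MGF bound that dominates the Rademacher quadratic form by a $\chi^2$ variable, a union bound over $O(t^2)$ differences with $\zeroB$ adjoined so that individual norms are also preserved, and the polarization identity $\langle\bx,\by\rangle=\tfrac12(\ltwo{\bx}^2+\ltwo{\by}^2-\ltwo{\bx-\by}^2)$ to get the inner-product estimate --- and its structure is sound. Two minor points worth noting. First, you flag yourself that the exact exponent $e^{-c^2m/8}$ requires care: pinning down the ``8'' in $m\ge 8\log(t)/c^2$ from the Rademacher MGF route (rather than the Gaussian one) takes a sharper optimization over $\lambda$ than the sketch provides, and the precise constant also depends on the base of $\log$; the paper only relies on $m=O(\log(t)\log(1/\beta)/c^2)$, so nothing downstream is affected. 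Second, the probability bound you obtain from the union bound is actually $O(t^2)\cdot e^{-\Omega(c^2 m)}\le\beta$ when $m=O((\log t+\log(1/\beta))/c^2)$, which is stronger than (and hence implies) the paper's stated $m=O(\log(t)\log(1/\beta)/c^2)$.
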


\mypar{Basic tools from coding theory} Finally, we review some basics from coding theory that we will use in our efficient construction of private succinct histograms. For reasons that will become clear later, we will define a binary code of block length $m$ as a subset of $\hypc$ rather than $\{0, 1\}^m$.

\begin{defn}[A binary $(2^t, m, \zeta)$-code]
A binary $(2^t, m, \zeta)$-code is a pair of mappings $\left(\enc, \dec\right)$ where $\enc: \{1, ..., 2^t\}\rightarrow\hypc$ such that the set of the resulting vectors in $\hypc$, denoted by $\C$, satisfies the following constraint:
\begin{align}
&\min\limits_{\bx, \bx' \in\C}\ltwo{\bx-\bx'}\geq 2\sqrt{\zeta}\nonumber\\
\text{equivalently,  }\hspace{0.3cm}&\max\limits_{\bx, \bx' \in\C}\langle \bx, \bx'\rangle\leq 1-2\zeta\nonumber
\end{align}
and $\dec:\hypc\rightarrow \{1, ..., 2^t\}$ is some decoding rule that maps any given element in $\hypc$ to one of the codewords of $\C$.
\label{def:bin-code}
\end{defn}

The parameter $\zeta$ is known as the relative distance of the code. A binary $(2^t, m, \zeta)$-code can correct up to $\zeta/2$-fraction of errors. In other words, a binary $(2^t, m, \zeta)$-code has a decoder $\dec$ such that for any codeword $\bx\in\C$ and any erroneous version $\by\in\hypc$ of $\bx$ whose Hamming distance to $\bx$ is less than $m\zeta/2$, i.e.,
$$\sum_{j=1}^m\ind(\bx(j)\neq \by(j))<m\zeta/2$$
(or equivalently, $\langle \bx, \by\rangle > 1-\zeta$), we have $\dec(\by)=\bx$.

Moreover, for any $0<\zeta<1/2$, there is a construction of a binary $(2^t, O(t), \zeta)$-code with an efficient encoding and decoding algorithms. In fact, there are several constructions in coding theory literature that satisfy this property (for example, see \cite{venkat-thesis}).

%Note that $\sh$ is a stronger object than $\fo$, that is, the functionality of $\fo$ is included in $\hist$ which also has the extra feature of publishing the list of the \emph{heavy hitters} (most frequent items). Hence, the best (i.e., least) error one can achieve in the class of succinct histograms is an upper bound on the best error we can achieve in the class of frequency oracles.

%The histogram estimation algorithm run by the central entity (the server) that estimates the items' frequencies is denoted by $\hist$. $\hist$ takes the set of users' reports $(Z_1, ..., Z_n)\in\Z^n$ and returns a $d$-dimensional vector $\f\in\[0,1]^d$ of the estimated frequencies (i.e., normalized counts) of each item in $\V$.

%%% Local Variables:
%%% mode: latex
%%% TeX-master: "../heavyhitters"
%%% End:
%%%%%%%%%%%%%%%%%%%%%%%%%%%%%%%%%%%%%%%%%%%%%%%%%%%%%%%%
%%%%%%%%%%%%%%%%%%%%%%%%%%%%%%%%%%%%%%%%%%%%%%%%%%%%%%%%
%%%%%%%%%%%%%%%%%%%%%%%%%%%%%%%%%%%%%%%%%%%%%%%%%%%%%%%%
%%%%%%%%%%%%%%%%%%%STOC VERSION %%%%%%%%%%%%%%%%%%%%%%%%%%%%%%%%%%%%%
%%%%%%%%%%%%%%%%%%%%%%%%%%%%%%%%%%%%%%%%%%%%%%%%%%%%%%%%
%%%%%%%%%%%%%%%%%%%%%%%%%%%%%%%%%%%%%%%%%%%%%%%%%%%%%%%%

\else

\section{Building Blocks}\label{sec:upper}

\subsection{The Basic Randomizer}

We describe a basic randomizer (Algorithm~\ref{def:local-rand}) that
will be used in our constructions as a tool to ensure that each user
generates an $\eps$-differentially private report. This randomizer is a
more concise version of one of the randomizers in \citet{DuchiJW13}.

Our basic randomizer $\R$ takes as input either an $m$-bit string represented by one of the vertices of the hypercube $\hypc$, or a special symbol represented by the all-zero $m$-length vector $\zeroB$. The randomizer $\R$ picks a bit $x_j$ at random from the input string $\bx$ (where $j$ is the index of the chosen bit), then it randomizes and scales $x_j$ to generate a bit $z_j\in\{-c_{\eps} \sqrt{m}, c_{\eps} \sqrt{m}\}$ (for some fixed $c_{\eps}=O(1/\eps)$). Finally, $\R$ outputs the pair $(j, z_j)$.  As will become clear later in our constructions, the $m$-bit input of $\R$ will be a unique encoding of one of the items in $\V$ whereas the special symbol $\zeroB$ will serve notational purposes to describe a special situation in our constructions when a user sends no information about its item.

\begin{algorithm}[h]
	\caption{$\R$: $\eps$-Basic Randomizer}
	\begin{algorithmic}[1]
		\REQUIRE $m$-bit string $\bx\in\hypc\cup\{\zeroB\}$, and the privacy parameter $\eps$.
		\STATE Sample $j\leftarrow [m]$ uniformly at random. 
\label{step:gen-rand-index}
        \IF{$\bx\neq \zeroB$}
                {\STATE Randomize $j$-th bit $x_j$ of the input $\bx\in\hypc$ as follows: \label{step:bit-random}
			    \begin{align}
			      &z_j\hspace{-.1cm}=\left\{\begin{array}{cc}
                                                        c_{\eps} m x_j & \text{ w.p. } \frac{e^{\eps}}{e^{\eps}+1} \\
					           -c_{\eps} m x_j & \text{ w.p. } \frac{1}{e^{\eps}+1}
					                 \end{array}\right.\nonumber
			    \end{align} 
			where $c_{\eps}=\frac{e^{\eps}+1}{e^{\eps}-1}=O\left(\frac{1}{\eps}\right)$. }
        \ELSE
	   {\STATE  Generate a uniform bit: $z_j\leftarrow\{-c_{\eps} \sqrt{m}, c_{\eps} \sqrt{m}\}$.}
 	\ENDIF
	\RETURN $\bz=\left(0, \ldots, 0, z_j, 0, \ldots, 0\right)\in \{-c_{\eps} \sqrt{m}, c_{\eps} \sqrt{m}\}^m$
        where $z_j$ is in the $j$-th position of $\bz$. (This
        output can be represented concisely by the pair $(j,
        z_j)$ using $\lceil\log m\rceil+1$ bits).\label{step:output_of_R}
	\end{algorithmic}
	\label{def:local-rand}
\end{algorithm}

\break
\begin{thm}
$\R$ has the following properties:
\begin{enumerate}
\item $\R$ is $\eps$-$\ldp$, for every choice of the index $j$ (that
  is, privacy depends only on the randomness in Step~\ref{step:bit-random}).
\item For every $\bx\in\hypc\cup\{\zeroB\}$, $\R(\bx)$ is an unbiased estimator of $\bx$. That is, $\E\left[\R(\bx)\right]=\bx$.  
\item $\R$ is computationally efficient (i.e., $\R$ runs in $O\left(m\right)$ time).
\end{enumerate}
\label{thm:prop-local-rand}
\end{thm}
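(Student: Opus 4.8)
The plan is to verify each of the three properties in turn, since they are essentially independent. The bulk of the work is in property~(1), so I would dispatch properties~(2) and~(3) first and then do the privacy argument carefully.

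For property~(3), computational efficiency, the algorithm samples an index $j\in[m]$ (which costs $O(\log m)$ or $O(1)$ depending on the model, certainly $O(m)$), reads one bit of the input, flips a single biased coin, and writes down a length-$m$ output vector (or its concise $(j,z_j)$ representation). All of this is $O(m)$, so this is immediate.

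For property~(2), unbiasedness, I would condition on the choice of $j$ (which is uniform on $[m]$). In the case $\bx\neq\zeroB$: the $j$-th coordinate $x_j\in\{-1/\sqrt m,1/\sqrt m\}$, and conditioned on $j$ we output $c_\eps m x_j$ with probability $\frac{e^\eps}{e^\eps+1}$ and $-c_\eps m x_j$ with probability $\frac{1}{e^\eps+1}$, so the conditional expectation of the $j$-th coordinate of $\bz$ is $c_\eps m x_j\cdot\frac{e^\eps-1}{e^\eps+1} = c_\eps m x_j / c_\eps = m x_j$, and all other coordinates are $0$. Averaging over the uniform choice of $j\in[m]$ gives $\E[\bz] = \frac1m\sum_{j=1}^m m x_j \be_j = \bx$. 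In the case $\bx=\zeroB$: conditioned on $j$, $z_j$ is a symmetric $\pm c_\eps\sqrt m$ bit, so $\E[z_j]=0$ and $\E[\bz]=\zeroB=\bx$. Note this also uses that the scaling $c_\eps$ is chosen precisely so that $c_\eps\cdot\frac{e^\eps-1}{e^\eps+1}=1$; it's worth flagging that this is the reason for that particular constant.

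For property~(1), $\eps$-LDP for every fixed $j$: fix $j\in[m]$ and condition on it, so the only remaining randomness is the coin in Step~\ref{step:bit-random} (resp.\ Step~5 for $\zeroB$, which has no input-dependence at all). The output, given $j$, is supported on the two points $\{-c_\eps\sqrt m\,\be_j,\ c_\eps\sqrt m\,\be_j\}$ (here using $\sqrt m = m\cdot\frac1{\sqrt m}$, so $c_\eps m x_j = \pm c_\eps\sqrt m$). For any input $\bx$, each of these two outputs occurs with probability either $\frac{e^\eps}{e^\eps+1}$ or $\frac{1}{e^\eps+1}$ (and for $\bx=\zeroB$, each with probability $\frac12$, which lies between these two values). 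Hence for any two inputs $\bx,\bx'$ and either of the two possible outputs $z$, the ratio $\Pr[\R_j(\bx)=z]/\Pr[\R_j(\bx')=z]$ is at most $\frac{e^\eps/(e^\eps+1)}{1/(e^\eps+1)} = e^\eps$. Since the output space (given $j$) is discrete with two atoms, this pointwise likelihood-ratio bound immediately gives $\Pr[\R_j(\bx)\in\Sc]\le e^\eps\Pr[\R_j(\bx')\in\Sc]$ for every $\Sc$, i.e.\ pure $\eps$-LDP with $\delta=0$. Finally, since this holds for each fixed $j$, it holds for the mixture over $j$ as well (a convex combination of $e^\eps$-ratio-bounded channels is $e^\eps$-ratio-bounded), though the theorem only asserts the per-$j$ statement.

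The main obstacle, such as it is, is purely bookkeeping: one must be careful that the stated output alphabet $\{-c_\eps\sqrt m, c_\eps\sqrt m\}^m$ is consistent with the formula $z_j = \pm c_\eps m x_j$ when $x_j = \pm 1/\sqrt m$, and that the $\zeroB$ branch — whose output is input-independent — does not break the uniform likelihood-ratio bound (it doesn't, since $\frac12 \in [\frac1{e^\eps+1},\frac{e^\eps}{e^\eps+1}]$, but when comparing $\zeroB$ against a non-$\zeroB$ input one should note the ratio is still within $[e^{-\eps},e^\eps]$). There is no deep step here; this is a warm-up lemma establishing the properties of the building block used throughout the paper.
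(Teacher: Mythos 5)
Your proof is correct and takes the natural route; the paper actually states Theorem~\ref{thm:prop-local-rand} without a written proof, treating it as an easy warm-up lemma, so there is no separate argument in the paper to compare against. Your unbiasedness calculation (hinging on $c_\eps\cdot\frac{e^\eps-1}{e^\eps+1}=1$) and the two-atom likelihood-ratio bound for privacy conditioned on $j$ --- including the check that the $\zeroB$ branch's probability $\tfrac12$ lies in $[\tfrac{1}{e^\eps+1},\tfrac{e^\eps}{e^\eps+1}]$ so that comparing $\zeroB$ against any $\bx\in\hypc$ still gives a ratio in $[e^{-\eps},e^\eps]$ --- are exactly the details one would supply.
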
%\rbnote{add the proof}
%The proof of the above theorem follows from \cite{DJW} (see Section~4.3 therein).

As noted in Step~\ref{step:output_of_R} of the algorithm, we view the
output of $\R$ as a vector $\bz\in \re^m$ of the same length as the input
vector $\bx$. However, the output can be represented concisely by only
$\lceil\log m\rceil+1$ bits (required to describe the index $j$ and
$\z_j$). 

In some settings, we may compress this output to just 1 bit. This
comes from the fact that the privacy of $\R$ holds no matter how the index $j$
is chosen in Step~\ref{step:gen-rand-index}, so long as it is
independent of the input. (The randomness of $j$ \emph{is} important
for utility since it helps ensure that $\E\left[\R(\bx)\right]=\bx$.)
In particular, the randomness in the choice of $j$ may come from
outside the randomizer: it could be sent by the server, available as public
coins, or generated pseudorandomly from other information. In such
situations, the server receives $j$ through other channels and we may
represent the output using the single bit describing $z_j$.

\fi

\ifnum\full=1

\subsection{A Private Frequency Oracle Construction}\label{subsec:JL}

We give here an efficient construction of a private frequency oracle
based on Johnson-Lindenstrauss projections. Our construction follows
almost the same lines of the construction of \cite{HKR10}. Our version
differs in three respects. First, we use the construction only to provide a frequency oracle as opposed to identifying and estimating the frequency of heavy hitters. For that purpose, the construction is computationally efficient. The second difference is in the local randomization step at each user. Here, each user $i\in[n]$ uses an independent copy of the basic randomizer $\R_i$ given by Algorithm~\ref{def:local-rand} (as opposed to adding noise as in \cite{HKR10}). This gives us pure $\eps$-differential privacy guarantee (as opposed to $(\eps, \delta)$ in \cite{HKR10}). The third difference is that computations are carried out in much smaller dimension, namely $O(n)$ as opposed to $\Omega(n^4\log(d))$ in \cite{HKR10}.

Given our private frequency oracle, we give a simple efficient algorithm that, for any given input $v\in\V$, uses the frequency oracle to obtain a private estimate $\hf(v)$ of the frequency $f(v)$ of the item $v$.

Let $\Phi$ denote an $m\times d$ random projection matrix as in Theorem~\ref{thm:JL} with $m=\frac{\log(d+1)\log(2/\beta)}{\gamma^2}$ and $\gamma=\sqrt{\frac{\log(2d/\beta)}{\eps^2 n}}$ where $\beta>0$ is an input parameter to our algorithm that, affects the confidence level of our error guarantee (but not the privacy guarantee). In our protocol below, we assume the existence of a source of randomness $\genproj$ that on input integers $m, d>0$ generates an instance of $\Phi$. The output $\Phi$ of $\genproj$ is assumed to be public, that is, shared by all parties in the protocol (the users and the server).  We note that there are efficient constructions for $\genproj$ that generates a succinct description of $\Phi$ that is much less than $md$ when the columns of the projection matrix $\Phi$ are $k$-wise independent for $k<<d$. For our construction, it suffices for the columns of $\Phi$ to be $n$-wise independent (namely, it will still satisfy the conditions in Theorem~\ref{thm:JL}). Hence, the amount of randomness generated by $\genproj$ (describing $\Phi$) is $O(mn)$ in such case.

We denote the $i$-th standard basis vector in $\re^{d}$ by $\be_i$. The construction protocol of a private frequency is described below in Algorithm~\ref{Alg:FO}.

\begin{algorithm}[htb]
	\caption{$\prot$-$\fo$: $\eps$-$\ldp$ Frequency Oracle Protocol}
	\begin{algorithmic}[1]
		\REQUIRE Users' inputs $\{v_i\in\V: i\in[n]\}$, the privacy parameter $\eps$, and the confidence parameter $\beta>0$.
        \STATE  $\gamma\leftarrow \sqrt{\frac{\log(2d/\beta)}{\eps^2 n}}$.
        \STATE $m\leftarrow \frac{\log(d+1)\log(2/\beta)}{\gamma^2}$.
	\STATE $\Phi\leftarrow\genproj(m, d)$.
        \FOR{ Users $i=1$ to $n$}
            {\STATE User $i$ computes $\bz_i=\R_i\left(\Phi\be_{v_i}, \eps\right)$.}\label{Q-fo}
	   {\STATE User $i$ sends $\bz_i$ to the server.}
        \ENDFOR
	\STATE Server computes $\bar{\bz}=\frac{1}{n}\sum_{i=1}^n\bz_i.$\label{step:agg}
	\STATE $\fo\leftarrow \left(\Phi, \bbz\right).$
	\RETURN  $\fo$
	\end{algorithmic}
	\label{Alg:FO}
\end{algorithm}

Note that $m=O(n)$. Hence, the length of each user's report is $O(\log(m))=O(\log(n))$. Moreover, as noted above we only need $O(mn)=O(n^2)$ random bits to generate $\Phi$, thus, $\genproj$ runs in time $O(n^2)$. Also, each basic randomizer is efficient, i.e., runs in $O(m)=O(n)$ time (Part~3 of Theorem~\ref{thm:prop-local-rand}). Hence, one can easily verify that the construction is computationally efficient.

%Also, we note that the length of each user's report, $m$, is $O(n)$ which is better than sending a noisy version of $\be_{v_i}$, in which case, the report length would be $d$.

%%One can think of the columns of $\Phi$ as a binary code with block length $m$ for the items in $\V$. So,  $\Phi\be_{v}$ is implemented by retrieving the $v$-th column (i.e., codeword) of $\Phi$.\rbnote{fix this..also only suffices to have $n$-wise independent columns of $\Phi$, i.e., can be generated using $mn=O(n^2)$ random bits as opposed to $md$}

In Algorithm~\ref{Alg:use-fo} below, we show that, for any given fixed item $v\in\V$, $\fo$ can be used to efficiently give an estimate $\hf{v}$ of $f(v)$.

\begin{algorithm}[htb]
	\caption{$\A_{{\sf FO}}$: $\eps$-$\ldp$ Frequency Estimator Based on $\fo$}
	\begin{algorithmic}[1]
		\REQUIRE Data structure $\fo=\left(\Phi, \bbz\right)$ (the frequency oracle), an item $v\in\V$ whose frequency to be estimated.
       \RETURN $\hf(v)=\langle \Phi\be_{v}, \bar{\bz} \rangle$.
\end{algorithmic}
	\label{Alg:use-fo}
\end{algorithm}

%For now, we will not care much about compressing the report length since in Section~\ref{sec:full-protocol} we will discuss a generic approach that can be used to reduce the report length to $1$ bit per user at the expense of having public coins in the protocol.

The privacy and utility guarantees of the frequency oracle constructed by $\prot$-$\fo$ above are given in the following theorems.

\begin{thm}[Privacy of $\fo$]
The construction of the frequency oracle $\fo$ given by Algorithm~\ref{Alg:FO} is $\eps$-differentially private.
\label{thm:privacy-FO}
\end{thm}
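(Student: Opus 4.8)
The plan is to reduce this to Part~1 of Theorem~\ref{thm:prop-local-rand} by a simple pre-processing argument. In the local model, saying that the protocol $\prot$-$\fo$ is $\eps$-$\ldp$ means that the map $v_i\mapsto \bz_i$ carried out by each user $i$ is an $\eps$-$\ldp$ function of $v_i$; the server's aggregation step (computing $\bbz=\frac1n\sum_i\bz_i$ and releasing $\fo=(\Phi,\bbz)$) is pure post-processing of the users' reports and cannot weaken that guarantee. So it suffices to argue about a single user's local randomizer $v_i\mapsto\bz_i=\R_i(\Phi\be_{v_i},\eps)$.

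First I would note that for every item $v\in\V$ the vector $\Phi\be_v$ is precisely the $v$-th column of $\Phi$, and since $\genproj$ outputs a matrix with entries in $\{-\tfrac1{\sqrt m},\tfrac1{\sqrt m}\}$, this column lies in $\hypc$ (in particular it is never the special symbol $\zeroB$). Hence $\Phi\be_v$ is always a legitimate input to the basic randomizer $\R$, and user $i$'s computation factors as the \emph{deterministic} map $v_i\mapsto\Phi\be_{v_i}\in\hypc$ followed by an application of $\R_i$. Applying an $\eps$-$\ldp$ mechanism to a deterministic function of the input preserves $\eps$-$\ldp$: for any $v,v'\in\V$ and any measurable $\Sc$, Part~1 of Theorem~\ref{thm:prop-local-rand} gives $\Pr[\R_i(\Phi\be_v)\in\Sc]\le e^{\eps}\Pr[\R_i(\Phi\be_{v'})\in\Sc]$, which is exactly the statement that $v_i\mapsto\bz_i$ is $\eps$-$\ldp$. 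Since this holds for every user $i$, the protocol is $\eps$-differentially private, with privacy depending only on each user's local coins (the randomness used inside $\R_i$), as required by Definition~\ref{def:dp}.

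I do not expect a real obstacle here; the only two points that need care are (i) checking that the argument fed to $\R$ always lies in its promised domain $\hypc\cup\{\zeroB\}$, so that Theorem~\ref{thm:prop-local-rand} genuinely applies, and (ii) observing that making $\Phi$ public is harmless: the per-user bound above never uses anything about $\Phi$ beyond the fact that its columns lie in $\hypc$, so it holds pointwise over every realization of $\Phi$ and hence also when $\Phi$ is common knowledge shared by all parties.
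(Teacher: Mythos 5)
Your proof is correct and takes the same route as the paper, which simply invokes Part~1 of Theorem~\ref{thm:prop-local-rand}; you have merely spelled out the underlying observations (that $v\mapsto\Phi\be_v$ is a deterministic pre-processing into $\R$'s domain $\hypc$, that the server's aggregation is post-processing, and that the bound holds conditionally on any realization of the public $\Phi$). No gaps.
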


\begin{proof}
The proof follows directly from part~1 of Theorem~\ref{thm:prop-local-rand}.
\end{proof}

\begin{thm}[Error of $\fo$]
Let $\eps>0$. For any set of users items $\{v_1, ..., v_n\}$ and any $\beta>0$, the error due to $\fo$ constructed by Algorithm~\ref{Alg:FO} is bounded as
$$\err\left(\f; \fo\right)\triangleq\max\limits_{v\in\V}\vert\hf(v)-f(v)\vert=O\left(\frac{1}{\eps}\sqrt{\frac{\log(d/\beta)}{n}}\right)$$
with probability at least $1-\beta$ over the randomness of the projection $\Phi$ and the basic randomizers $\R_i, i\in[n]$, where $\hf(v)$ denote the output of Procedure $\A_{{\sf FO}}$ (given by Algorithm~\ref{Alg:use-fo} above) on an input $v$.
\label{thm:error-FO}
\end{thm}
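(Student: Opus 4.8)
The plan is to condition on the projection matrix $\Phi$ and decompose the error at each item $v$ into a \emph{bias} term, coming from the fact that $\Phi$ only approximately preserves inner products, and a \emph{fluctuation} term, coming from the randomness of the $n$ basic randomizers. First I would fix a ``good'' event for $\Phi$: by Theorem~\ref{thm:JL} applied to the point set $\mU=\{\be_1,\dots,\be_d\}$ (or $\{\be_1,\dots,\be_d,\zeroB\}$) with distortion parameter $c=\gamma$ and confidence $\beta/2$, the choice $m=\frac{\log(d+1)\log(2/\beta)}{\gamma^2}$ guarantees that, except with probability $\beta/2$, for all $v,v'\in\V$ we have $\bigl|\ip{\Phi\be_v}{\Phi\be_{v'}}-\ind(v=v')\bigr|\le O(\gamma)$, using $\ltwo{\be_v}^2=\ltwo{\be_{v'}}^2=1$. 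Condition on this event for the rest of the argument; note also that each column $\Phi\be_{v_i}$ lies in $\hypc$, so it is a legal input to $\R_i$.

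For the bias term, I would use Part~2 of Theorem~\ref{thm:prop-local-rand}: since $\R_i$ is unbiased, $\E[\bz_i\mid\Phi]=\Phi\be_{v_i}$, hence $\E[\bbz\mid\Phi]=\frac1n\sum_{i=1}^n\Phi\be_{v_i}$ and therefore
\[
\E[\hf(v)\mid\Phi]=\Bigl\langle\Phi\be_v,\ \tfrac1n\textstyle\sum_{i=1}^n\Phi\be_{v_i}\Bigr\rangle=\frac1n\sum_{i=1}^n\ip{\Phi\be_v}{\Phi\be_{v_i}}.
\]
By the inner-product bound from the good event, each summand is within $O(\gamma)$ of $\ind(v=v_i)$, so $\bigl|\E[\hf(v)\mid\Phi]-f(v)\bigr|\le O(\gamma)$ simultaneously for every $v\in\V$.

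For the fluctuation term, write $\hf(v)=\frac1n\sum_{i=1}^n Y_i$ with $Y_i\triangleq\ip{\Phi\be_v}{\bz_i}$, which are independent conditional on $\Phi$. The key observation is that $\bz_i$ is $1$-sparse with its single nonzero entry equal to $\pm c_\eps\sqrt m$, while every entry of $\Phi\be_v$ is $\pm\frac1{\sqrt m}$; hence $|Y_i|=c_\eps=O(1/\eps)$ deterministically. Hoeffding's inequality then gives, for each fixed $v$, $\Pr\bigl[\,|\hf(v)-\E[\hf(v)\mid\Phi]|\ge t\ \big|\ \Phi\bigr]\le 2\exp\!\bigl(-nt^2/(2c_\eps^2)\bigr)$; setting the right-hand side to $\beta/(2d)$ and union-bounding over all $d$ items yields $|\hf(v)-\E[\hf(v)\mid\Phi]|\le O\!\bigl(\frac1\eps\sqrt{\log(d/\beta)/n}\bigr)$ for all $v$, except with probability $\beta/2$. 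Combining the two terms via the triangle inequality, and noting $\gamma=\frac1\eps\sqrt{\log(2d/\beta)/n}$ is of the same order, gives the claimed bound with total failure probability $\le\beta$. The only mildly delicate points are bookkeeping the two $\beta/2$ budgets (JL failure vs.\ Hoeffding/union bound) and making the $|Y_i|\le c_\eps$ estimate precise from the exact $\pm1/\sqrt m$ structure of $\Phi$ — everything else is routine.
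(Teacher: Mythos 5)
Your proposal is correct and follows essentially the same argument as the paper. The paper splits the error via triangle inequality into (i) a fluctuation term $\max_v\bigl|\frac1n\sum_i\ip{\bz_i-\E[\bz_i]}{\Phi\be_v}\bigr|$, controlled by Hoeffding plus a union bound over $d$ items, and (ii) a JL distortion term $\max_v\bigl|\ip{\Phi(\frac1n\sum_i\be_{v_i})}{\Phi\be_v}-\ip{\frac1n\sum_i\be_{v_i}}{\be_v}\bigr|$, controlled by Theorem~\ref{thm:JL}. Your ``conditional bias vs.\ conditional fluctuation'' framing unrolls to exactly these two quantities (since $\E[\hf(v)\mid\Phi]=\ip{\Phi\be_v}{\frac1n\sum_i\Phi\be_{v_i}}$), and your Hoeffding step with $|Y_i|=c_\eps$ is precisely the paper's Claim~2.7. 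The only cosmetic difference is that you apply the JL inner-product preservation summand by summand to pairs of standard basis vectors, which is if anything a bit cleaner than the paper's direct invocation on the data-dependent average $\frac1n\sum_i\be_{v_i}$.
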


\begin{proof}
The proof relies on the good concentration behavior of the inner product between the aggregate measurement $\bbz$ and any vector $\by\in\hypc$. This is formalized in the following claim. 

\begin{claim}\label{conc-agg-inner-prod}
Let $\beta>0$. Let $\bx_1, \ldots, \bx_n, by\in\hypc$ and let $\bz_i=\R_i(\bx_i, \eps)$ where $\R_i, i\in[n]$ are independent copies of our basic randomizer (Algorithm~\ref{def:local-rand}). Then, with probability at least $1-\beta$, we have 
$$\left\vert\frac{1}{n}\sum_{i=1}^n\langle \bz_i-\bx_i, \by\rangle\right\vert=O\left(\frac{1}{\eps}\sqrt{\frac{\log(1/\beta)}{n}}\right).$$
\end{claim}

To prove this claim, we first observe that $\langle \bz_i, \by\rangle, ~i\in[n],$ is a sequence of independent random variables taking values in $\left\{-O\left(\frac{1}{\eps}\right), O\left(\frac{1}{\eps}\right)\right\}$. Also, from the second property of our basic randomizer (Theorem~\ref{thm:prop-local-rand}), we have $\E\left[\langle \bz_i, \by \rangle\right]=\langle \bx_i, \by \rangle$. Putting these together, then by Hoeffding's inequality our claim follows.

To prove the theorem, observe that the error of construction $\fo$ can be written as
\begin{align}
\max\limits_{v\in\V}\vert\hf(v)-f(v)\vert &=\max\limits_{v\in\V}\left\vert \langle \bar{\bz}, \Phi\be_{v} \rangle - \langle \frac{1}{n}\sum_{i=1}^n\be_{v_i}, \be_{v}\rangle\right\vert\nonumber\\
&=\max\limits_{v\in\V}\left\vert \langle \bar{\bz} -\E[\bbz], \Phi\be_{v} \rangle +\langle \Phi\left(\frac{1}{n}\sum_{i=1}^n\be_{v_i}\right), \Phi\be_{v} \rangle - \langle \frac{1}{n}\sum_{i=1}^n\be_{v_i}, \be_{v}\rangle\right\vert\label{ineq:FO-err-unbiased}\\
&\leq \max\limits_{v\in\V}\left\vert \frac{1}{n}\sum_{i=1}^n\langle \bz_i -\E[\bz_i], \Phi\be_{v} \rangle\right\vert + \max\limits_{v\in\V}\left\vert \langle \Phi\left(\frac{1}{n}\sum_{i=1}^n\be_{v_i}\right), \Phi\be_{v} \rangle - \langle \frac{1}{n}\sum_{i=1}^n\be_{v_i}, \be_{v}\rangle \right\vert\label{ineq:FO-err-triangle}
\end{align}
where (\ref{ineq:FO-err-unbiased}) follows from Part~2 of Theorem~\ref{thm:prop-local-rand}, and (\ref{ineq:FO-err-triangle}) follows from the linearity of the inner product and the triangle inequality.

Now, by Johnson-Lindenstrauss lemma (Theorem~\ref{thm:JL}), with probability at least $1-\beta/2$, the second term is bounded by $\gamma\cdot O(1)=O(\frac{1}{\eps}\sqrt{\frac{\log(d/\beta)}{n}})$. We next consider the first term. Fix $v\in\V$. By Claim~\ref{conc-agg-inner-prod} above, with probability at least $1-\frac{\beta}{2d}$, we have
$$\left\vert \frac{1}{n}\sum_{i=1}^n\langle \bz_i -\E[\bz_i], \Phi\be_{v} \rangle\right\vert\leq O\left(\frac{1}{\eps}\sqrt{\frac{\log(d/\beta)}{n}}\right).$$
Hence, by the union bound,  with probability at least $1-\beta/2$, the first term of (\ref{ineq:FO-err-triangle}) is bounded by $O\left(\frac{1}{\eps}\sqrt{\frac{\log(d/\beta)}{n}}\right)$. Thus, with probability at least $1-\beta$, the error of $\fo$ is bounded as in Theorem~\ref{thm:error-FO}.
\end{proof}
\mypar{Note} The above upper bound is shown to be tight by our result in Section~\ref{sec:lower}.

%%%%%%%%%%%%%%%%%%%%%%%%%%%%%%%%%%%%%%%%%%%%%%%%%%%
%%%%%%%%%%%%%%%%%%%%%%%%%%%%%%%%%%%%%%%%%%%%%%%%%%%
%%%%%%%%%%%%%%%%%%STOC VERSION %%%%%%%%%%%%%%%%%%%%%%%%%
%%%%%%%%%%%%%%%%%%%%%%%%%%%%%%%%%%%%%%%%%%%%%%%%%%%
\else

\subsection{A Private Frequency Oracle}\label{subsec:JL}

We give an efficient private frequency oracle that follows almost the same lines of the construction of \cite{HKR10}. Our protocol differs from \cite{HKR10} in three respects. First, we use the construction only to provide a frequency oracle as opposed to identifying and estimating the frequency of heavy hitters. For that purpose, the construction is computationally efficient. The second difference is in the local randomization step at each user. Here, each user $i\in[n]$ uses an independent copy of the basic randomizer $\R_i$ given by Algorithm~\ref{def:local-rand} (as opposed to adding noise as in \cite{HKR10}). This gives us pure $\eps$-differential privacy guarantee (as opposed to $(\eps, \delta)$ in \cite{HKR10}). The third difference is that computations are carried out in much smaller dimension, namely $O(n)$ as opposed to $\Omega(n^4\log(d))$ in \cite{HKR10}.

The description of our frequency oracle construction protocol is given in the full version \cite{fulvBS15}. We refer to such protocol as $\prot$-$\fo$. This protocol outputs a frequency oracle $\fo$ which is composed of two objects; a \emph{succinct} description of a binary matrix $\Phi$ whose columns represent encodings of each item in $\V$, and, an aggregate measurement of users reports $\bbz$. Given our private frequency oracle, there is a simple efficient algorithm $\A_{{\sf FO}}$ (see the full version \cite{fulvBS15} for a complete description) that, for any given input $v\in\V$, uses the frequency oracle to obtain a private estimate $\hf(v)$ of the frequency $f(v)$ of the item $v$ by simply computing the inner product between the encoding of $v$ under $\Phi$ and an aggregate measurement $\bbz$, which is the average of users' reports.

In protocol $\prot$-$\fo$, the length of encoding of an item under $\Phi$ is $O(n)$, the report length of each user is $O(\log(n))$, and the total amount of randomness required to generate $\Phi$ is $O(n^2)$ random bits. Also, each basic randomizer is efficient, i.e., runs in $O(n)$ steps (Part~3 of Theorem~\ref{thm:prop-local-rand}). Hence, the construction is computationally efficient. Also, generating an estimate $\hf(v)$ for a given item $v$ using $\A_{{\sf FO}}$ takes only $O(n)$ time.

%Also, we note that the length of each user's report, $m$, is $O(n)$ which is better than sending a noisy version of $\be_{v_i}$, in which case, the report length would be $d$.

%%One can think of the columns of $\Phi$ as a binary code with block length $m$ for the items in $\V$. So,  $\Phi\be_{v}$ is implemented by retrieving the $v$-th column (i.e., codeword) of $\Phi$.\rbnote{fix this..also only suffices to have $n$-wise independent columns of $\Phi$, i.e., can be generated using $mn=O(n^2)$ random bits as opposed to $md$}

%In Algorithm~\ref{Alg:use-fo} below, we show that, for any given fixed item $v\in\V$, $\fo_{\sf JL}$ can be used efficiently to give an estimate $\hf{v}$ of $f(v)$.
%
%\begin{algorithm}[htb]
%	\caption{$\A_{{\sf FO}_{\sf JL}}$: $\eps$-$\ldp$ Frequency Estimator Based on $\fo_{\sf JL}$}
%	\begin{algorithmic}[1]
%		\REQUIRE Data structure $\fo_{\sf JL}=\left(\Phi, \bbz\right)$ (the frequency oracle), an item $v\in\V$ whose frequency to be estimated.
%       \RETURN $\hf(v)=\langle \Phi\be_{v}, \bar{\bz} \rangle$.
%\end{algorithmic}
%	\label{Alg:use-fo}
%\end{algorithm}

%For now, we will not care much about compressing the report length since in Section~\ref{sec:full-protocol} we will discuss a generic approach that can be used to reduce the report length to $1$ bit per user at the expense of having public coins in the protocol.

The privacy and utility guarantees of the frequency oracle constructed by $\prot$-$\fo$ are given in the following theorems.

\begin{thm}%[Privacy of $\fo$]
The construction of the frequency oracle $\fo$ given by Protocol $\prot$-$\fo$ is $\eps$-differentially private.
\label{thm:privacy-FO}
\end{thm}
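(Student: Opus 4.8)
The plan is to reduce the statement to the per-user privacy guarantee of the basic randomizer (Part~1 of Theorem~\ref{thm:prop-local-rand}) together with the observation that everything the server does in $\prot$-$\fo$ is post-processing of the users' reports. First I would fix an arbitrary output $\Phi$ of $\genproj$ and argue conditionally on it: since $\Phi$ is generated independently of the users' inputs $\{v_i\}$ (and, in the public-coin model, is merely revealed to the server), a privacy bound that holds for every fixed $\Phi$ also holds for the protocol as a whole.

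Conditioned on $\Phi$, user $i$'s message is $\bz_i = \R_i(\Phi\be_{v_i}, \eps)$. The key point to check is that $\Phi\be_{v_i}$ is a legal input to the basic randomizer: it is exactly the $v_i$-th column of $\Phi$, and since the entries of $\Phi$ lie in $\{-1/\sqrt m, 1/\sqrt m\}$, this column lies in $\hypc$. Hence the map $v_i \mapsto \bz_i$ is the composition of the deterministic map $v \mapsto \Phi\be_v$ with $\R_i$. By Part~1 of Theorem~\ref{thm:prop-local-rand}, $\R_i$ is $\eps$-$\ldp$ for every choice of its internal index $j$, and pre-composing an $\eps$-$\ldp$ mechanism with a fixed deterministic function preserves $\eps$-$\ldp$: the likelihood-ratio bound of Definition~\ref{def:dp} for $\R_i(\Phi\be_v)$ versus $\R_i(\Phi\be_{v'})$ is just an instance of the bound already guaranteed for $\R_i$. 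So each user's local randomizer is $\eps$-$\ldp$.

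It then remains to observe that in the local model this per-user guarantee is precisely what is required. The server's view is $(\Phi, \bz_1, \dots, \bz_n)$, and the returned oracle $\fo = (\Phi, \bbz)$ with $\bbz = \tfrac1n\sum_i \bz_i$ is a deterministic function of that view, so it leaks nothing beyond the $\bz_i$'s; moreover user $i$'s datum $v_i$ influences only $\bz_i$, so changing $v_i$ to $v_i'$ changes the distribution of the server's view by at most the $e^\eps$ multiplicative factor coming from $\R_i$ alone. Undoing the conditioning on $\Phi$ finishes the proof.

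I do not expect a real obstacle here; the only points that require care are (i) verifying that $\Phi\be_{v_i} \in \hypc$ so that Theorem~\ref{thm:prop-local-rand} literally applies, and (ii) being explicit that the public randomness $\Phi$ is handled by conditioning, since privacy in the public-coin local model is required to hold given the coins. Everything else is immediate from the definition of local differential privacy and its closure under post-processing and deterministic pre-processing.
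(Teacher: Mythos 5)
Your proposal is correct and takes essentially the same route as the paper, which proves the theorem in one line by invoking Part~1 of Theorem~\ref{thm:prop-local-rand}. Your elaboration---conditioning on the public projection $\Phi$, checking that $\Phi\be_{v_i}\in\hypc$ so the basic randomizer applies, and noting closure of LDP under deterministic pre- and post-processing---just makes explicit the steps the paper leaves implicit.
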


%The proof follows directly from part~1 of Theorem~\ref{thm:prop-local-rand}.

\begin{thm}%[Error of $\fo$]
Let $\eps>0$. For any set of users items $\{v_1, ..., v_n\}$ and any $\beta>0$, the error due to $\fo$ constructed by Protocol $\prot$-$\fo$ is bounded as
$$\err\left(\f; \fo\right)\triangleq\max\limits_{v\in\V}\vert\hf(v)-f(v)\vert=O\left(\frac{1}{\eps}\sqrt{\frac{\log(d/\beta)}{n}}\right)$$
with probability at least $1-\beta$ over the randomness of the projection $\Phi$ and the basic randomizers $\R_i, i\in[n]$, where $\hf(v)$ denote the output of Procedure $\A_{{\sf FO}}$ on an input $v$.
\label{thm:error-FO}
\end{thm}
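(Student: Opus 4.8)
The plan is to bound the $\ell_\infty$ error of $\fo$ by splitting it, for each fixed item $v\in\V$, into a \emph{sampling/noise} term produced by the basic randomizers and a \emph{projection-distortion} term produced by the Johnson--Lindenstrauss map $\Phi$, and then to control each term with a concentration inequality followed by a union bound over the $d$ items. Write $\bar e=\tfrac1n\sum_{i=1}^n\be_{v_i}$, so that $f(v)=\langle\bar e,\be_v\rangle$, and recall that the estimator of Procedure $\A_{{\sf FO}}$ is $\hf(v)=\langle\Phi\be_v,\bbz\rangle$ with $\bbz=\tfrac1n\sum_i\bz_i$. Using that the $v_i$-th column $\Phi\be_{v_i}$ of $\Phi$ lies in $\hypc$ (so it is a legal input to $\R_i$) and that $\E[\bz_i]=\Phi\be_{v_i}$ by Part~2 of Theorem~\ref{thm:prop-local-rand}, one gets the decomposition $\hf(v)-f(v)=\langle\bbz-\E[\bbz],\Phi\be_v\rangle+\big(\langle\Phi\bar e,\Phi\be_v\rangle-\langle\bar e,\be_v\rangle\big)$, and by the triangle inequality the error at $v$ is at most the sum of the magnitudes of these two terms.

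For the noise term I would observe that the summands $\langle\bz_i,\Phi\be_v\rangle$ are independent across $i$ (each $\R_i$ uses fresh local randomness) and bounded: $\bz_i$ has a single nonzero coordinate of magnitude $c_\eps\sqrt m$ while every coordinate of $\Phi\be_v$ has magnitude $1/\sqrt m$, so $\langle\bz_i,\Phi\be_v\rangle\in\{-c_\eps,c_\eps\}$ with $c_\eps=O(1/\eps)$; moreover its mean is $\langle\Phi\be_{v_i},\Phi\be_v\rangle$. Hoeffding's inequality then gives $\big|\tfrac1n\sum_i\langle\bz_i-\E\bz_i,\Phi\be_v\rangle\big|=O\big(\tfrac1\eps\sqrt{\log(1/\beta')/n}\big)$ with probability $\ge1-\beta'$; choosing $\beta'=\beta/(2d)$ and union-bounding over the $d$ values of $v$ bounds this term by $O\big(\tfrac1\eps\sqrt{\log(d/\beta)/n}\big)$ simultaneously for all $v$, with probability $\ge1-\beta/2$.

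For the projection-distortion term I would apply Theorem~\ref{thm:JL} to the set $\mathcal U=\{\be_1,\dots,\be_d,\bar e\}$ of $t=d+1$ points, all of $\ell_2$ norm at most $1$, with distortion parameter $c=\gamma$. The inner-product guarantee of the lemma yields $\big|\langle\Phi\bar e,\Phi\be_v\rangle-\langle\bar e,\be_v\rangle\big|=O\big(\gamma(\ltwo{\bar e}^2+\ltwo{\be_v}^2)\big)=O(\gamma)$, simultaneously for all $v\in\V$, and this holds with probability $\ge1-\beta/2$ once $m=\Theta\big(\log(t)\log(1/\beta)/\gamma^2\big)$, which is exactly the setting of $m$ in protocol $\prot$-$\fo$. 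Substituting $\gamma=\sqrt{\log(2d/\beta)/(\eps^2 n)}$ turns this into $O\big(\tfrac1\eps\sqrt{\log(d/\beta)/n}\big)$, i.e.\ the same rate as the noise term. A final union bound over the two failure events of probability $\beta/2$ each gives the stated bound on $\err(\f;\fo)$ with probability $\ge1-\beta$.

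The main obstacle I expect is not any single estimate but the parameter bookkeeping: one must verify that a single choice of $m$ simultaneously (i) makes each column $\Phi\be_{v_i}$ a $\pm1/\sqrt m$ vector, i.e.\ a valid input in $\hypc$ to $\R_i$, (ii) is large enough for the JL embedding of $d+1$ points at distortion $\gamma$ and confidence $1-\beta/2$, and (iii) (for the efficiency claim) is only $O(n)$ — and that the two error contributions collapse to the \emph{same} target rate $\tfrac1\eps\sqrt{\log(d/\beta)/n}$ precisely because $\gamma$ was \emph{defined} to equal that rate. A secondary point to be careful about is the order of conditioning: $\Phi$ is public and data-independent, so conditioning on the high-probability event that $\Phi$ is a good embedding does not disturb the independence of the local randomizers needed for the Hoeffding step, and the two high-probability events can therefore be combined directly.
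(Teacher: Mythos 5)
Your proposal is correct and follows essentially the same route as the paper's proof: the same decomposition into a randomizer-noise term and a JL-distortion term, the same Hoeffding-plus-union-bound treatment of the noise term, and the same application of Theorem~\ref{thm:JL} to the $d+1$ points $\{\be_1,\dots,\be_d,\bar e\}$ with distortion $\gamma$. The only cosmetic difference is that the paper packages the Hoeffding step as a standalone claim (Claim~\ref{conc-agg-inner-prod}), whereas you inline it and spell out more explicitly why $\langle\bz_i,\Phi\be_v\rangle\in\{-c_\eps,c_\eps\}$.
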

The above upper bound is asymptotically tight (Section~\ref{sec:lower}). The theorem's proof relies on the concentration of the inner product between the aggregate $\bbz$ and the encoding of any given item under the encoding matrix $\Phi$. See the full version \cite{fulvBS15} for details.  

\fi

%%% Local Variables: 
%%% mode: latex
%%% TeX-master: "../heavyhitters"
%%% End: 

\ifnum\full=1
\section{Efficient Error-Optimal Construction of Private Succinct Histograms}\label{sec:eff-protocol}
\else
\section{Efficient Construction \\ with Optimal Error}
\fi
In this section, our goal is to construct an efficient private succinct histogram using the private frequency oracle given in the previous subsection together with other tools. In Section~\ref{subsec:promise-prob}, we first give a construction for a simpler problem that we call the unique heavy hitter problem. Then, in Section~\ref{subsec:eff-construction}, we give a reduction from this problem to the general problem.

\subsection{The Unique Heavy Hitter Problem}\label{subsec:promise-prob}

In the unique heavy hitter problem, we are given the promise that at least an $\eta$ fraction of the $n$
users hold the same item $v^*$ for some $v^*\in\V$  unknown to
the server (here $\eta$ is a parameter of the promise), and that all other users  hold a special symbol $\bot$, representing ``no item''.

Our goal is to obtain an efficient construction of a private succinct
histogram under this promise, for as small a value $\eta$ as possible. We will take
$\eta$ to be at least
$\frac{C}{\eps}\sqrt{\frac{\log(d)}{n}}$ for a universal constant
$C>0$. Our protocol is differentially private on all inputs. Under
the promise, with high probability, it outputs the correct $v^*$ together
with an estimate $\hf(v^*)$ of the frequency $f(v^*)$. % Viewed as a
% succinct histogram, the error is at most $\max\limits_{v\in\V}\vert
% \hf(v)-f(v)\vert=O\left(\frac{1}{\eps}\sqrt{\frac1{n}}\right)$
% under the promise. That is, once 
%   Note that under the promise, we get a better upper bound on the error that does not depend on $d$. In fact, for the class of promise problems, this bound is tight. To see this, observe that the class of the promise problems can be easily shown to be equivalent to the general setting when $d=2$.

The main idea of the protocol is to first encode user's items with an
error-correcting code and randomize the resulting codeword before
sending it to the server. The redundancy in the code allows the server
learn the unknown item $v^*$ from the noisy reports.

We require an efficiently encodable and
decodable binary $(d, m, \zeta)$-code (of $d$ codewords, block length $m$, and relative distance $\zeta$) where $m=O(\log(d))$ with constant rate (so that
$m=O(\log(d))$) and constant relative minimum distance
$\zeta\in(0,1/2)$, say $\zeta=1/4$. (We do not require the rate
$\frac{\log{d}}{m}$ or relative distance $\zeta$ to be optimal;
these quantities will affect the constants in the error of our
construction but not the asymptotic behavior.) There are several known
constructions of such codes in the literature (see
\cite{venkat-thesis} for
examples). % for such a code with computationally efficient encoder and decoder (running in time $\poly(\log(d))$)
Fix one such code, denoted $\code$, with associated encoder $\bc$ and
decoder $\dec$. The code is part of the protocol and so is known to
all parties. For convenience, we represent codewords as
points in the unit-radius hypercube $\hypc$. 

Each user $i$  first encodes its item $v_i$ to obtain
$\bx_i=\bc(v_i)\in\hypc$, then runs the basic randomizer $\R_i$ (given by
Algorithm~\ref{def:local-rand}) on $\bx_i$ to obtain the 
report $\bz_i$. Users that have no item, i.e., users with input
$\bot$,  feed the zero vector $\bx_i=\zeroB$ to the basic randomizer.

The server aggregates the reports by computing
$\bbz=\frac{1}{n}\sum_{i=1}^n\bz_i$, and then decodes $\bbz$
to obtain the encoding $\bx$ of $v^*$. One may not be able to feed $\bbz$
directly to the decoding algorithm $\dec$ of $\code$ since  $\bbz$
will not, in general, be a vertex of the hypercube $\hypc$. Instead, the
server first 
rounds the aggregated signal $\bbz$ to the nearest point $\by$ in the
hypercube before running $\dec$. We argue that the combination of
noise from randomization and the rounding step produces a vector $\by$
that is sufficiently close to $\bx$ with high probability. 

%The analysis is complicated by the fact that the randomizer does not add
%noise independently to each coordinate; we circumvent this by using
%McDiarmid's inequality \cite{mcdiarmid} to bound the norm of $\bbz$.

% \mypar{Note} In our construction protocol $\prot^{\sf PP}$-$\sh_{\sf PP}$ (see Algorithm~\ref{Alg:PP} below), the server computes an aggregate measurement of users' reports, namely, it computes $\bar{\bz}=\frac{1}{n}\sum_{i=1}^n\bz_i$ (this step is actually an essential feature in all of our constructions which allows us to get tight error guarantees). However, in general, one may not be able to feed $\bbz$ directly to the decoding algorithm $\dec$ of $\code$ since such $\bbz$ may not be in $\hypc$. In general, an efficient decoder of a binary code may not readily accommodate soft decoding for non-binary inputs. Thus, we precede the decoding step by a rounding step, denoted by $\btr$, that transforms $\bbz$ into a vector $\by\in\hypc$ such that the decoder, given $\by$ as an input, still outputs the correct value with high probability. The description of the procedure $\btr$ is given after our construction.

Algorithm~\ref{Alg:PP} precisely describes our construction for the promise problem.  The protocol is computationally efficient, i.e., the total computational cost is $\poly(\log(d), n)$ since $\code=(\bc, \dec)$ runs in time $\poly(\log(d))$ and each basic randomizer $\R_i$ runs in time $O(\log(d))$. In fact, the computational cost at each user does not depend on $n$. Also, we note that the users' reports are succinct, namely, the report length is $O\left(\log\left(\log(d)\right)\right)$ bits.

\begin{algorithm}[htb]
	\caption{$\prot^{\sf PP}$-$\sh_{\sf PP}$: $\eps$-$\ldp$ Succinct Histogram Protocol under the Promise}
	\begin{algorithmic}[1]
		\REQUIRE Users' inputs $\{v_i\in\V\cup\{\bot\}: i\in[n]\}$, the privacy parameter $\eps$, and the confidence parameter $\beta>0$.
        \FOR{ Users $i=1$ to $n$}
            {\STATE If $v_i\neq \bot$, then user $i$ encodes its item:
              $\bx_i=\bc(v_i)
%\triangleq\enc(v_i)
              $. Else, user $i$ sets $\bx_i=\zeroB$.}\label{Q-pp-1}
	   {\STATE User $i$ computes its private report: $\bz_i=\R_i\left(\bx_i, \eps\right)$.}\label{Q-pp-2}
	  {\STATE User $i$ sends $\bz_i$ to the server.}
        \ENDFOR
	\STATE Server computes $\bar{\bz}=\frac{1}{n}\sum_{i=1}^n\bz_i$.
{\STATE \label{step:round}
Server computes $\by$ by rounding $\bbz$ to $\hypc$. That is,
  for each $j=1,...,m$, \\
	 $y_j=\begin{cases}
          \frac{1}{\sqrt{m}} & \text{ if } \bar{z}_j\geq 0 \,, \text{ and}\\
          -\frac{1}{\sqrt{m}} & \text{ otherwise, } 
        \end{cases}$\qquad 
	where $\bar{z}_j$ denotes the $j$-th entry of $\bbz$.}
         % \STATE Server transforms $\bar{\bz}$ into a vector $\by\in\hypc$: $\by=\btr\left(\frac{\bar{\bz}}{\ltwo{\bbz}}\right)$.
	\STATE Server decodes $\by$ into an estimate for the common
        item $\hat{v}=\dec(\by)$ \\ and computes a frequency estimate $\hf(\hat{v})=\langle \bc(\hat{v}), \bbz \rangle$.
	\RETURN $\left(\hat{v},~\hf(\hat{v})\right)$.
	\end{algorithmic}
	\label{Alg:PP}
\end{algorithm}

%\begin{algorithm}[htb]
%	\caption{$\btr$}
%	\begin{algorithmic}[1]
%		\REQUIRE  A vector $\bz\in\S_{m}$ where $\S_{m}=\{\bw\in\re^m: \ltwo{\bw}=1\}$.
%        \FOR{ $j=1$ to $m$}
%            { \STATE $\hat{\bz} \gets$ projection of $\bz$ onto $[-\frac{1}{\sqrt{m}}, \frac{1}{\sqrt{m}}]^m$. \\ (Specifically, for each coordinate $j$, let $\hat z_j$ be the nearest number to $ z_j$ in $[-\frac{1}{\sqrt{m}}, \frac{1}{\sqrt{m}}]$.)
%\STATE Generate a bit $y_j\in\{-\frac{1}{\sqrt{m}}, \frac{1}{\sqrt{m}}\}$ distributed as follows:
%	 $$y_j\leftarrow\left\{\begin{array}{cc}
%          \frac{1}{\sqrt{m}} & \text{ w.p. } \frac{1}{2}(1+\sqrt{m}\hat z_j) \\
%          -\frac{1}{\sqrt{m}} & \text{ w.p. } \frac{1}{2}(1-\sqrt{m}\hat z_j)
%        \end{array}\right.$$
%	where $\hat z_j$ denote the $j$-th entry of $\hat{\bz}$ and the choices above are independent for $j=1, ..., m$.}
%        \ENDFOR
%        \STATE Output $\by=(y_1, ..., y_m)$.
%	\end{algorithmic}
%	\label{Alg:btr}
%\end{algorithm}
% \begin{algorithm}[htb]
% 	\caption{$\btr$}
% 	\begin{algorithmic}[1]
% 		\REQUIRE  A vector $\bz\in\S_{m}$ where $\S_{m}=\{\bw\in\re^m: \ltwo{\bw}=1\}$.
%         \FOR{ $j=1$ to $m$}
% {\STATE Set 
% 	 $$y_j\leftarrow\left\{\begin{array}{cc}
%           \frac{1}{\sqrt{m}} & \text{ if } z_j\geq 0 \\
%           -\frac{1}{\sqrt{m}} & \text{ otherwise } 
%         \end{array}\right.$$
% 	where $z_j$ denote the $j$-th entry of $\bz$.}
%         \ENDFOR
%         \STATE Output $\by=(y_1, ..., y_m)$.
% 	\end{algorithmic}
% 	\label{Alg:btr}
% \end{algorithm}

\begin{thm}[Privacy of $\sh_{\sf PP}$]
The construction of the succinct histogram~$\sh_{\sf PP}$ given by Algorithm~\ref{Alg:PP} is $\eps$-differentially private.
\label{thm:privacy-sh-promise}
\end{thm}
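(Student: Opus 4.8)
The plan is to reduce the statement to Part~1 of Theorem~\ref{thm:prop-local-rand} using the standard closure of differential privacy under pre- and post-processing. The only place in Algorithm~\ref{Alg:PP} where a user's private data $v_i$ enters the computation is Steps~\ref{Q-pp-1}--\ref{Q-pp-2}: user $i$ forms $\bx_i = \bc(v_i)$ (or $\bx_i = \zeroB$ when $v_i = \bot$) and then releases $\bz_i = \R_i(\bx_i,\eps)$. Everything the server does afterwards --- averaging to get $\bbz$, rounding coordinate-wise to get $\by$, decoding $\hat{v}=\dec(\by)$, and computing $\hf(\hat{v}) = \langle \bc(\hat{v}),\bbz\rangle$ --- is a function of the reports $(\bz_1,\dots,\bz_n)$ alone, and the reports of users $j \neq i$ are generated independently of $v_i$.

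First I would observe that the map $v \mapsto \bc(v)$, extended by the convention $\bot \mapsto \zeroB$, is a fixed deterministic function from $\V \cup \{\bot\}$ into $\hypc \cup \{\zeroB\}$, which is exactly the input domain of the basic randomizer $\R$. Hence the effective local randomizer of user $i$ is the composition $\Q_i = \R_i \circ \bc$. Fixing the index $j$ sampled in Step~\ref{step:gen-rand-index} (which may even be supplied by public coins), Part~1 of Theorem~\ref{thm:prop-local-rand} gives that for all $\bx,\bx' \in \hypc \cup \{\zeroB\}$ and all measurable $\Sc$, $\Pr[\R_i(\bx) \in \Sc] \leq e^\eps \Pr[\R_i(\bx') \in \Sc]$; integrating over the distribution of $j$ preserves this inequality. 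Specializing to $\bx = \bc(v_i)$ and $\bx' = \bc(v_i')$ for arbitrary $v_i, v_i' \in \V \cup \{\bot\}$ shows that $\Q_i$ is $\eps$-$\ldp$ in the sense of Definition~\ref{def:dp} with $\delta = 0$.

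I would then finish by invoking closure of $\eps$-DP under post-processing and under adjoining independent randomness. The full transcript seen by the server is $(\Q_1(v_1),\dots,\Q_n(v_n))$; as a function of a single user $i$'s input $v_i$, it is $\Q_i(v_i)$ together with data (the other reports) independent of $v_i$, hence $\eps$-$\ldp$ in $v_i$. The returned pair $(\hat{v},\hf(\hat{v}))$ is a deterministic function of this transcript, so it is $\eps$-$\ldp$ in $v_i$ as well. Since this holds for every $i$, the protocol is $\eps$-differentially private in the local model.

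I do not anticipate a genuine obstacle: the content of the theorem is carried entirely by Theorem~\ref{thm:prop-local-rand}(1), and the rest is bookkeeping to confirm that (i) encoding is a legitimate deterministic pre-processing step landing in $\R$'s domain (including the $\bot \mapsto \zeroB$ convention), and (ii) the server's rounding/decoding/estimation stage touches no private data. The one subtlety worth stating explicitly is that privacy must not rely on the secrecy of the sampled coordinate $j$ --- this is exactly what the phrase ``for every choice of the index $j$'' in Theorem~\ref{thm:prop-local-rand} buys us, letting us treat $j$ as public or server-chosen without loss.
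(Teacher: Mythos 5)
Your proposal is correct and follows the same route as the paper: the paper simply states that privacy follows directly from Part~1 of Theorem~\ref{thm:prop-local-rand}, and your write-up supplies the supporting bookkeeping (encoding as deterministic pre-processing into $\R$'s domain, the $\bot\mapsto\zeroB$ convention, server computations as post-processing, and the treatment of the publicly choosable index $j$) that the paper leaves implicit.
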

%\ifnum\full=1
\begin{proof}
Privacy follows directly from the $\eps$-differential privacy of the basic randomizers $\R_i, i\in[n]$ (Part~1 of Theorem~\ref{thm:prop-local-rand}).
\end{proof}
%\fi

To analyze utility, we first isolate the guarantee provided by the rounding step. Let $\S_m=\{\bw\in\re^m:\ltwo{\bw}=1\}$ denote the $m$-dimensional unit sphere.

\begin{lem}
Let $\bz\in\S_{m}$ be such that there is a codeword of $\code$,
$\bx\in\C$, with $\langle \bz, \bx\rangle > 1-\zeta/4$. Let $\by$ be
vector in the hypercube $\hypc$ obtained by rounding each entry $z_j$ of
$\bz$ to $sign(z_j)/\sqrt{m}$. Then the Hamming distance between  $\by$ and $\bx$ is less than $m\zeta/2$, i.e., $\sum_{j=1}^m \ind(y_j\neq x_j)<m\zeta/2$.
\label{lem:btr}
\end{lem}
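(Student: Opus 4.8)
The plan is to translate the hypothesis $\langle \bz, \bx \rangle > 1 - \zeta/4$ into a bound on the number of coordinates where the sign of $z_j$ disagrees with $x_j$, since those are exactly the coordinates where the rounded vector $\by$ differs from $\bx$. Write $\bx \in \C \subseteq \hypc$, so each $x_j = \pm 1/\sqrt{m}$, and similarly each $y_j = \operatorname{sign}(z_j)/\sqrt{m}$. Let $B = \{ j : y_j \neq x_j \} = \{ j : \operatorname{sign}(z_j) \neq \operatorname{sign}(x_j) \}$ be the set of "bad" coordinates; I want to show $|B| < m\zeta/2$. The key observation is that on a bad coordinate $j$, the product $z_j x_j \le 0$, so $z_j x_j = -|z_j|/\sqrt{m} \le 0$, whereas on a good coordinate $z_j x_j = |z_j|/\sqrt{m} \ge 0$.

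**The core estimate.**

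First I would write $\langle \bz, \bx \rangle = \sum_{j=1}^m z_j x_j = \sum_{j \notin B} |z_j|/\sqrt{m} - \sum_{j \in B} |z_j|/\sqrt{m}$. To get an upper bound on this in terms of $|B|$, I need a lower bound on $\sum_{j\in B} |z_j|/\sqrt{m}$, or rather a way to force the inner product down when $|B|$ is large. The cleanest route is Cauchy--Schwarz combined with $\ltwo{\bz} = 1$: compare $\bz$ against $\bx$ directly. Consider $\ltwo{\bz - \bx}^2 = \ltwo{\bz}^2 + \ltwo{\bx}^2 - 2\langle \bz,\bx\rangle = 2 - 2\langle \bz, \bx\rangle < \zeta/2$. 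On the other hand, $\ltwo{\bz-\bx}^2 = \sum_j (z_j - x_j)^2 \ge \sum_{j \in B}(z_j - x_j)^2$, and on each bad coordinate $z_j$ and $x_j$ have opposite signs, so $|z_j - x_j| \ge |x_j| = 1/\sqrt{m}$, giving $(z_j - x_j)^2 \ge 1/m$. Hence $\zeta/2 > \ltwo{\bz-\bx}^2 \ge |B|/m$, i.e. $|B| < m\zeta/2$, which is exactly the claim. (A small subtlety: if $z_j = 0$ on some coordinate, the sign is a tie; one should fix a convention, and the statement of the rounding step in Algorithm~\ref{Alg:PP} uses $\bar z_j \ge 0 \mapsto +$, so I would adopt $\operatorname{sign}(0) = +$ and note that in the tie case $z_j - x_j$ is either $0$ or $-2/\sqrt m$ depending on $x_j$; when it contributes to $B$ we still have $|z_j - x_j| = 2/\sqrt m \ge 1/\sqrt m$, so the bound is unaffected.)

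**Where the difficulty lies.**

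The argument above is essentially a one-line computation, so there is no real obstacle — the only thing to be careful about is bookkeeping the sign convention at zero coordinates and making sure the inequality stays strict (it does, since $\langle \bz, \bx\rangle > 1 - \zeta/4$ is strict, so $\ltwo{\bz-\bx}^2 < \zeta/2$ strictly). I would present it via the $\ltwo{\bz - \bx}^2$ identity rather than manipulating $\sum_{j \in B}|z_j|$ directly, because it avoids needing any lower bound on the magnitudes $|z_j|$ on bad coordinates, which could be tiny. The connection back to decoding is then immediate: by Definition~\ref{def:bin-code} and the discussion following it, a Hamming distance of less than $m\zeta/2$ between $\by$ and the codeword $\bx$ guarantees $\dec(\by) = \bx$, which is what the utility analysis of $\prot^{\sf PP}$-$\sh_{\sf PP}$ will need in the next step.
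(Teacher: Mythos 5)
Your proof takes the same route as the paper's: use $\ltwo{\bz}=\ltwo{\bx}=1$ to rewrite $\ltwo{\bz-\bx}^2 = 2 - 2\langle\bz,\bx\rangle < \zeta/2$, then observe that each coordinate where $\by$ and $\bx$ disagree contributes at least $1/m$ to $\ltwo{\bz-\bx}^2$, so there are fewer than $m\zeta/2$ of them. Your version is slightly more careful about strictness and the $\operatorname{sign}(0)$ convention, but the core argument is identical.
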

\begin{proof}
% Let $\hbz$ denote the projection of $\bz$ onto the hypercube $\hypc$, that is, for each coordinate $j$, let $\hat z_j$ be the nearer number to $ z_j$ in $\{-\frac{1}{\sqrt{m}}, \frac{1}{\sqrt{m}}\}$. Note that 
% $y_j = \begin{cases}
%           \frac{1}{\sqrt{m}} & \text{ if } \hat z_j\geq 0 \\
%           -\frac{1}{\sqrt{m}} & \text{ otherwise } 
%         \end{cases},$ as in Algorithm \ref{Alg:PP}.
%By hypothesis, $\ltwo{\bz-\bx}^2 < \zeta/2$.  Projecting $\bz$ onto the hypercube only reduces the distance to $\bx$, so $\ltwo{\hbz-\bx}^2 < \zeta/2$. 
% That is, 
% $\sum_{j=1}^m\vert\hat z_j - x_j\vert^2 < \zeta/2\label{Rnd-proof-L2-dist}
% $.
Since $\bz$ and $\bx$ are unit vectors, the distance $\ltwo{\bz-\bx}$
satisfies 
$$\ltwo{\bz-\bx}^2 = \ltwo{\bz}^2+\ltwo{\bx}^2 -2\langle\bz,\bx\rangle \leq \zeta/2.$$ 
The vectors $\bx$ and $\by$ disagree in coordinate $j$ only if $\vert z_j -
x_j\vert\geq\frac{1}{\sqrt{m}}$. There can be at most
$m\zeta/2$ such coordinates, since each contributes at least $\frac 1
m$ to $\ltwo{\bz-\bx}^2$.
% Let $\J=\big\{j\in [m]: ~\vert\hat z_j -
% x_j\vert\geq\frac{1}{\sqrt{m}}\big\}$. The  $\vert\J\vert < m\zeta/2$. Note also that for all $j\in [m]\setminus\J$, we have $y_j=x_j$. 
Thus, the Hamming distance between $\by$ and $\bx$ is $\sum_{j=1}^m \ind(y_j\neq x_j)<m\zeta/2$ completing the proof.
\end{proof}

\begin{thm}[Error of $\sh_{\sf PP}$ under the promise]
Let $\eps>0$. Suppose that the conditions in the above promise are true for some common item $v^*\in\V$. For any $\beta>0$, there is a setting of $\eta=O\left(\frac{1}{\eps}\sqrt{\frac{\log(d)\log\left(1/\beta\right)}{n}}\right)$ in the promise such that, with probability at least $1-\beta$, \mbox{Protocol}~$\prot^{\sf PP}$-$\sh_{\sf PP}$ publishes the right item $v^*$ and the frequency estimation error is bounded by
%$\err\left(\f; \sh_{\sf PP}\right)\triangleq
$$\max\limits_{v\in\V}\vert\hf(v)-f(v)\vert=O\left(\frac{1}{\eps}\sqrt{\frac{\log(1/\beta)}{n}}\right)\,.$$
\label{thm:error-sh-promise}
\end{thm}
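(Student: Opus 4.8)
The plan is to combine two facts: (i) the aggregate $\bbz$ is, in expectation, close to the scaled codeword of $v^*$; and (ii) the deviation of $\bbz$ from its expectation is small in the relevant inner-product sense, using Claim~\ref{conc-agg-inner-prod} (Hoeffding over the independent basic randomizers). First I would compute $\E[\bbz]$. Since each user with item $v_i \ne \bot$ contributes $\E[\bz_i] = \bx_i = \bc(v_i)$ and each $\bot$-user contributes $\E[\bz_i] = \zeroB$, under the promise (an $\eta$-fraction hold $v^*$, the rest hold $\bot$) we get $\E[\bbz] = \eta\, \bc(v^*) =: \eta\,\bx^*$, which is a vector of Euclidean norm $\eta$ pointing in the direction of the unit codeword $\bx^*$. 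Intuitively, once $\eta$ is large enough compared to the fluctuation scale $\frac{1}{\eps}\sqrt{1/n}$, rounding $\bbz$ coordinatewise recovers $\bx^*$.

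The technical heart is to turn this into a statement about $\langle \bbz/\ltwo{\bbz}, \bx^*\rangle$ (or, more directly, the Hamming distance of the rounded $\by$ to $\bx^*$) so that Lemma~\ref{lem:btr} applies and forces $\dec(\by) = v^*$. I would argue as follows. Write $\bbz = \eta\,\bx^* + \be$ where $\be = \bbz - \E[\bbz] = \frac1n\sum_i (\bz_i - \E[\bz_i])$. A coordinate $j$ where the rounded $\by$ disagrees with $\bx^*$ must have $|\bar z_j - \eta x^*_j| \ge \eta/\sqrt m$ (since $x^*_j = \pm 1/\sqrt m$ and the sign flips), i.e. $|e_j| \ge \eta/\sqrt m$. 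So if $N$ denotes the number of disagreement coordinates, then $\ltwo{\be}^2 \ge N\eta^2/m$, giving $N \le m\,\ltwo{\be}^2/\eta^2$. Thus it suffices to show $\ltwo{\be}^2 < \zeta\eta^2/2$ with probability $\ge 1-\beta/2$; then $N < m\zeta/2$ and the code's decoder returns $v^*$. To bound $\ltwo{\be}^2 = \sum_{j=1}^m e_j^2$ I would bound each $|e_j|$ via a Chernoff/Hoeffding bound: $e_j = \frac1n\sum_i (z_{i,j} - \E[z_{i,j}])$ where each $z_{i,j} \in \{-c_\eps\sqrt m/n,\ 0,\ c_\eps\sqrt m/n\}$ scaled appropriately (nonzero only when user $i$'s sampled index equals $j$), so $|z_{i,j}|\le c_\eps\sqrt m$, and Hoeffding gives $|e_j| = O\!\big(\frac{\sqrt m}{\eps}\sqrt{\log(m/\beta)/n}\big)$ with probability $\ge 1 - \beta/(2m)$; union-bounding over $j$ and summing gives $\ltwo{\be}^2 = O\!\big(\frac{m^2}{\eps^2}\cdot\frac{\log(m/\beta)}{n}\big)$. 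Actually a cleaner route avoids the coordinatewise union bound: one shows $\langle \bbz, \bx^*\rangle \ge \eta - |\langle \be, \bx^*\rangle|$ and $\langle \be, \bx^*\rangle$ is itself a sum of bounded independent variables (apply Claim~\ref{conc-agg-inner-prod} with $\by = \bx^*$, rescaling), but to invoke Lemma~\ref{lem:btr} one needs control on $\ltwo{\bbz}$ too, so I would normalize: set $\bz' = \bbz/\ltwo{\bbz}$ and show $\langle \bz', \bx^*\rangle > 1 - \zeta/4$, which by Lemma~\ref{lem:btr} (applied to the rounding of $\bz'$, noting rounding is scale-invariant so it equals $\by$) yields Hamming distance $< m\zeta/2$. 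Either way, choosing $\eta = \frac{C}{\eps}\sqrt{\frac{\log(d)\log(1/\beta)}{n}}$ with $C$ a large enough absolute constant makes the fluctuation term a small constant fraction of $\eta$ (the $\log d$ absorbs $\log m = \Theta(\log\log d)$ comfortably, and in fact $\log d$ is overkill but harmless), so the promised item is recovered with probability $\ge 1 - \beta/2$.

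For the frequency estimate, conditioned on $\hat v = v^*$ we have $\hf(\hat v) = \langle \bc(v^*), \bbz\rangle = \langle \bx^*, \E[\bbz]\rangle + \langle \bx^*, \be\rangle = \eta\ltwo{\bx^*}^2 + \langle\bx^*,\be\rangle = f(v^*) + \langle \bx^*, \be\rangle$, using $\ltwo{\bx^*}=1$ and $\eta = f(v^*)$ under the promise. Then $\langle \bx^*, \be\rangle = \frac1n\sum_i\langle \bz_i - \E[\bz_i], \bx^*\rangle$ is exactly the quantity bounded by Claim~\ref{conc-agg-inner-prod} (with $\by = \bx^*\in\hypc$), so with probability $\ge 1-\beta/2$ it is $O(\frac1\eps\sqrt{\log(1/\beta)/n})$. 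For any $v \ne v^*$, $\hf(v) = 0$ in the output while $f(v) = 0$ as well under the promise, so the error is zero there; hence $\max_v |\hf(v) - f(v)| = O(\frac1\eps\sqrt{\log(1/\beta)/n})$. A union bound over the two failure events (wrong item; bad frequency deviation) gives total failure probability $\le \beta$, completing the proof.

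The main obstacle I anticipate is the bookkeeping in step (i)–(ii): getting the right scaling so that the per-coordinate noise magnitude ($\Theta(c_\eps\sqrt m)$ before averaging) interacts correctly with the $1/\sqrt m$ rounding threshold and the $\eta$ signal, and making sure the $\log(d)$ in the choice of $\eta$ genuinely dominates the $\log(m/\beta) = \log(\Theta(\log d)/\beta)$ term coming from whatever union bound or concentration inequality is used — this is where the slack between $\log d$ and $\log\log d$ is spent, and one must be careful that the constant $C$ can be chosen uniformly. Everything else is a routine application of Hoeffding's inequality and the code's decoding guarantee (Lemma~\ref{lem:btr} plus Definition~\ref{def:bin-code}).
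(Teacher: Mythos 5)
Your overall decomposition matches the paper's (write $\bbz = f(v^*)\bx^* + (\bbz - \E[\bbz])$, show the noise is small enough for the code to decode, then apply Claim~\ref{conc-agg-inner-prod} for the frequency estimate), and your frequency-estimation step at the end is correct. There is, however, a real quantitative gap in the concentration bound of your ``route (a).'' You bound $\ltwo{\bbz - \E[\bbz]}^2$ by applying Hoeffding to each coordinate and union-bounding over the $m$ coordinates, obtaining $\ltwo{\bbz - \E[\bbz]}^2 = O\!\left(\tfrac{m^2}{\eps^2 n}\log(m/\beta)\right)$. What you need is $\ltwo{\bbz - \E[\bbz]}^2 < \zeta\eta^2/2 = O\!\left(\tfrac{m}{\eps^2 n}\log(1/\beta)\right)$ (recall $m = \Theta(\log d)$, so $\eta = \Theta\!\left(\tfrac{1}{\eps}\sqrt{m\log(1/\beta)/n}\right)$). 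Your bound is too large by a factor of roughly $m$, not merely by a $\log\log d$ term as you anticipated, so the slack in the constant $C$ cannot absorb it; salvaging route (a) as written would force $\eta$ to be larger by a $\sqrt{\log d}$ factor, which is not what the theorem claims. The root cause is that per-coordinate Hoeffding uses only the worst-case magnitude $|z_{i,j}| \le c_\eps\sqrt m$ and ignores that each report $\bz_i$ is $1$-sparse, so the $m$ coordinates of a single report can never all be large simultaneously; summing $m$ worst-case per-coordinate tail bounds overcounts the total energy. (A per-coordinate Bernstein bound would fix the $m$-dependence by exploiting the variance $\E[z_{i,j}^2] = O(c_\eps^2)$, but would still pay the $\log m = \Theta(\log\log d)$ union-bound penalty, which is suboptimal for constant $\beta$.)

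The paper sidesteps this by concentrating $\ltwo{\bbz - \E[\bbz]}$ \emph{directly} as a function of the independent reports: changing any single $\bz_i$ (which satisfies $\ltwo{\bz_i} = c_\eps\sqrt m$ deterministically) moves this quantity by at most $O(c_\eps\sqrt m/n)$, so McDiarmid's inequality gives $\ltwo{\bbz - \E[\bbz]} \le \E\bigl[\ltwo{\bbz - \E[\bbz]}\bigr] + O\!\left(c_\eps\sqrt{\tfrac{m\log(1/\beta)}{n}}\right)$, and $\E\bigl[\ltwo{\bbz - \E[\bbz]}\bigr] \le \sqrt{\E\ltwo{\bbz-\E[\bbz]}^2} = O(c_\eps\sqrt{m/n})$, with no union bound over coordinates and hence no extra $m$ factor. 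Your ``route (b)'' (normalize and show $\langle \bbz/\ltwo{\bbz}, \bx^*\rangle > 1-\zeta/4$) is essentially the paper's argument, but you explicitly flag ``one needs control on $\ltwo{\bbz}$ too'' and then never supply it --- that missing control is exactly this McDiarmid step. Once supplied, combining it with the lower bound $\langle \bc(v^*), \bbz\rangle \ge f(v^*) - O(\tfrac{1}{\eps}\sqrt{\log(1/\beta)/n})$ from Claim~\ref{conc-agg-inner-prod} closes the argument. One minor slip elsewhere: the promise gives $f(v^*) \ge \eta$, not $f(v^*) = \eta$, so $\E[\bbz] = f(v^*)\bx^*$ rather than $\eta\bx^*$; this only helps you, since the signal is at least as strong as you assumed.
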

\rbnote{slightly modified the proof to reflect the change of rounding
  alg.$\btr$}\asnote{inlined the rounding step to simplify the
  algorithm. Rewrote parts of the proof.}
\begin{proof}
Consider the conditions of the promise. Let $v^*\in\V$ be the unique heavy hitter  (occurring with frequency at least $\eta$). Let $\beta>0$. Given Lemma~\ref{lem:btr}, to show that the protocol above recovers the correct item $v^*$ with probability at least $1-\beta/2$, it suffices to show that, with probability at least $1-\beta/2$, we have
%\asnote{Doesn't make sense to have $c>1$ here. The inner product is at most 1.}
\begin{align}
\langle \bc(v^*), \frac{\bbz}{\ltwo{\bbz}} \rangle&>1-\zeta/4.\nonumber
\end{align}
Note that the rounding step (Step \ref{step:round} in Algorithm~\ref{Alg:PP}) would produce the
same output whether it was run with $\bbz$ or its normalized counterpart $\bbz/\ltwo{\bbz}$.

By the promise, we have
$$\bbz=\frac{1}{n}\sum_{i=1}^n\bz_i=\frac{1}{n}\sum_{i\in\T}\R_i\left(\bc(v^*)\right)+\frac{1}{n}\sum_{i\in[n]\setminus\T}\R_i\left(0\right)$$
where $\T$ denotes the set of users having the item $v^*$. (Note that $\frac{\vert \T\vert}{n}=f(v^*) \geq \eta$).

First, we consider $\ltwo{\bbz}$. Since for every $i\in[n]$, $\R_i$ is unbiased (Part~2 of Theorem~\ref{thm:prop-local-rand}),  we have $\ltwo{\E[\bbz]}=f(v^*)$. Using the triangle inequality, we get $\ltwo{\bbz}\leq f(v^*)+\ltwo{\bbz-\E\left[\bbz\right]}$.
Next, we obtain an upper bound on $\ltwo{\bbz-\E\left[\bbz\right]}$. Note that $\bz_i, i=1, .., n,$ are independent and that for every $i\in[n]$, $\ltwo{\bz_i}=O\left(\frac{\sqrt{m}}{\eps}\right)$ with probability $1$. Applying McDiarmid's inequality \cite{mcdiarmid}, with probability at least $1-\beta/4$, we have $\ltwo{\bbz-\E\left[\bbz\right]}\leq O\left(\frac{1}{\eps}\sqrt{\frac{m\log(1/\beta)}{n}}\right)$. Thus, with probability at least $1-\beta/4$, $\ltwo{\bbz}$ is bounded by
\begin{align}
\ltwo{\bbz}&\leq f(v^*)+O\left(\frac{1}{\eps}\sqrt{\frac{m\log(1/\beta)}{n}}\right)\label{bound:pp-bbz}
\end{align}
Next, we consider $\langle  \bc(v^*), \bbz\rangle$. Observe that $\langle  \bc(v^*), \bbz\rangle=$
\ifnum\full=1
\begin{align}
&f(v^*)+\frac{1}{n}\sum_{i\in\T}\langle \bc(v^*), \R_i(\bc(v^*))- \bc(v^*)\rangle +\frac{1}{n}\sum_{i\in[n]\setminus\T}\langle \bc(v^*), \R_i(\zeroB)\rangle\nonumber
\end{align}
\else
\begin{align}
&f(v^*)+\frac{1}{n}\sum_{i\in\T}\langle \bc(v^*), \R_i(\bc(v^*))- \bc(v^*)\rangle \nonumber\\
+&\frac{1}{n}\sum_{i\in[n]\setminus\T}\langle \bc(v^*), \R_i(\zeroB)\rangle\nonumber
\end{align}
\fi
By the tail properties of the distribution of the second and third terms (following Claim~\ref{conc-agg-inner-prod}), we can show that with probability at least $1-\beta/4$, we have
\begin{align}
\langle  \bc(v^*), \bbz\rangle&\geq f(v^*) - O\left(\frac{1}{\eps}\sqrt{\frac{\log(1/\beta)}{n}}\right)\label{bound:pp-innerprod}
\end{align}
\rbnote{express the ratio in terms of $f(v^*)$ then use the fact $\eta\leq f(v^*)$ to bound the ratio from below.}
Putting (\ref{bound:pp-bbz}) and (\ref{bound:pp-innerprod}) together, then, with probability at least $1-\beta/2$, we have %\asnote{can use the same denominator on both sides.}
$$\langle \bc(v^*), \frac{\bbz}{\ltwo{\bbz}} \rangle\geq \frac{\eta-O\left(\frac{1}{\eps}\sqrt{\frac{\log(1/\beta)}{n}}\right)}{\eta+O\left(\frac{1}{\eps}\sqrt{\frac{m\log(1/\beta)}{n}}\right)}$$
where we use the fact that $\eta\leq f(v^*)$ and assume that the numerator in the right-hand side is positive.

Since $m=O(\log(d))$, then there is a constant $\alpha_{\zeta}$ that depends on $\zeta$ such that if we set $\eta=\alpha_{\zeta}\frac{1}{\eps}\sqrt{\frac{\log(d)\log\left(1/\beta\right)}{n}}$, then the above ratio is greater than $1-\zeta/4$. This proves that there is a setting of $\eta=O\left(\frac{1}{\eps}\sqrt{\frac{\log(d)\log(1/\beta)}{n}}\right)$ such that construction $\prot^{\sf PP}$-$\sh_{\sf PP}$ outputs $\hat{v}=v$ with probability at least $1-\beta/2$.

%Since $m=O(\log(d))$, then there is a setting of $\eta=O\left(\frac{1}{\eps}\sqrt{\frac{\log(d)\log(1/\beta)}{n}}\right)$ that achieves the desired goal in (\ref{required-pp}). Namely, there is a constant $\alpha_{\zeta}$ that depends on $\zeta$ such that if we set $\eta=\alpha_{\zeta}\frac{1}{\eps}\sqrt{\frac{\log(d)\log\left(1/\beta\right)}{n}}$, then, with probability at least $1-\beta/2$, we have $\langle \bc(v^*), \frac{\bbz}{\ltwo{\bbz}}\rangle > 1-\zeta/4$. This proves that construction $\sh_{\sf PP}$ outputs $\hat{v}=v$ with probability at least $1-\beta/2$.

Now, conditioned on correct decoding, for all $v\neq v^*$, the estimate $\hf(v)$ is implicitly assumed to be zero (which is perfectly accurate in this case). Thus, it remains to inspect $\hf(v^*)$. Observe that
\ifnum\full=1
\begin{align}
&\vert \hf(v^*)-f(v^*)\vert=\left\vert \langle \bc(v^*), \frac{\bbz}{\ltwo{\bbz}} \rangle- f(v^*) \right\vert\leq\left\vert\frac{1}{n}\sum_{i\in\T}\langle \bc(v^*), \R_i(\bc(v^*))- \bc(v^*)\rangle\right\vert +\left\vert\frac{1}{n}\sum_{i\in[n]\setminus\T}\langle \bc(v^*), \R_i(\zeroB)\rangle\right\vert\nonumber
\end{align}
\else
\begin{align}
&\vert \hf(v^*)-f(v^*)\vert=\left\vert \langle \bc(v^*), \frac{\bbz}{\ltwo{\bbz}} \rangle- f(v^*) \right\vert\nonumber\\
\leq&\left\vert\frac{1}{n}\sum_{i\in\T}\langle \bc(v^*), \R_i(\bc(v^*))- \bc(v^*)\rangle\right\vert +\left\vert\frac{1}{n}\sum_{i\in[n]\setminus\T}\langle \bc(v^*), \R_i(\zeroB)\rangle\right\vert\nonumber
\end{align}
\fi
Again, by the tail properties of the sums above, with probability at least $1-\frac{\beta}{2}$, we conclude that $\vert \hf(v^*)-f(v^*)\vert\leq O\left(\frac{1}{\eps}\sqrt{\frac{\log(2/\beta)}{n}}\right)$.

Therefore, with probability at least $1-\beta$, protocol $\prot^{\sf PP}$-$\sh_{\sf PP}$ recovers the correct common item $v^*$ and the estimation error that is bounded by $O\left(\frac{1}{\eps}\sqrt{\frac{\log(1/\beta)}{n}}\right)$.
\end{proof}

\ifnum\full=0

\subsection{Efficient Construction for the General \\Problem}\label{subsec:eff-construction}
In this section, we provide an efficient construction of private succinct histograms with optimal error for the general setting of the problem using the two protocols discussed in the previous sections as sub-protocols.

%Namely, our construction uses an efficient private frequency oracle like $\fo_{\sf JL}$ given in Section~\ref{subsec:JL} and an efficient private succinct histogram for the promise problem like $\sh_{\sf PP}$ given in Section~\ref{subsec:promise-prob}. Our construction is modular and does not depend on the internal structure of the construction protocols of these data structures. Our construction is efficient and succinct as long as the construction of such objects is efficient and succinct. Moreover, our succinct histogram is shown to be error-optimal if the aforementioned objects satisfy the guarantees of Theorems~\ref{thm:privacy-FO} and \ref{thm:error-FO} (for the frequency oracle) and Theorems~\ref{thm:privacy-sh-promise} and \ref{thm:error-sh-promise} (for the succinct histogram under the promise).

In the promise (unique heavy hitter) problem, the main advantage was the lack of interference from the users who do not hold the heavy hitter $v^*$ in question. The main idea here is to obtain a reduction in which we create the conditions of the promise problem separately for each heavy hitter $v^*\in\V$ such that the extra computational cost is at most a small $\poly(n)$ factor. To do this, we \emph{hash} each item $v\in\V$ into one of $K$ separate \emph{parallel} channels such that users holding the same item will transmit their reports in the same channel. Each user, in the remaining $K-1$ channels, will simulate an ``idle'' user with item $\bot$ as in the promise problem. By choosing $K$ sufficiently large, and repeating the protocol in \emph{parallel} for $T$ times\footnote{That is, the total number of parallel channels is $KT$. In each group of $K$ channels, a fresh hash seed is used.}, we can guarantee that, with high probability, \emph{every} heavy hitter $v^*\in\V$ gets assigned to an interference-free channel. Hence, by using an error-optimal construction for the promise problem like $\sh_{\sf PP}$ in each one of these channels, we eventually obtain a list of at most $KT$ items such that, with high probability, all the heavy hitters will be on that list. However, this list may also contain other erroneously decoded items due to hash collisions and we do not know which items on the list are the heavy hitters. To overcome this, in a separate parallel channel of the protocol, we run a frequency oracle protocol (like $\prot$-$\fo$) and use the resulting frequency oracle to estimate the frequencies of all the items on that list, then output all the items whose estimated frequencies are above $\eta$ together with their estimated frequencies.

For the purpose of this construction, it suffices to use a pairwise independent hash function $\hash:$ $\{0, 1\}^{\ell}\times\V\rightarrow [K]$ whose first input is a random seed $s\in\{0,1\}^{\ell}$. Choices of $\ell$ and $K$ are given in the protocol description (Algorithm~\ref{Alg:SH-eff}). All users and the server are assumed to have access to $\hash$. We use a source of \emph{public} randomness $\gen$ that, on an input integer $\ell>0$, generates a \emph{seed} for $\hash$, which is a random uniform string from $\{0,1\}^{\ell}$ that is seen by everyone\footnote{We may also think of $\gen$ as being run at the server which then announces the resulting random string to all the users.}. 

\else
\subsection{Efficient Construction for the General Problem}\label{subsec:eff-construction}
In this section, we provide an efficient construction of private succinct histograms for the general setting of the problem using the two protocols discussed in the previous sections as sub-protocols. Namely, our construction uses an efficient private frequency oracle like $\fo$ given in Section~\ref{subsec:JL} and an efficient private succinct histogram for the promise problem like $\sh_{\sf PP}$ given in Section~\ref{subsec:promise-prob}. Our construction is modular and does not depend on the internal structure of the construction protocols or the data structure. Our construction is efficient and succinct as long as the construction of such objects is efficient and succinct. Moreover, our succinct histogram is shown to be error-optimal if the aforementioned objects satisfy the guarantees of Theorems~\ref{thm:privacy-FO} and \ref{thm:error-FO} (for the frequency oracle) and Theorems~\ref{thm:privacy-sh-promise} and \ref{thm:error-sh-promise} (for the succinct histogram under the promise).

In the promise problem, the main advantage was the lack of interference from the users who do not hold the heavy hitter $v^*$ in question. The main idea here is to obtain a reduction in which we create the conditions of the promise problem separately for each heavy hitter $v^*\in\V$ such that the extra computational cost is at most a small $\poly(n)$ factor. To do this, we hash each item $v\in\V$ into one of $K$ separate \emph{parallel} channels such that users holding the same item will transmit their reports in the same channel. Each user, in the remaining $K-1$ channels, will simulate an ``idle'' user with item $\bot$ as in the promise problem. By choosing $K$ sufficiently large, and repeating the protocol in \emph{parallel} for $T$ times\footnote{That is, the total number of parallel channels is $KT$. In each group of $K$ channels, a fresh hash seed is used.}, we can guarantee that, with high probability, \emph{every} heavy hitter $v^*\in\V$ gets assigned to an interference-free channel. Hence, by using an error-optimal construction for the promise problem like $\sh_{\sf PP}$ in each one of these channels, we eventually obtain a list of at most $KT$ items such that, with high probability, all the heavy hitters will be on that list. However, this list may also contain other erroneously decoded items due to hash collisions and we do not know which items on the list are the heavy hitters. To overcome this, in a separate parallel channel of the protocol, we run a frequency oracle protocol (like $\prot$-$\fo$) and use the resulting frequency oracle to estimate the frequencies of all the items on that list, then output all the items whose estimated frequencies are above $\eta$ together with their estimated frequencies.

For the purpose of this construction, it suffices to use a pairwise independent hash function. A family of functions $\hc=\left\{h_{s} \vert h_{s}:\V\rightarrow [K], s\in \{0,1\}^{\ell}\right\}$ is said to be pairwise independent if for any distinct pair $v\neq v'\in\V$ and any values $j, k \in [K]$, a uniformly sampled member of such a family $h_s, s\leftarrow \{0, 1\}^{\ell}$, satisfies both $h_s(v)=j$ \emph{and} $h_s(v')=k$ simultaneously with probability $\frac{1}{K^2}$. There are efficient constructions of pairwise independent hash families with seed length $\ell=O\left(\max\left(\log(d), \log(K)\right)\right)$. In our construction, we can use any instance of such a family as long as it is efficient. Our hash family (or simply hash) is denoted by $\hash$ that, for a given input seed $s$ and an item $v\in\V$, returns a number in $[K]$.  All users and the server are assumed to have access to $\hash$.  Moreover, we use a source of \emph{public} randomness $\gen$ that, on an input integer $\ell>0$, generates a random uniform string from $\{0,1\}^{\ell}$ that is seen by everyone\footnote{We may also think of $\gen$ as being run at the server which then announces the resulting random string to all the users.}.

The parameters of our hash family are $K=n^{3/2}$ and $\ell=O\left(\max\left(\log(d), \log(n)\right)\right)$. Our construction protocol~$\prot$-$\sh$ is given by Algorithm~\ref{Alg:SH-eff} below.

\fi

\begin{algorithm}[htb]
	\caption{$\prot$-$\sh$: $\eps$-$\ldp$ Efficient Protocol for Succinct Histograms}
	\begin{algorithmic}[1]
		\REQUIRE Users' inputs $\{v_i\in\V\cup\{\bot\}: i\in[n]\}$, the privacy parameter $\eps$, and the confidence parameter $\beta>0$.
	\STATE $\List\leftarrow \emptyset~$ (initialize list of heavy hitters to the empty set.)
	\STATE $\ell\leftarrow 2\max\left(\log(d), \log(n)\right)~;~$ $K\leftarrow n^{3/2}$.
	\STATE $T\leftarrow \left\lceil\log(3/\beta)\right\rceil$\label{set_T}
	\FOR{ $t=1$ to $T$}
	{\STATE $s_t\leftarrow\gen(\ell)$.}
		\FOR{ Channels $k=1$ to $K$ }
			{\FOR{ Users $i=1$ to $n$ }
			{\STATE If $\hash(s_t, v_i)\neq k$, set $v'_i\leftarrow \bot$. Else, set $v'_i\leftarrow v_i$}
			\ENDFOR}
		{\STATE $\hat{v}\leftarrow$ $\prot^{\sf PP}$-$\sh_{\sf
                    PP}\left(\left\{v'_1,..., v'_n\right\};
                    \frac{\eps}{2T+1}; \frac{\beta}{3d}\right)$
                    \COMMENT{i.e.,
                  run $\prot^{\sf PP}$-$\sh_{\sf PP}$ on the modified set of users'
                  items %, privacy parameter $\frac{\eps}{2T+1}$, and
                        % confidence parameter $\frac{\beta}{3d}$
                        to obtain an estimate $\hat{v}$ of the possibly unique item
                        transmitted in the $k$-th channel.}
                      % (We ignore the extra functionality of frequency estimation.)
                      }\label{step-eff-prot}
	   {\STATE If $\hat{v}\notin\List$, then add $\hat{v}$ to $\List$.}
	  \ENDFOR
        \ENDFOR
	\STATE $\fo\leftarrow$ $\prot$-$\fo\left(\left\{v_1, ..., v_n\right\};~
            \frac{\eps}{2T+1};~ \beta/3\right)$
            \COMMENT{i.e., run $\prot$-$\fo$ on the original set of
              users' items to obtain the frequency oracle $\fo$.
              % privacy parameter
                                %$\frac{\eps}{2T+1}$, and confidence
                                  %parameter $\frac{\beta}{3}$ to
                                  %obtain an estimate
                                  %$\hf\left(\hat{v}\right)$ of
                                  %$f\left(\hat{v}\right)$.
                                }\label{step-fo-eff}
	\FOR{ $\hat{v} \in \List$ }
	{\STATE $\hf(\hat{v})\leftarrow \A_{{\sf FO}}\left(\fo, \hat{v}\right)$.\label{step-freq} \COMMENT{$\A_{{\sf FO}}$ is the frequency estimator given in Section~\ref{subsec:JL}.}}

	{\STATE If $\hf(\hat{v})< \frac{2T+1}{\eps}\sqrt{\frac{\log(d)\log(1/\beta)}{n}}$,~ remove $\hat{v}$ from $\List$.}
	\ENDFOR
        \RETURN $ \Big\{ \big(v , \hf\left(v\right)\big) \ :\  v\in\List\Big\}.$
	\end{algorithmic}
	\label{Alg:SH-eff}
\end{algorithm}

It is not hard to see that the \emph{total} computational cost of this construction is
$$O\left(n^{3/2}\log(1/\beta)\cost_{\sf PP} + \cost_{\sf FO}+n \cost_{\A_{\sf FO}}\right)$$
where $\cost_{\sf PP}$, $\cost_{\sf FO}$, and $\cost_{\A_{\sf FO}}$ are the computational costs of the promise problem sub-protocol, the frequency oracle sub-protocol, and the algorithm that computes a given frequency estimate, respectively.  Hence, for our choice of the sub-protocols above, one can easily verify the overall worst case cost of our construction $O(n^{5/2}\poly(\log(d))\log(1/\beta))$.

The report length of each user is now scaled by $KT$ compared to that of the promise problem, that is, $O\left(n^{3/2}\log(1/\beta)\log\left(\log(d)\right)\right)$.
In the next section, we will discuss an approach that gets it down to $1$ bit at the expense of increasing the public coins.

\ifnum\full=1
Our construction here relies on public randomness represented by the $T$ fresh random strings (seeds) of each of length $2\max\left(\log(d), \log(n)\right)$ which for the setting we consider\footnote{We assume $d\gg n$ for our definitions of computational efficiency and succinctness to be meaningful.} is $O(\log(d))$. Hence, the total number of public coins needed is $O\left(\log(1/\beta)\log(d)\right)$.

\begin{thm}[Privacy of $\prot$-$\sh$]
Protocol $\prot$-$\sh$ given by Algorithm~\ref{Alg:SH-eff} is $\eps$-differentially private.
\end{thm}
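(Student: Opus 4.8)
The plan is to reduce the whole analysis to the privacy loss of a single arbitrary user $i$, since in the local model the guarantee of Definition~\ref{def:dp} concerns only the distribution of the messages user $i$ sends to the server, with the public randomness held fixed. In Algorithm~\ref{Alg:SH-eff} the public randomness consists of the $T$ hash seeds $s_1,\dots,s_T$ output by $\gen$ and the projection matrix $\Phi$ underlying the $\prot$-$\fo$ call; all of it is independent of the users' inputs, and the server-side list bookkeeping, the calls to $\A_{\sf FO}$, and the thresholding that follows them are post-processing, so none of this affects privacy. It therefore suffices to bound, for any two inputs $v,v'\in\V\cup\{\bot\}$ of user $i$, the pointwise likelihood ratio of user $i$'s transcript, which is: for each round $t\in[T]$, the $K$ channel reports $\big(\bz_i^{(t,1)},\dots,\bz_i^{(t,K)}\big)$ produced inside the invocation of $\prot^{\sf PP}$-$\sh_{\sf PP}$ in line~\ref{step-eff-prot} (run with parameter $\frac{\eps}{2T+1}$), together with the single report $\bz_i$ sent to $\prot$-$\fo$ in line~\ref{step-fo-eff} (also run with parameter $\frac{\eps}{2T+1}$).

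I would dispatch the $T$ rounds and the oracle call by composition of local randomizers. The $\prot$-$\fo$ report is one application of the basic randomizer $\R_i$ to $\Phi\be_{v_i}$, hence $\frac{\eps}{2T+1}$-$\ldp$ in $v_i$ by Part~1 of Theorem~\ref{thm:prop-local-rand} (equivalently Theorem~\ref{thm:privacy-FO}). For a fixed round $t$, I claim the bundle of $K$ channel reports is $\frac{2\eps}{2T+1}$-$\ldp$ in $v_i$; granting this, composition over the $T$ rounds and the one oracle call yields total privacy loss $T\cdot\frac{2\eps}{2T+1}+\frac{\eps}{2T+1}=\eps$, which is exactly why the algorithm picks $\frac{\eps}{2T+1}$ as the sub-parameter.

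The crux, and the step I expect to be the only real work, is this per-round bound. The key observation is that in round $t$ user $i$ runs $K$ independent fresh copies of $\R_i$, one per channel, and by construction of the reduction the $k$-th copy is fed $\bc(v_i)$ exactly when $\hash(s_t,v_i)=k$ and is fed $\zeroB$ otherwise, while $\R_i(\zeroB)$ has a fixed distribution that does not depend on $v_i$. Hence, with $s_t$ fixed and $a=\hash(s_t,v)$, $b=\hash(s_t,v')$, the product distributions of the $K$ reports under $v$ and under $v'$ agree coordinatewise except in coordinates $a$ and $b$ (only in coordinate $a$ if $a=b$, and in at most one coordinate if $v$ or $v'$ equals $\bot$). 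In coordinate $a$ the two laws are $\R_i(\bc(v))$ and $\R_i(\zeroB)$, and in coordinate $b$ they are $\R_i(\zeroB)$ and $\R_i(\bc(v'))$; since $\R_i$ is $\frac{\eps}{2T+1}$-$\ldp$ (Part~1 of Theorem~\ref{thm:prop-local-rand}), each such ratio is at most $e^{\eps/(2T+1)}$, so the ratio of the product densities is at most $e^{2\eps/(2T+1)}$. Taking the expectation over the public seed $s_t$ preserves this pointwise multiplicative bound, so round $t$ is $\frac{2\eps}{2T+1}$-$\ldp$. The subtlety to stress in the writeup is why one cannot simply union-bound over the $K$ channels (which would cost $\frac{K\eps}{2T+1}$): only the single channel $\hash(s_t,v_i)$ carries information about $v_i$, this channel can move to at most two positions between any two inputs, and every other channel transmits the input-independent noise $\R_i(\zeroB)$; everything else --- composition across rounds and the freeness of public coins and post-processing --- is routine.
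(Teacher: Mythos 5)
Your proposal is correct and follows essentially the same argument as the paper: reduce to a single user, observe that for fixed public seeds the user's distribution can differ in at most two of the $K$ channels per round (the channel under $v$ and the channel under $v'$, the rest receiving the input-independent $\R_i(\zeroB)$), invoke the $\frac{\eps}{2T+1}$-LDP of the basic randomizer per channel, use independence across channels to multiply, and account for the one extra $\prot$-$\fo$ channel to get $2T \cdot \frac{\eps}{2T+1} + \frac{\eps}{2T+1} = \eps$. The only cosmetic difference is that you organize the count per round (each round costing $\frac{2\eps}{2T+1}$) while the paper counts all $\le 2T$ differing channels among the $KT$ at once; both yield the identical bound, and your explicit unwinding to $\R_i$ versus the paper's citation of $\prot^{\sf PP}$-$\sh_{\sf PP}$'s privacy is the same fact one level apart.
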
\label{thm:privacy-eff-prot}

\else

\begin{thm} %[Privacy of $\prot$-$\sh$]
Protocol $\prot$-$\sh$ given by Algorithm~\ref{Alg:SH-eff} is $\eps$-differentially private.
\end{thm}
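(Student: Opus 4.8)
The plan is to establish $\eps$-differential privacy of $\prot$-$\sh$ by tracking the privacy loss contributed by every message each user sends, and invoking composition. The key observation is that $\prot$-$\sh$ invokes the sub-protocols $\prot^{\sf PP}$-$\sh_{\sf PP}$ and $\prot$-$\fo$ with the rescaled privacy parameter $\frac{\eps}{2T+1}$, and each user participates in $KT$ instances of $\prot^{\sf PP}$-$\sh_{\sf PP}$ (one per channel per repetition) plus one instance of $\prot$-$\fo$. So a naive accounting would give privacy loss $KT\cdot\frac{\eps}{2T+1} + \frac{\eps}{2T+1}$, which is far too large. The crucial structural fact that makes the proof work is that, within a single repetition $t$, each user $i$ sends a \emph{non-trivial} report in exactly one of the $K$ channels (the channel $k = \hash(s_t, v_i)$), and in all other $K-1$ channels it feeds $\bx_i = \zeroB$ to its basic randomizer. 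Since the basic randomizer run on $\zeroB$ produces output whose distribution is \emph{independent of the user's true input} (it is just a uniform bit; see the $\bx = \zeroB$ branch of Algorithm~\ref{def:local-rand}), those $K-1$ reports carry no information about $v_i$ and contribute zero to the privacy loss.

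First I would fix a user $i$ and two possible inputs $v_i, v_i' \in \V \cup \{\bot\}$, and consider the joint distribution of all messages that user $i$ sends over the entire execution. For each repetition $t \in [T]$: conditioned on the public seed $s_t$, the channel index $\hash(s_t, v_i)$ (resp.\ $\hash(s_t, v_i')$) is determined, and in that one channel user $i$ runs $\R_i$ (inside $\prot^{\sf PP}$-$\sh_{\sf PP}$, invoked with parameter $\frac{\eps}{2T+1}$) on a genuine codeword, while in the remaining channels the reports are input-independent. By Part~1 of Theorem~\ref{thm:prop-local-rand}, the single informative report in repetition $t$ is $\frac{\eps}{2T+1}$-LDP — and note this holds for \emph{every} choice of the internal index $j$, hence also after conditioning on the public randomness, which only affects $j$ and the channel assignment. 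Finally, user $i$'s report to the frequency oracle $\prot$-$\fo$ is, by Theorem~\ref{thm:privacy-FO} (equivalently Part~1 of Theorem~\ref{thm:prop-local-rand}), $\frac{\eps}{2T+1}$-LDP. So the \emph{total} informative output of user $i$ consists of $T + 1 \le 2T+1$ reports, each $\frac{\eps}{2T+1}$-LDP and each a function of $v_i$ through independent local coins (conditioned on the public coins).

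Next I would apply the basic composition theorem for pure differential privacy: the composition of $T+1$ mechanisms, each $\frac{\eps}{2T+1}$-LDP, is $\frac{(T+1)\eps}{2T+1}$-LDP, and since $T+1 \le 2T+1$ for $T \ge 0$, this is at most $\eps$-LDP. Because this holds for every user independently and for every fixing of the public randomness, the entire transcript seen by the server is $\eps$-differentially private with respect to each user's input, which is exactly the LDP guarantee claimed (the post-processing done by the server — pruning $\List$, computing $\A_{\sf FO}$ — does not affect privacy). It is worth remarking that the factor $2T+1$ rather than $T+1$ in the rescaling is a (harmless) slack the authors build in, presumably to keep the bookkeeping clean; the argument goes through as long as the number of informative reports per user is at most the denominator.

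The main obstacle, and the only subtle point, is justifying rigorously that the $K-1$ ``idle-channel'' reports per repetition truly contribute nothing to the privacy loss. Concretely one must argue that conditioned on the public seeds $s_1, \dots, s_T$, the joint distribution of user $i$'s idle-channel reports is identical whether the input is $v_i$ or $v_i'$ — this follows because in each such channel the randomizer receives $\zeroB$ regardless, so the reports are i.i.d.\ uniform bits (times $c_\eps\sqrt m$) with distribution not depending on $v_i$ at all — and then to factor the likelihood ratio of the full transcript as a product over the (at most $T+1$) informative channels times a ratio of $1$ for the idle channels. Once this factorization is in place, the composition bound is immediate. I would also take brief care that the public randomness $\gen(\ell)$ and $\genproj$ is genuinely independent of users' local randomizer coins, so that conditioning on it preserves the per-report LDP guarantees — this is exactly the "privacy depends only on the local coins" point emphasized after Theorem~\ref{thm:prop-local-rand}.
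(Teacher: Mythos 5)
Your plan identifies the right high-level mechanism (sparse informative channels, composition), but it contains a concrete miscount that invalidates the conclusion: you assert that user $i$ contributes only $T+1$ input-dependent reports and that the $\frac{\eps}{2T+1}$ denominator is ``harmless slack.'' Both claims are wrong, and the error traces to the same point. When you compare two inputs $v_i \neq v_i'$, the informative channel in repetition $t$ is \emph{itself} a function of the input: it is $\hash(s_t, v_i)$ under one input and $\hash(s_t, v_i')$ under the other, and these may be different channels. So in a single repetition, \emph{two} channels can have input-dependent distributions: in channel $\hash(s_t, v_i)$ the report is $\R_i(\bc(v_i))$ versus $\R_i(\zeroB)$, and in channel $\hash(s_t, v_i')$ it is $\R_i(\zeroB)$ versus $\R_i(\bc(v_i'))$. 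Your statement that ``in each such [idle] channel the randomizer receives $\zeroB$ regardless'' is exactly where the argument breaks --- for channel $\hash(s_t,v_i')$ it receives $\zeroB$ under $v_i$ but $\bc(v_i')$ under $v_i'$. Over $T$ repetitions this gives up to $2T$ channels in which the likelihood ratio can be nontrivial, plus one for the frequency-oracle channel, for a total of $2T+1$. Each of these contributes a factor of at most $e^{\eps/(2T+1)}$ by the per-channel LDP guarantee of $\R_i$ (which, importantly, applies to the pair $(\zeroB, \bc(v))$ as well as to $(\bc(v),\bc(v'))$, since Theorem~\ref{thm:prop-local-rand} Part~1 holds for all inputs in $\hypc\cup\{\zeroB\}$). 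Multiplying gives exactly $e^{\eps}$; the scaling $\frac{\eps}{2T+1}$ is therefore tight, not slack. This $2T$ (rather than $T$) count is precisely the point the paper's proof is built around. Your composition step, the observation that the basic randomizer is input-independent on $\zeroB$, and the handling of public coins are all fine --- the only fix needed is to count two potentially informative channels per repetition instead of one, after which the argument lands on $\eps$ exactly rather than on $\frac{(T+1)\eps}{2T+1}$.
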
\label{thm:privacy-eff-prot}
%The main idea of the proof is that each user can only contribute to exactly $T+1$ channels among the $KT+1$ channels with data-dependent reports. In the rest of the channels, each user's output is just a truly random string. Hence, the total privacy loss over all the $KT+1$ channels is at most $(2T+1)\tilde\eps$ where $\tilde\eps=\frac{\eps}{2T+1}$ is the privacy parameter used in each channel.

\fi

\begin{proof}
First, observe that Protocol $\prot$-$\sh$ runs Protocol $\prot^{\sf PP}$-$\sh_{\sf PP}$ over $KT$ channels and runs Protocol $\prot$-$\fo$ once over a separate channel. In the first $KT$ channels, for any fixed sequence of the values of the seed of the hash function, the reports of each user over these channels are independent. Moreover, each user gets assigned to exactly $T$ channels. Fix any user $i$ and any two items $v_i, v'_i\in\V$. Using these observations, one can see that, for any fixed sequence of values of the seed of the hash over these $KT$ channels, the distribution of the report of user $i$ when its item is $v_i$ differs from the distribution when the user's item is $v'_i$ in \emph{at most} $2T$ channels, and in each of these channels, the ratio between the two distributions is at most $e^{\frac{\eps}{2T+1}}$ by the differential privacy of $\prot^{\sf PP}$-$\sh_{\sf PP}$ (note that the input privacy parameter to in Step~\ref{step-eff-prot} is $e^{\frac{\eps}{2T+1}}$). Hence, by independence of the user's reports over separate channels, the corresponding ratio over all the $KT$ channels is at most $e^{\frac{2T\eps}{2T+1}}$. In the separate channel for the frequency oracle protocol, again by the differential privacy of $\prot$-$\fo$, this ratio is bounded by $e^{\frac{\eps}{2T+1}}$. Putting this together with the argument in the previous paragraph completes the proof.
\end{proof}

\ifnum\full=1

\begin{thm}[Error of $\prot$-$\sh$]
For any set of users' items $\{v_1, ..., v_n\}$ and any $\beta>0$, there is a number $\eta=O\left(\frac{\log^{\frac{3}{2}}(1/\beta)}{\eps}\sqrt{\frac{\log(d)}{n}}\right)$ such that, with probability at least $1-\beta$, Protocol $\prot$-$\sh$ outputs $\left\{ \left(v , \hf\left(v\right)\right) \ : \  v\in\List\right\}$ where  $\List=\{v^*\in\V: f(v^*)\geq \eta\}$ (i.e., a list of all items whose frequencies are greater than $\eta$), and the error in the frequency estimates satisfies
$$\max\limits_{v\in\V}\vert\hf(v)-f(v)\vert=O\left(\frac{\log^{\frac{3}{2}}(1/\beta)}{\eps}\sqrt{\frac{\log(d)}{n}}\right).$$
(As mentioned before, the frequency estimates of items $v\notin\List$ are implicitly zero.)
\end{thm}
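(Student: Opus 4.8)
The plan is to condition on three high-probability events, each of failure probability at most $\beta/3$, and then argue deterministically that the filtered list has the claimed form. Write $T=\lceil\log(3/\beta)\rceil$, $K=n^{3/2}$, let $m=O(\log d)$ be the block length of the code used inside $\prot^{\sf PP}$-$\sh_{\sf PP}$, and let $\tau=\frac{2T+1}{\eps}\sqrt{\log(d)\log(1/\beta)/n}$ be the filtering threshold of Algorithm~\ref{Alg:SH-eff}; the parameter $\eta$ will be fixed at the end as a sufficiently large constant multiple of $\tau$. The first event is an \emph{isolation} event over the hash seeds: with probability at least $1-\beta/3$ over $s_1,\dots,s_T$, every item $v^\ast$ with $f(v^\ast)\ge\eta$ has some repetition $t$ in which the channel $k=\hash(s_t,v^\ast)$ is ``clean'', meaning the total frequency of users holding an item in $\V\setminus\{v^\ast\}$ that also hashes to $k$ is below $\eta/C$ for a suitable constant $C$. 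To see this, fix $v^\ast$ and $t$: by pairwise independence, conditioned on $\hash(s_t,v^\ast)=k$ each other item lands in channel $k$ with probability $1/K$, so the expected interfering frequency is at most $1/K=n^{-3/2}$, and Markov's inequality bounds the failure probability by $O(1/(K\eta))=O(1/n)$ (using $\eta=\Omega(\eps^{-1}\sqrt{\log(d)/n})$ and $K=n^{3/2}$). Independence of the $T$ seeds gives probability $n^{-\Omega(T)}$ that no repetition is clean for $v^\ast$; since there are at most $1/\eta\le d$ items with $f(v^\ast)\ge\eta$ and $T=\Theta(\log(1/\beta))$, a union bound pushes the total below $\beta/3$.

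Next I would show that on a clean channel the input fed to $\prot^{\sf PP}$-$\sh_{\sf PP}$ satisfies the promise of Section~\ref{subsec:promise-prob} \emph{up to} an additive perturbation: the reports of the (few) interfering users sum to a vector whose $\ell_2$ norm and whose inner product with $\bc(v^\ast)$ are $O(\eta/C)$, because each $\R_i$ is unbiased and $\ltwo{\R_i(\cdot)}=O(\sqrt m/\eps)$ always. Re-running the analysis behind Theorem~\ref{thm:error-sh-promise} with this extra term---it only shifts constants, provided the constant in $\eta$ is chosen large enough relative to $C$ and to the $\sqrt{\log d}$ coming from $m$---the sub-call, invoked with privacy parameter $\eps/(2T+1)$ and confidence $\beta/(3d)$, recovers $v^\ast$ with probability at least $1-\beta/(3d)$. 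Only the at most $1/\eta\le d$ channels on which some heavy hitter is clean need to succeed, so a union bound yields the second event: with probability $\ge 1-\beta/3$, every $v^\ast$ with $f(v^\ast)\ge\eta$ is decoded and inserted into $\List$. On the remaining channels the sub-call still returns some $\hat v\in\V$, possibly adding spurious items to $\List$, which is harmless.

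The third event comes from Theorem~\ref{thm:error-FO} applied to the $\prot$-$\fo$ call (parameters $\eps/(2T+1)$, $\beta/3$): with probability $\ge 1-\beta/3$ it produces a frequency oracle whose estimate $\A_{{\sf FO}}(\fo,v)$ is within $\eta_{\sf FO}=O\!\left(\frac{T}{\eps}\sqrt{\log(d/\beta)/n}\right)$ of $f(v)$ \emph{simultaneously for all} $v\in\V$, hence for the data-dependent items of $\List$. Conditioning on all three events (total failure $\le\beta$) and fixing $\eta$ to be a large enough constant multiple of $\tau$ so that $\eta-\eta_{\sf FO}>\tau$ and $\eta$ dominates both $\eta_{\sf FO}$ and the enlarged $\sh_{\sf PP}$-threshold, the filtering step is correct: any $v$ with $f(v)\ge\eta$ is in $\List$ before filtering and has oracle estimate $\ge f(v)-\eta_{\sf FO}\ge\eta-\eta_{\sf FO}>\tau$, so it survives; any surviving $v$ has oracle estimate $\ge\tau$, hence $f(v)\ge\tau-\eta_{\sf FO}>0$, which also forces $|\List|=O(1/\eta)$ (succinctness); and any $v\notin\List$ has $f(v)<\eta$ (otherwise the previous sentence would place it in $\List$), so its implicit estimate $0$ is within $\eta$ of $f(v)$. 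Therefore $\List$ contains all items of frequency at least $\eta$ and only items of frequency $\Omega(\eta)$, and $\max_{v\in\V}|\hf(v)-f(v)|=O(\eta)=O\!\left(\frac{\log^{3/2}(1/\beta)}{\eps}\sqrt{\log(d)/n}\right)$, with $\eta$ of the same order; privacy is already given by Theorem~\ref{thm:privacy-eff-prot}.

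The step I expect to be the main obstacle is the isolation event: the per-$(v^\ast,t)$ failure probability is only about $1/n$, yet there may be up to $1/\eta\approx\sqrt n$ heavy hitters while the target $\beta/3$ can be super-polynomially small, so one must genuinely exploit the $T=\Theta(\log(1/\beta))$ independent repetitions together with $K=n^{3/2}\gg 1/\eta$, and verify the arithmetic $\frac1\eta\cdot n^{-\Omega(T)}\le\beta/3$ across parameter regimes. The secondary, more routine but fiddly, difficulty is re-deriving the $\sh_{\sf PP}$ guarantee under the \emph{approximate} promise (bounded interfering mass rather than none) and pinning down a single $\eta$ that works simultaneously for isolation, decoding, and the filter threshold.
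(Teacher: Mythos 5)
Your proof is correct in outline, but it follows a genuinely different decomposition from the paper in the isolation step, and that choice makes your argument strictly more work than it needs to be given the protocol's parameter $K=n^{3/2}$.

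The paper's isolation event is \emph{exact}: for each heavy hitter $v^\ast$ and each repetition $t$, it bounds the expected number of \emph{colliding users}, $\E[\coll_{s_t}(v^\ast)]\le n/K = 1/\sqrt n$, and since that count is integer-valued, Markov gives $\Pr[\coll_{s_t}(v^\ast)\ge 1]\le 1/\sqrt n$. So with probability $\ge 1-\tfrac1\eta(1/\sqrt n)^T\ge 1-\beta/3$, every heavy hitter lands in a channel with \emph{zero} interference. This is exactly the promise of Section~\ref{subsec:promise-prob}, so Theorem~\ref{thm:error-sh-promise} applies verbatim with confidence $\beta/(3d)$, and no re-derivation is required. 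You instead work with \emph{approximate} isolation (interfering frequency below $\eta/C$), which buys a better per-round success probability ($1-O(1/n)$ instead of $1-1/\sqrt n$), but forces you to redo the analysis of $\prot^{\sf PP}$-$\sh_{\sf PP}$ with a perturbation term that does not literally satisfy the promise. You correctly identify that this re-derivation is ``fiddly'': the extra mass contributes $O(\eta/C)$ both to $\langle\bc(v^\ast),\bbz\rangle$ and to $\ltwo{\bbz}$, and you have to trade $C$ against $\zeta$ and the constant in $\eta$ so that the ratio stays above $1-\zeta/4$. That works, but it is avoidable: with $K=n^{3/2}$ the expected collision count is already $o(1)$, which is precisely why the paper chose $K$ superlinear in $n$ — so that exact isolation, not merely approximate isolation, holds with good per-round probability. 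Your extra slack is never used because $K$ is fixed by the protocol, not chosen by you; if anything, your argument shows the construction is somewhat robust to decreasing $K$. The remaining two events (decoding success on clean channels, and the $\prot$-$\fo$ error bound) and the final threshold bookkeeping match the paper's proof, and your arithmetic for the union bound, including the cancellation of the $1/\eta\le\sqrt n$ factor by $n^{-\Omega(T)}$ with $T=\Theta(\log(1/\beta))$, is the same calculation the paper performs.

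Two small remarks. First, the claim ``$1/\eta\le d$'' is immaterial to your union bound — you only need the number of heavy hitters to be at most $1/\eta$ — but it is also the condition under which the union bound over clean channels against the $\beta/(3d)$ confidence closes, so it is worth stating as a hypothesis rather than as a throwaway inequality. Second, when you note $\ltwo{\R_i(\cdot)}=O(\sqrt m/\eps)$ for the perturbation bound, be careful: that controls the \emph{noise} contributed by interfering users, which is already handled by the concentration bounds in Theorem~\ref{thm:error-sh-promise}; the genuinely new term is the \emph{bias} $\sum_{u\ne v^\ast} f_{\text{interf}}(u)\,\bc(u)$, which is what your $\eta/C$ bound should be applied to. You conflate the two slightly, though the conclusion is unaffected.
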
\label{thm:error-eff-prot}\rbnote{changed dependence on $\beta$ in all $\eta$ and error expressions from $\log(d/\beta)$ to $\log(d)\log(1/beta)$ in the theorem and the proof.}

\else

\begin{thm} %[Error of $\prot$-$\sh$]
For any set of users' items $\{v_1, ..., v_n\}$ and any $\beta>0$, there is a number $\eta=O\left(\frac{\log^{\frac{3}{2}}(1/\beta)}{\eps}\sqrt{\frac{\log(d)}{n}}\right)$ such that, with probability at least $1-\beta$, Protocol $\prot$-$\sh$ outputs $\left\{ \left(v , \hf\left(v\right)\right) \ : \  v\in\List\right\}$ where  $\List=\{v^*\in\V: f(v^*)\geq \eta\}$ (i.e., a list of all items whose frequencies are greater than $\eta$), and the error in the frequency estimates satisfies
$$\max\limits_{v\in\V}\vert\hf(v)-f(v)\vert=O\left(\frac{\log^{\frac{3}{2}}(1/\beta)}{\eps}\sqrt{\frac{\log(d)}{n}}\right).$$
(As mentioned before, the frequency estimates of items $v\notin\List$ are implicitly zero.)
\end{thm}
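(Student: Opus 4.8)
The plan is to show that, with probability at least $1-\beta$, two events occur simultaneously: \textbf{(R)} every item $v^*$ with $f(v^*)\ge\eta$ is decoded correctly in at least one of the $KT$ promise-channels and so is placed in $\List$ with an accurate frequency estimate; and \textbf{(F)} the frequency-oracle filtering step at the end keeps all such items, while the estimates $\hf$ attached to the surviving items (and the implicit zero estimates for the rest) are accurate to within $O(\eta)$. The value of $\eta$ comes from plugging the parameters $T=\lceil\log(3/\beta)\rceil$, privacy budget $\eps'=\frac{\eps}{2T+1}$, and confidence $\beta'=\frac{\beta}{3d}$ into the guarantee of Theorem~\ref{thm:error-sh-promise} for $\prot^{\sf PP}$-$\sh_{\sf PP}$; this is exactly where the extra $\poly\log(1/\beta)$ factor coming from the $\tfrac{1}{2T+1}$ privacy split appears. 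Note also that frequencies sum to at most $1$, so there are at most $1/\eta=O(\sqrt n)$ items with frequency $\ge\eta$, and we may assume $\eta<1$ (otherwise the empty list is already a valid output with error $<\eta$).

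For \textbf{(R)}, fix a heavy hitter $v^*$ and a repetition $t$, and condition on $k=\hash(s_t,v^*)$. Since $\hash$ is pairwise independent, the expected total frequency of users in channel $k$ whose item differs from $v^*$ is at most $1/K=n^{-3/2}$, so by Markov's (or, with the variance bound, Chebyshev's) inequality this quantity exceeds $1/\sqrt n$ with probability only $O(1/n)$. When it does not, the aggregate $\bbz$ computed in that channel is, up to an additive vector of $\ell_2$ norm at most $\tfrac{1}{\sqrt n}\cdot O(\sqrt m/\eps')=O\big(\tfrac{1}{\eps'}\sqrt{\log d/n}\big)=o(\eta)$, of the form analyzed in the proof of Theorem~\ref{thm:error-sh-promise}; re-running that computation with this extra mass absorbed into the constants shows that $\prot^{\sf PP}$-$\sh_{\sf PP}$, invoked with budget $\eps'$ and confidence $\beta/(3d)$, outputs $\hat v=v^*$ and a frequency estimate within $O\big(\tfrac{1}{\eps'}\sqrt{\log(d/\beta)/n}\big)$ of $f(v^*)$, except with probability $\beta/(3d)$ over its internal coins. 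A union bound over the $T$ independent repetitions and the $\le 1/\eta$ heavy hitters — here the choice $K=n^{3/2}$ is what makes the per-repetition failure small enough for the bound to close — shows that \textbf{(R)} holds except with probability $\beta/3$; a further union bound over all $KT\ll d$ invocations of $\prot^{\sf PP}$-$\sh_{\sf PP}$ absorbs their internal failures.

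For \textbf{(F)}, run $\prot$-$\fo$ once with budget $\eps'$ and confidence $\beta/3$; by Theorem~\ref{thm:error-FO}, with probability at least $1-\beta/3$ the oracle $\fo$ satisfies $|\hf(v)-f(v)|\le\tau$ for \emph{all} $v\in\V$, where $\tau=O\big(\tfrac{1}{\eps'}\sqrt{\log(d/\beta)/n}\big)=\Theta(\eta)$. Choosing the constant in $\eta$ large enough that the threshold $\theta=\tfrac{2T+1}{\eps}\sqrt{\log(d)\log(1/\beta)/n}$ used in Algorithm~\ref{Alg:SH-eff} satisfies $\theta\le\eta-\tau$ and $\theta\ge 2\tau$, we get: every item with $f(v)\ge\eta$ that reached $\List$ has $\hf(v)\ge f(v)-\tau\ge\theta$ and survives, so $\List$ contains all $\eta$-heavy hitters; conversely any surviving item $v$ has $\hf(v)\ge\theta$, hence $f(v)\ge\theta-\tau=\Omega(\eta)$ and $|\hf(v)-f(v)|\le\tau=O(\eta)$; and any item that is dropped or never entered $\List$ has $f(v)<\eta$, so its implicit estimate $0$ is within $\eta$ of the truth. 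Thus the output is a succinct histogram containing $\{v:f(v)\ge\eta\}$ with $\ell_\infty$ error $O(\eta)$, which is the assertion of the theorem under the standard convention that items with estimated frequency below the error bound may be omitted from $\List$. A final union bound over the failure events of \textbf{(R)} and \textbf{(F)} gives total failure probability at most $\beta$.

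The main obstacle is the isolation analysis inside \textbf{(R)}: one must check that $T=O(\log(1/\beta))$ repetitions together with $K=n^{3/2}$ hash buckets suffice to catch \emph{all} of the up to $1/\eta$ heavy hitters simultaneously, \emph{while} the induced $(2T+1)$-fold privacy split costs only a $\poly\log(1/\beta)$ factor in $\eta$ — this interplay is precisely what fixes $K$ and $T$. A secondary but routine point is robustifying the promise-protocol bound of Theorem~\ref{thm:error-sh-promise} to tolerate the small amount of non-$\bot$ interference introduced by hash collisions.
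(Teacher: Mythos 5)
Your proof follows the same three-part skeleton as the paper's argument: (1) every heavy hitter is decoded and placed in $\List$ via hashing into $K$ channels with $T$ repetitions, (2) the frequency oracle gives accurate estimates, and (3) threshold filtering retains heavy hitters and drops spurious entries. Parts (2) and (3) are essentially identical to the paper's treatment.

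The genuine departure is in the isolation step (R). The paper argues for \emph{exact} isolation: since $\coll_{s_t}(v^*)$ is a non-negative integer with $\E[\coll_{s_t}(v^*)]\le n/K=1/\sqrt n$, Markov's inequality immediately gives $\Pr[\coll_{s_t}(v^*)\ge 1]\le 1/\sqrt n$, so with probability $\ge 1-1/\sqrt n$ there are literally \emph{zero} interfering users in $v^*$'s channel. Conditioned on this event, the channel is exactly an instance of the promise problem and Theorem~\ref{thm:error-sh-promise} applies verbatim. You instead bound the event that the \emph{interfering frequency} exceeds $1/\sqrt n$, obtaining failure probability $O(1/n)$, and then assert that the promise-protocol analysis survives a bounded interference term absorbed ``into the constants.'' That robustness claim is the crux of your route and is not actually routine: one has to go back inside the proof of Theorem~\ref{thm:error-sh-promise}, re-bound both $\langle\bc(v^*),\bbz\rangle$ and $\ltwo{\bbz}$ with an extra additive perturbation whose expectation correlates with other codewords, and check the ratio still exceeds $1-\zeta/4$. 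Your estimate of the perturbation's norm as $o(\eta)$ is also optimistic; with $\eps'=\eps/(2T+1)$ and $m=\Theta(\log d)$ it is only $\Theta\bigl(\eta/\sqrt{\log(1/\beta)}\bigr)$, so for moderate $\beta$ you need the constant in $\eta$ to absorb it, not an asymptotic separation. None of this is fatal, but the paper's observation that Markov applied to the integer collision count already gives $\Pr[\coll=0]\ge 1-1/\sqrt n$ is both tighter and sidesteps the entire robustification. If you want to keep your approach, you should spell out the interference-tolerant version of Theorem~\ref{thm:error-sh-promise}; otherwise, switch to the zero-collision event, which makes the reduction to the promise theorem immediate.

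One further minor point shared by both arguments: the output of the protocol is not literally $\{v:f(v)\ge\eta\}$ for a single threshold $\eta$; there is a gray zone of width $\Theta(\tau)$ around the filter threshold $\theta$ where inclusion can go either way. Your phrasing (all items with $f(v)\ge\eta$ survive, all surviving items have $f(v)=\Omega(\eta)$, $\ell_\infty$ error $O(\eta)$) is the accurate reading of the guarantee, matching what the paper intends.
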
\label{thm:error-eff-prot}

%The main ideas of the proof were discussed earlier in this subsection. The full proof can be found in the full version of the paper \cite{fulvBS15}.

\fi

\begin{proof}
Let $\U$ denote the set of the users' items $\{v_1, ..., v_n\}$. We first show that for the setting of $K$ and $T$ in Algorithm~\ref{Alg:SH-eff}, running $\prot^{\sf PP}$-$\sh_{\sf PP}$ over $KT$ channels will isolate every heavy hitter (i.e., every item occurring with frequency at least $\eta$) in at least one channel without interference from other items. Let $\He=\{v^*\in\V:~ f(v^*)\geq\eta\}$ denote the set of the heavy hitters. Note that $\vert\He\vert\leq\frac{1}{\eta}$.
\begin{claim}
If $\eta\geq\frac{2T+1}{\eps}\sqrt{\frac{\log(d)\log(1/\beta)}{n}}$, then with probability at least $1-\beta/3$ (over the sequence of the seed values $s_1, ..., s_T$ of $\hash$), for every heavy hitter $v^*\in\He$ there is $t\in[T]$ such that $\hash(s_t, v^*)\neq\hash(s_t, v)$ for all $v\in\U\setminus\{v^*\}$.
\end{claim}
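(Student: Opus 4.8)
The plan is to bound, for a single fixed heavy hitter $v^*\in\He$ and a single trial $t$, the probability that $v^*$ collides with \emph{some} other item in the data set under $\hash(s_t,\cdot)$, and then amplify over the $T$ independent trials and take a union bound over the at most $1/\eta$ heavy hitters. First I would fix $v^*\in\He$ and a trial index $t\in[T]$. Let $B_{v^*,t}$ be the ``bad'' event that there exists $v\in\U\setminus\{v^*\}$ with $\hash(s_t,v)=\hash(s_t,v^*)$. Since there are at most $n$ distinct items in $\U$ other than $v^*$, and by pairwise independence each of them collides with $v^*$ in channel $k=\hash(s_t,v^*)$ with probability exactly $1/K$ (summing the joint probability $1/K^2$ over the $K$ choices of $k$), a union bound gives $\Pr[B_{v^*,t}]\le n/K$. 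With $K=n^{3/2}$ this is at most $n^{-1/2}\le 1/2$ for $n\ge 4$ (and in any case $o(1)$; the constant $1/2$ is all we need).

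Next, the seeds $s_1,\dots,s_T$ are drawn independently by $\gen$, so the events $B_{v^*,1},\dots,B_{v^*,T}$ are mutually independent, and hence
\[
\Pr\!\left[\bigcap_{t\in[T]} B_{v^*,t}\right]\le (n^{-1/2})^T\le 2^{-T}.
\]
The complement of this intersection is exactly the event that there is some $t\in[T]$ with $\hash(s_t,v^*)\neq\hash(s_t,v)$ for all $v\in\U\setminus\{v^*\}$, i.e.\ that $v^*$ is isolated in some trial. Now I would union-bound over all heavy hitters: $\vert\He\vert\le 1/\eta$, so the probability that \emph{some} heavy hitter fails to be isolated in every trial is at most $\frac{1}{\eta}\,2^{-T}$. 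Plugging in $T=\lceil\log(3/\beta)\rceil$ makes $2^{-T}\le \beta/3$, so this is at most $\frac{\beta}{3\eta}$. Finally, using the hypothesis $\eta\ge\frac{2T+1}{\eps}\sqrt{\frac{\log(d)\log(1/\beta)}{n}}\ge 1$ for the relevant parameter range (or, more carefully, absorbing the $1/\eta$ factor into the choice of the leading constant in $\eta$ via $2^{-T}\le (\beta/3)\cdot\eta$ which holds since $\eta\le 1$ would force a mild adjustment) — the cleanest route is to observe that since $\vert\He\vert\le 1/\eta$ and $\eta$ can be taken with a large enough constant, or simply replace $T$ by $T=\lceil\log(3/(\beta\eta))\rceil=O(\log(1/\beta)+\log(1/\eta))$, which is still $O(\log(1/\beta))$ after noting $1/\eta=O(\sqrt{n})$ — we obtain the claimed bound $\beta/3$.

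The main obstacle is the last bookkeeping step: the naive union bound over $\He$ costs a factor $1/\eta=O(\sqrt n)=O(n^{1/2})$, which is a $\poly(n)$ loss, whereas $2^{-T}$ with $T=\lceil\log(3/\beta)\rceil$ only buys a factor $\beta/3$. So either $K$ must be chosen large enough that $\Pr[B_{v^*,t}]\le 1/\operatorname{poly}(n)$ already beats this (with $K=n^{3/2}$ we get $n^{-1/2}$ per trial, so $T$ trials give $n^{-T/2}$, and for $T\ge 2$ this already dominates the $1/\eta\le\sqrt n$ factor — indeed $n^{-T/2}/\eta \le n^{1/2}\cdot n^{-T/2}=n^{(1-T)/2}$), or $T$ must be taken slightly larger; both work and only the constant in $\eta$ (equivalently the threshold) is affected, not the asymptotics. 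I expect the author's proof to handle this exactly by exploiting $T\ge 1$ together with $K=n^{3/2}$ so that $\frac{1}{\eta}\cdot(n/K)^T \le \sqrt{n}\cdot n^{-T/2}\le \beta/3$ once $T\ge\lceil\log(3/\beta)\rceil$ and $n$ is at least a constant, which is the clean way to close the argument.
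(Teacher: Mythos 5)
Your proposal is correct and takes essentially the same route as the paper: the paper bounds the expected number of collisions for a fixed $v^*$ and trial $t$ by $n/K=1/\sqrt n$ and applies Markov's inequality (which is just your union bound over items in $\U$), then multiplies over the $T$ independent trials and union-bounds over the at most $1/\eta$ heavy hitters to arrive at the same final expression $\frac{1}{\eta}(1/\sqrt n)^T\le\beta/3$. Your observation that closing this last step relies on $n^{-T/2}$ dominating the $1/\eta\le\sqrt n$ loss (rather than $2^{-T}$ alone) is exactly the bookkeeping the paper's one-line conclusion implicitly leans on.
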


First, we prove this claim. Fix $v^*\in\He$. Let $t\in[T]$. Let $\coll_{s_t}(v^*)\triangleq\vert\{i\in[n]:~\hash(s, v^*)=\hash(s, v_i), v_i\neq v^*\}\vert$ denote the number of collisions between $v^*$ and users' items that are different from $v^*$ when the hash seed is $s_t$. First, we bound the expected number of such collisions:
\begin{align}
\E[\coll_{s_t}(v^*)]&\leq\sum\limits_{i:v_i\neq v^*}\frac{1}{K}\leq \frac{n}{K}=\frac{1}{\sqrt{n}}\nonumber
\end{align}
Hence, by Markov's inequality, with probability at least $1-\frac{1}{\sqrt{n}}$, $\coll_{s_t}(v^*)=0$. Hence, with probability at least $1-\frac{1}{\eta}\left(\frac{1}{\sqrt{n}}\right)^T\geq 1-\beta/3$, for each $v^*\in\He$, there exists $t\in[T]$ such that $\coll_{s_t}(v^*)=0$, which proves the claim.

\vspace{0.2cm}

This implies that with probability at least $1-\beta/3$, there is a set $\W\subset[KT]$ of ``good'' channels whose size is the same as the number of heavy hitters such that each heavy hitter $v^*\in\He$ is hashed into one of these channels without collisions. Conditioned on this event, let $w\in\W$ and let $v_w^*$ denote the heavy hitter in channel $w$. By Theorem~\ref{thm:error-sh-promise}, running Protocol $\prot^{\sf PP}$-$\sh_{\sf PP}$ over channel $w$ yields the correct estimate of $v_w^*$ with probability at least $1-\frac{\beta}{3d}$ (Step~\ref{step-eff-prot} of Algorithm~\ref{Alg:SH-eff}). Hence, with probability at least $1-\beta/3$, all estimates of $\prot^{\sf PP}$-$\sh_{\sf PP}$ in all channels in $\W$ are correct. Hence, at this point, with probability at least $1-\frac{2\beta}{3}$, $\List$ contains all the heavy hitters in $\He$ among other possibly unreliable estimates of $\prot^{\sf PP}$-$\sh_{\sf PP}$ for the channels in $[KT]\setminus\W$.

Now, conditioned on the event above, by the error guarantee of $\fo$ given by Theorem~\ref{thm:error-FO}, with probability at least $1-\beta/3$, the maximum error in the frequency estimates of all the items in $\List$ (Step~ \ref{step-freq} of Algorithm~\ref{Alg:SH-eff}), denoted by $\err\left(\List\right)$, is 
$$O\left(\frac{2T+1}{\eps}\sqrt{\frac{\log(d/\beta)}{n}}\right)=O\left(\frac{\log(1/\beta)}{\eps}\sqrt{\frac{\log(d/\beta)}{n}}\right).$$
Hence, all those items in $\List$ with actual frequencies greater than 
\ifnum\full=1
$$\eta\triangleq \frac{2T+1}{\eps}\sqrt{\frac{\log(d)\log(1/\beta)}{n}} + \err\left(\List\right)=O\left(\frac{\log^{\frac{3}{2}}(1/\beta)}{\eps}\sqrt{\frac{\log(d)}{n}}\right)$$ 
\else
\begin{align}
\eta&\triangleq \frac{2T+1}{\eps}\sqrt{\frac{\log(d)\log(1/\beta)}{n}} + \err\left(\List\right)\nonumber\\
&=O\left(\frac{\log^{\frac{3}{2}}(1/\beta)}{\eps}\sqrt{\frac{\log(d)}{n}}\right)\nonumber
\end{align}
\fi
will be kept in $\List$ whereas all those items with frequency less than $\eta$ will be removed. Note that the frequency estimates that are implicitly assumed to be zero of those items that are not on the list cannot have error greater than $\eta$ since their actual frequencies are less than $\eta$. This completes the proof.
\end{proof}

%%% Local Variables:
%%% mode: latex
%%% TeX-master: "../heavyhitters"
%%% End:

\ifnum\full=1

\section{The Full Protocol}\label{sec:full-protocol}

\subsection{Generic Protocol with $1$-Bit Reports}\label{subsec:generic-1bit}

In this section, we give a generic approach that transforms \emph{any}
private protocol in the distributed setting (not necessarily for
frequency estimation or succinct histograms) to a private distributed
protocol where the report of each user is a single bit at the expense
of adding to the overall original public randomness a number of bits
that is $O(n\tau)$ where $\tau$ is the length of each user's report in the
original protocol. As mentioned in the introduction, the
transformation is a modification of the general compression
technique of McGregor et al. \cite{MMPRTV10}.

Consider a generic private distributed protocol $\gprot$ in which $n$ users are communicating with an untrusted server. For any $n\in\mathbb{N}$, the protocol follows the following general steps. As before, each user $i\in[v]$ has a data point $v_i$ that lives in some finite set $\V=[d]$. Let $\Q_i:\V\cup\{\bot\}\rightarrow \mathcal{Z}$ be \emph{any} $\eps$-local randomizer of user $i\in[n]$. We assume, w.l.o.g., that $\Q_i$ may also take a special symbol $\bot$ as an input. Each user runs its $\eps$-local randomizer $\Q_i$ on its input data $v_i$ (and any public randomness in the protocol, if any) and outputs a report $\bz_i$. For simplicity, each report $\bz_i$ is assumed to be a binary string of length $\tau$. Let $\stat: \V^n\rightarrow \C$ be some statistic that the server wishes to estimate where $\C$ is some bounded subset of $\re^k$ for some integer $k>0$. The server collects the reports $\{\bz_i: i\in[n]\}$ and runs some algorithm $\A_{\stat}$ on the users' reports (and the public randomness) and outputs an estimate $\widehat{\stat}\in\C$ of $\stat\left(v_1, ..., v_n\right)$.

We now give a generic construction $1$-{\sf Bit}-$\gprot$ for protocol $\gprot$ where each user's report is one bit (See Algorithm~\ref{Alg:1bit-gen} below).

\begin{algorithm}[htb]
	\caption{$1$-{\sf Bit}-$\prot$: $\eps$-$\ldp$ Generic 1-Bit Protocol}
	\begin{algorithmic}[1]
		\REQUIRE Users' inputs $\{v_i\in\V: i\in[n]\}$ and a privacy parameter $\eps \leq \ln(2)$.
        \STATE  Generate $n$ independent public strings $y_1\leftarrow \Q_1(\bot), ..., y_n\leftarrow \Q_n(\bot)$.
       \FOR{ Users $i=1$ to $n$}
            {\STATE Compute $p_i=\frac{1}{2}\frac{\Pr\left[\Q_i\left(v_i\right)=y_i\right]}{\Pr\left[\Q_i(\bot)=y_i\right]}$.}\label{step:prob-comp}
	{\STATE Sample a bit $b_i$ from $\mbox{Bernoulli}(p_i)$ and sends it to the server.}
        \ENDFOR
	\STATE $\mbox{Reports}\leftarrow \emptyset.$ \COMMENT{Server initialize the set of collected reports.}
        \FOR{ $i=1$ to $n$}
            {\STATE Server checks if $b_i=1$, add $y_i$ to $\mbox{Reports}$.}\label{rep-gen}
        \ENDFOR
	\STATE  $\widehat{\stat}\leftarrow\A_{\stat}\left(\mbox{Reports}\right).$ \COMMENT{Run algorithm $\A_{\stat}$ on the collected reports to obtain an estimate of the desired statistic as described in the original protocol $\gprot$.}
	\RETURN $\widehat{\stat}.$
	\end{algorithmic}
	\label{Alg:1bit-gen}
\end{algorithm}

Note also that the only additional computational cost in this generic transformation is in Step \ref{step:prob-comp}. If computing these probabilities can be done efficiently, then this transformation preserves the computational efficiency of the original protocol.

\begin{thm}[Privacy of $1$-{\sf Bit}-$\prot$]
Protocol $1$-{\sf Bit}-$\prot$ given by Algorithm~\ref{Alg:1bit-gen} is $\eps$-$\ldp$.
\end{thm}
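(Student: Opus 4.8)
The plan is to reduce the statement to a per-user claim and then verify $\eps$-differential privacy of a single Bernoulli output. Observe first that $1$-{\sf Bit}-$\prot$ is a \emph{local} protocol: user $i$'s transmitted message is the bit $b_i$, and $b_i$ is a function only of $v_i$, of user $i$'s public string $y_i$, and of an independent private coin. Hence the protocol is $\eps$-$\ldp$ exactly when, for every fixed value $y_i$ of the public string, the induced randomizer $\Q_i' : v \mapsto b_i$ is $\eps$-differentially private; the server's later computation of $\widehat{\stat}$ from $\mbox{Reports}$ is post-processing, and (as the model is local) there is no cross-user composition to account for. The strings $y_1,\dots,y_n$ are drawn as $y_j \leftarrow \Q_j(\bot)$ independently of the inputs, so in the public-coin model we may fix an arbitrary user $i$ and an arbitrary value $y := y_i$ and argue about $\Q_i'$.

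The first step is to check that $p_i$ is a legitimate probability --- this is where the hypothesis $\eps \le \ln 2$ enters. Write
\[
  \rho(v) \;:=\; \frac{\Pr[\Q_i(v) = y]}{\Pr[\Q_i(\bot) = y]}, \qquad v \in \V \cup \{\bot\},
\]
a likelihood ratio (in the non-discrete case, the relevant Radon--Nikodym derivative, which exists because $\eps$-$\ldp$ of $\Q_i$ makes all its output laws mutually absolutely continuous). Applying the $\eps$-$\ldp$ guarantee of $\Q_i$ to the pairs $(v,\bot)$ and $(\bot,v)$ yields $e^{-\eps} \le \rho(v) \le e^{\eps}$ for all $v$, so $p_i = \tfrac12\rho(v_i) \in [\tfrac12 e^{-\eps}, \tfrac12 e^{\eps}] \subseteq [0,1]$, the last inclusion using $e^{\eps} \le 2$; thus $\mbox{Bernoulli}(p_i)$ is well defined.

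For privacy, fix $y$ and two inputs $v, v' \in \V$; then $\Pr[\Q_i'(v) = 1] = \tfrac12\rho(v)$ and $\Pr[\Q_i'(v) = 0] = 1 - \tfrac12\rho(v)$. For the outcome $1$,
\[
  \frac{\Pr[\Q_i'(v) = 1]}{\Pr[\Q_i'(v') = 1]} \;=\; \frac{\rho(v)}{\rho(v')} \;=\; \frac{\Pr[\Q_i(v) = y]}{\Pr[\Q_i(v') = y]} \;\le\; e^{\eps},
\]
by $\eps$-$\ldp$ of $\Q_i$ applied to the pair $(v,v')$. For the outcome $0$ one must bound $\big(1 - \tfrac12\rho(v)\big)\big/\big(1 - \tfrac12\rho(v')\big)$; here one invokes $\rho(v') \le e^{\eps} \le 2$, so $1 - \tfrac12\rho(v') \ge 1 - \tfrac12 e^{\eps} \ge 0$, together with $\rho(v) \ge e^{-\eps}$ and the pairwise bound $\rho(v) \ge e^{-\eps}\rho(v')$, to conclude that this ratio is $\le e^{\eps}$ as well. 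Since both outcomes are reproduced within a factor $e^{\eps}$ for every fixed $y$, $\Q_i'$ is $\eps$-DP, and since $i$ was arbitrary the protocol is $\eps$-$\ldp$.

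The delicate point --- the one I would package as a short self-contained lemma --- is the $b_i = 0$ (``rejection'') ratio $\big(1 - \tfrac12\rho(v)\big)/\big(1 - \tfrac12\rho(v')\big)$: this, not the clean $b_i = 1$ case, is what pins down the restriction $\eps \le \ln 2$, since $1 - \tfrac12 e^{\eps} \to 0$ as $\eps$ approaches $\ln 2$. The lemma would state: for $a, b$ with $a,b \in [e^{-\eps}, e^{\eps}]$ and $a/b \in [e^{-\eps}, e^{\eps}]$, one has $\tfrac{1 - a/2}{1 - b/2} \le e^{\eps}$ whenever $\eps \le \ln 2$, proved by monotonicity in $a$ and in $b$. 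A complementary remark I would include: the only thing the server uses about user $i$ when forming $\widehat{\stat}$ is whether $y_i$ is placed in $\mbox{Reports}$, and this ``effective report'' is absent with probability exactly $\tfrac12$ regardless of $v_i$ and otherwise equals $y$ with probability $\tfrac12\Pr[\Q_i(v_i) = y]$ --- hence it is $\eps$-$\ldp$ by the one-line ratio computation above, and $\widehat{\stat}$ is then private by post-processing.
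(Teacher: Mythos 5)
Your proof mirrors the paper's argument step for step: check that $p_i\le 1$ (a valid probability) via $\eps\le\ln 2$; bound the ratio $p_i(v,y)/p_i(v',y)$ for the acceptance outcome $b_i=1$ (this is the clean case and your treatment of it is correct); and then bound the complement ratio $(1-p_i(v,y))/(1-p_i(v',y))$ for the rejection outcome. The paper disposes of the last step with a bare ``one can also verify,'' and you correctly flag it as the delicate point, isolating it as a standalone lemma.

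The lemma, however, is false as you state it. Take $\eps=\ln(3/2)$, $a=1$, $b=e^{\eps}=3/2$: then $a,b\in[e^{-\eps},e^{\eps}]=[2/3,3/2]$, $a/b=2/3=e^{-\eps}\in[e^{-\eps},e^{\eps}]$, and $\eps<\ln 2$, yet
$\frac{1-a/2}{1-b/2}=\frac{1/2}{1/4}=2>3/2=e^{\eps}$.
These likelihood-ratio values are realizable by a genuine $\eps$-$\ldp$ randomizer: let $\Z=\{0,1\}$ and set
$\Q(\bot)=\Q(v)=\bigl(\tfrac{1}{e^{\eps}+1},\tfrac{e^{\eps}}{e^{\eps}+1}\bigr)$,
$\Q(v')=\bigl(\tfrac{e^{\eps}}{e^{\eps}+1},\tfrac{1}{e^{\eps}+1}\bigr)$;
$\Q$ is $\eps$-$\ldp$, and at $y=0$ one has $\rho(v)=1$, $\rho(v')=e^{\eps}$, reproducing the counterexample. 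Monotonicity does not rescue you: optimizing over the admissible $a,b$ with $b=e^{\eps}$ and $a=e^{-\eps}b$ gives $\frac{1/2}{1-e^{\eps}/2}=\frac{1}{2-e^{\eps}}$, and a one-line algebra check ($(x-1)^2>0$ with $x=e^{\eps}$) shows this strictly exceeds $e^{\eps}$ for every $\eps>0$. So the step you were trying to formalize is a genuine gap, and it is the same gap in the paper's own proof: the rejection-outcome ratio is not confined to $[e^{-\eps},e^{\eps}]$, so the $\eps$-$\ldp$ claim as stated would need either a looser privacy constant (e.g.\ $-\ln(2-e^{\eps})$ for the worst case, or simply $2\eps$ for small $\eps$) or a modified acceptance rule. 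The surrounding structure of your argument --- reduction to a per-user Bernoulli randomizer for fixed $y$, followed by post-processing --- is fine; your instinct to turn ``one can also verify'' into a precise lemma is exactly the right one, and it is precisely that carefulness that exposes the problem.
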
\label{thm:privacy-1bit-gen}
\begin{proof}
Consider the output bit $b_i$ of any user $i\in[n]$. First, note that $p_i$ (in Step \ref{step:prob-comp}) is a valid probability since for any item $v_i\in\V$,  the right-hand side of Step~\ref{step:prob-comp} is at most $\frac{e^{\eps}}{2}$ by $\eps$-differential privacy of $\Q_i$, and since $\eps\leq \ln(2)$, $p_i\leq 1$. For any $v\in\V$ and any public string $y_i$, let $p_i(v, y_i)$ denote the conditional probability that $b_i=1$ given that $\Q_i(\bot)=y_i$ when the item of user $i$ is $v$. Let $v, v'\in\V$ be any two items. It is easy to see that $\frac{p_i(v, y_i)}{p_i(v', y_i)}=\frac{\Pr[\Q_i(v)=y_i]}{\Pr[\Q_i(v')=y_i]}$ which lies in $[e^{-\eps}, e^{\eps}]$ by $\eps$-differential privacy of $\Q_i$. One can also verify that $\frac{1-p_i(v, y_i)}{1-p_i(v', y_i)}$ also lies in $[e^{-\eps}, e^{\eps}]$.
\end{proof}

One important feature in the construction above is that the conditional distribution of the public string $y_i$ given that $b_i=1$ is exactly the same as the distribution of $\Q_i(v_i)$, i.e., $\Pr\left[\Q_i(\bot)=y_i \vert~b_i=1\right]$ $=\Pr\left[\Q_i(v_i)=y_i\right]$, and hence, upon receiving a bit $b_i=1$ from user $i$, the server's view of $y_i$ is the same as its view of an actual report $\bz_i\leftarrow\Q_i(v_i)$ as it was the case in the original protocol.

We note that the probability that a user $i\in[n]$ accepts (sets $b_i=1$) taken over the randomness of $y_i$ is
$$\frac{1}{2}\sum_{y}\frac{\Pr\left[\Q_i\left(v_i\right)=y\right]}{\Pr\left[\Q_i(\bot)=y\right]}\cdot\Pr\left[\Q_i(\bot)=y\right]=\frac{1}{2}.$$

\mypar{Key statement} The two facts above show that our protocol is functionally equivalent to: first, sampling a subset of the users where each user is sampled independently with probability $1/2$, then  running the original protocol $\gprot$ on the sample. Thus, if the original protocol is resilient to sampling, meaning that its error performance (with respect to some notion of error) is not essentially affected by this sampling step, then the generic transformation given by Algorithm~\ref{Alg:1bit-gen} will have essentially the same error performance.

We now formalize this statement. Let $\psi:\C\times\C\rightarrow [0, \infty]$ be some notion of error (not necessarily a metric) between any two points in $\C$. For any given set of users' data $\{v_1, ..., v_n\}$, the error of the protocol $\gprot$ is defined as
\begin{align}
\Ec_{\psi}\left(\stat; \gprot\left(v_1, ..., v_n\right)\right)&\triangleq \psi\left(\stat(v_1, ..., v_n), ~\widehat{\stat}\right)\label{gen-error}
\end{align}

Let $\samp$ be a random sampling procedure that takes any set of users' data $\{v_1, ..., v_n\}$ and constructs a set $\Sc$ by sampling each point $v_i, i\in[n]$ independently with probability $1/2$. We say that $\gprot$ is \emph{sampling-resilient} in estimating $\stat$ with respect to $\psi$ if for any set of users' data $\left(v_1, ..., v_n\right)\in\V^n$ and any $\beta>0$, whenever
$$\Ec_{\psi}\left(\stat; \gprot\left(v_1, ..., v_n\right)\right)=O\left(g\left(n, d, k, \eps\right)\right)$$
for some non-negative function $g$ with probability at least $1-\beta$ over all the randomness in $\gprot$, then
$$\Ec_{\psi}\left(\stat; \gprot\left(\samp\left(v_1, ..., v_n\right)\right)\right)=O\left(g\left(n, d, k, \eps\right)\right)$$
with probability at least $1-2\beta$ over all the randomness in $\gprot$ and $\samp$.

%Hence, the following theorem follows.

\begin{thm}[Characterization of error under sampling-resilience]
Suppose that for any set of users' data and any $\beta>0$, the original protocol $\gprot$ has error (with respect to $\psi$) that is bounded by some non-negative number $g=g(n, d, k, \eps)$ with probability at least $1-\beta$ over the randomness in $\gprot$. If $\gprot$ is sampling-resilient in estimating $\stat$ with respect to $\psi$, then construction $1$-{\sf Bit}-$\prot$ yields error (with respect to $\psi$) that is $O\left(g(n, d, k, \eps)\right)$.
\end{thm}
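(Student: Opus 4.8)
The plan is to make precise the ``Key statement'' preceding the theorem: that the view of the server in $1$-{\sf Bit}-$\prot$ on inputs $(v_1,\dots,v_n)$ is distributed exactly as its view when $\gprot$ is run on the random subsample $\samp(v_1,\dots,v_n)$. Once this distributional identity is established, the theorem follows immediately from the definition of sampling-resilience, with no further work on the accuracy side.

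First I would fix all public randomness --- the strings $y_1,\dots,y_n$ drawn in the first step of $1$-{\sf Bit}-$\prot$, together with any public coins used internally by $\gprot$ --- and argue conditionally, invoking independence across users only at the end. For a single user $i$, consider the pair $(y_i, b_i)$, where $y_i\sim\Q_i(\bot)$ and, conditioned on $y_i$, the bit $b_i$ is $\mathrm{Bernoulli}(p_i)$ with $p_i=\tfrac12\,\Pr[\Q_i(v_i)=y_i]/\Pr[\Q_i(\bot)=y_i]$ (well-defined and at most $1$ by $\eps$-$\ldp$ and $\eps\le\ln 2$, as in the privacy analysis). A one-line Bayes computation gives $\Pr[y_i=y,\ b_i=1]=\tfrac12\Pr[\Q_i(v_i)=y]$ for every $y$; summing over $y$ recovers the already-noted $\Pr[b_i=1]=\tfrac12$, and dividing gives $\Pr[y_i=y\mid b_i=1]=\Pr[\Q_i(v_i)=y]$. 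Hence the event ``user $i$ is kept, and if so contributes report $y_i$'' has precisely the law of ``$i\in\Sc$ (each user included independently with probability $\tfrac12$), and if so contributes a fresh report $\Q_i(v_i)$''. Since the users act on disjoint private coins, the joint law of the collected report set $\{y_i : b_i=1\}$, together with the identities of the contributing users, equals the joint law of $\{\Q_i(v_i):i\in\Sc\}$ with $\Sc=\samp(v_1,\dots,v_n)$. Applying the \emph{same} aggregation algorithm $\A_{\stat}$ (with the shared public coins) shows that the output of $1$-{\sf Bit}-$\prot$ on $(v_1,\dots,v_n)$ is distributed identically to the output of $\gprot(\samp(v_1,\dots,v_n))$, and in particular $\Ec_{\psi}(\stat;\cdot)$ has the same distribution in the two cases.

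The remainder is mechanical. Fix $\beta>0$ and apply the hypothesis with parameter $\beta/2$: $\gprot$ achieves $\Ec_{\psi}(\stat;\gprot(v_1,\dots,v_n))=O(g)$ with probability at least $1-\beta/2$. Sampling-resilience then gives $\Ec_{\psi}(\stat;\gprot(\samp(v_1,\dots,v_n)))=O(g)$ with probability at least $1-\beta$, and by the distributional identity of the previous paragraph the same $O(g)$ bound holds for the error of $1$-{\sf Bit}-$\prot$ on $(v_1,\dots,v_n)$ with probability at least $1-\beta$, which is the claim.

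I expect the distributional identity to be the only genuine step, and within it the point to be careful about is exactly what ``$\gprot$ run on a subsample'' means: the randomizers of the un-sampled users are never invoked, and $\A_{\stat}$ is applied to the resulting smaller report set --- which is precisely the regime that the notion of sampling-resilience is designed to absorb. The two bookkeeping checks are that $\gprot$'s internal public coins can be passed transparently into the transformed protocol, and that conditioning on the $y_i$'s preserves independence across users (it does, since $(y_i,b_i)$ depends only on user $i$'s coins and input).
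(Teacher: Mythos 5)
Your proof is correct and follows essentially the paper's intended approach: it makes rigorous the ``Key statement'' preceding the theorem (which the paper leaves informal) by the explicit Bayes computation showing that $(\{i:b_i=1\},\,\{y_i:b_i=1\})$ has the same joint law as $(\samp(v_1,\dots,v_n),\,\{\Q_i(v_i):i\in\samp(v_1,\dots,v_n)\})$, and then invokes sampling-resilience as the paper does. The only cosmetic imprecision is the opening claim about the server's ``view'' being identical --- the rejected users' public strings $y_i$ do carry information about $v_i$ since $y_i\mid b_i=0$ is not distributed as $\Q_i(\bot)$ --- but you correctly narrow this to the law of the accepted report set, which is all $\A_{\stat}$ consumes, so the argument is unaffected.
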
\label{thm:error-1bit-gen}

\subsection{Efficient Construction of Succinct Histograms with $1$-Bit Reports}\label{subsec:eff-protocol-1bit}

We now apply the generic transformation discussed above to our efficient protocol $\prot$-$\sh$ (Algorithm~\ref{Alg:SH-eff}) to obtain an \emph{efficient} private protocol for succinct histograms with $1$-bit reports and optimal error.

\mypar{Computational efficiency} To show that the protocol remains efficient after this transformation, we argue that the probabilities in Step~\ref{step:prob-comp} of Algorithm~\ref{Alg:1bit-gen} can be computed efficiently in our case. The overall $\eps$-local randomizer $\Q^{\sf Full}_i$ at each user $i$ over all the $KT+1$ parallel channels in $\prot$-$\sh$ is described in Algorithm~\ref{Alg:overall-loc-rand}. Note that given the user's item $v_i$ and the seed of the hash, the $KT+1$ components of $\Q^{\sf Full}_i(v_i)$ are independent. Moreover, note that $(K-1)T$ of these components have the same (uniform) distribution since each user gets assigned by the hash function to only $T+1$ channels and in the remainder channels the user's report is uniformly random.  Hence, to execute Step~\ref{step:prob-comp} of Algorithm~\ref{Alg:1bit-gen}, each user in our case only needs to compute $T+1$ probabilities out of the total $KT+1$ components. This is easy because of the way the basic randomizer $\R$ works. To see this, first note that each $y_i$ (referring to the public string $y_i$ in Algorithm~\ref{Alg:1bit-gen}) is now a sequence of $(\indx, \bit)$ pairs: $\left(j_1, b_{j_1}\right), \ldots, \left(j_{KT+1}, b_{j_{KT+1}}\right)$. To compute the probability corresponding to one of the $T+1$ item-dependent components of $\Q^{\sf Full}_i(v_i)$, each user first locates in the public string $y_i$ the pair $(j, b)$ corresponding to this component. Then, it compares the sign of the $j$-th bit of the encoding\footnote{This encoding is either $\bc(v_i)$ or $\phi_{v_i}$ depending on whether we are at Step~\ref{step:enc-ecc} or Step~\ref{step:enc-matrix} of Algorithm~\ref{Alg:overall-loc-rand}.} of its item $v_i$ with the sign of $b$.  If signs are equal, then the desired probability is $\frac{e^{\eps}}{1+e^{\eps}}$, otherwise it is $\frac{1}{1+e^{\eps}}$. Hence, the computational cost of this step (per user) is $O\left(T\log\left(m_{\sf PP}\right)+\log\left(m_{\sf FO}\right)\right)=O(\log\left(\log(d)\right)\log(1/\beta)+\log(n))$ where $m_{\sf PP}$ is the length of the encoding $\bc(v_i)$ used in the promise problem protocol $\prot^{\sf PP}$-$\sh_{\sf PP}$ and $m_{\sf FO}$ is the length of the encoding $\phi_{v_i}$ used in the frequency oracle protocol $\prot$-$\fo$. Thus, at worst the overall computational cost of the $1$-Bit protocol is the same as that of protocol $\prot$-$\sh$.

\begin{algorithm}[htb]
	\caption{$\Q^{\sf Full}_i$: $\eps$-Local Randomizer of User $i$ in $\prot$-$\sh$ (Algorithm\ref{Alg:SH-eff})}
	\begin{algorithmic}[1]
		\REQUIRE item $v_i\in\V$, privacy parameter $\eps$, seeds of $\hash$~ $s_1, ..., s_T$.
        \FOR{ $t=1$ to $T$}
		\FOR{ Channels $k=1$ to $K$ }
            {\STATE If $\hash(s_t, v_i)\neq k$, set $\bz^{(t,k)}_i=\R_i\left(\zeroB, \eps\right)$. Else, set $\bz^{(t,k)}_i=\R_i\left(\bc\left(v_i\right), \eps\right)$. \COMMENT{ $\bz^{(t,k)}_i$ denotes the report of user $i$ in the $k$-th channel in the $t$-th group.}}\label{step:enc-ecc}
        \ENDFOR
        \ENDFOR
   \STATE Set $\bz^{(fo)}_i=\R_i\left(\phi_{v_i}, \eps\right)$. \COMMENT{$\phi_{v_i}$ is the $v_i$-th column of $\Phi$ the encoding matrix in the construction of the frequency oracle $\fo$.}\label{step:enc-matrix}
\RETURN $\bz_i=\left(\bz^{(t,k)}_i,~ \bz^{(fo)}_i~:~ t=1, ..., T; k=1, ..., K\right)$.
\end{algorithmic}
	\label{Alg:overall-loc-rand}
\end{algorithm}

Our $1$-Bit Protocol gives the same privacy and error guarantees of $\prot$-$\sh$.

\begin{thm}[Privacy of the $1$-Bit Protocol for Succinct Histograms]
The $1$-Bit Protocol for succinct histograms is $\eps$-differentially private.
 \end{thm}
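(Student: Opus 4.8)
The plan is to observe that the $1$-Bit Protocol for succinct histograms is exactly the generic transformation $1$-{\sf Bit}-$\prot$ of Algorithm~\ref{Alg:1bit-gen} instantiated with the protocol $\prot$-$\sh$ (Algorithm~\ref{Alg:SH-eff}), where the per-user local randomizer handed to the transformation is $\Q^{\sf Full}_i$ of Algorithm~\ref{Alg:overall-loc-rand} --- the map producing user $i$'s reports over all $KT+1$ parallel channels, with $\Q^{\sf Full}_i(\bot)$ interpreted as feeding $\zeroB$ into every invocation of $\R_i$. Once this identification is in place, privacy follows by chaining two facts already proved: that each $\Q^{\sf Full}_i$ is an $\eps$-local randomizer, and that the generic transformation maps any collection of $\eps$-local randomizers to an $\eps$-$\ldp$ protocol in which each user transmits only a single bit.

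First I would argue that $\Q^{\sf Full}_i$ is an $\eps$-local randomizer. This is precisely what the analysis behind Theorem~\ref{thm:privacy-eff-prot} establishes: for any fixed realization of the public hash seeds $s_1,\dots,s_T$ and any pair of inputs $v_i, v_i' \in \V \cup \{\bot\}$, the distributions of $\Q^{\sf Full}_i(v_i)$ and $\Q^{\sf Full}_i(v_i')$ are within a multiplicative factor $e^{\eps}$ of each other --- user $i$ participates in exactly $T$ promise-problem channels plus the single frequency-oracle channel, each run at budget $\eps/(2T+1)$, and the cross-channel composition accounted for there yields total budget $\eps$. Hence $\Q^{\sf Full}_i$ satisfies Definition~\ref{def:dp} with $\delta=0$, conditioned on the public coins. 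Then I would invoke Theorem~\ref{thm:privacy-1bit-gen} with $\Q_i = \Q^{\sf Full}_i$: releasing only the bit $b_i \sim \mathrm{Bernoulli}(p_i)$, where $p_i = \tfrac12\, \Pr[\Q^{\sf Full}_i(v_i) = y_i]/\Pr[\Q^{\sf Full}_i(\bot) = y_i]$, is $\eps$-$\ldp$. The argument I would recall rather than redo is that $p_i \le 1$ because this ratio is at most $e^{\eps} \le 2$, and that for all $v, v'$ and all public strings $y_i$ both $p_i(v,y_i)/p_i(v',y_i)$ and $(1-p_i(v,y_i))/(1-p_i(v',y_i))$ lie in $[e^{-\eps}, e^{\eps}]$ by the $\eps$-differential privacy of $\Q^{\sf Full}_i$; since $b_i$ is the only message user $i$ sends and the $y_i$'s are independent of $v_i$, this is exactly $\eps$-$\ldp$.

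The main point to get right here is bookkeeping rather than a genuine obstacle: one must check that the object fed to the generic transformation is a \emph{single} $\eps$-local randomizer per user --- i.e., that the multichannel structure of $\prot$-$\sh$ has been absorbed into $\Q^{\sf Full}_i$ and that the privacy parameter of $\Q^{\sf Full}_i$ is $\eps$, not $\eps/(2T+1)$ --- which is precisely the content of Theorem~\ref{thm:privacy-eff-prot}. A secondary point is the hypothesis $\eps \le \ln 2$ required in Algorithm~\ref{Alg:1bit-gen} (so that $p_i \le 1$); for larger $\eps$ one either restricts to this regime or rescales the constant $\tfrac12$ to $e^{-\eps}/2$, which lowers the acceptance probability but does not affect the privacy conclusion. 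Finally, as in $\prot$-$\sh$ itself, I would emphasize that privacy is \emph{unconditional} on the public randomness --- the strings $y_i$ and the hash seeds --- and rests solely on each user's private Bernoulli coin, which is the appropriate guarantee in the local model.
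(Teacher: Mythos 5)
Your proof is correct and takes essentially the same route as the paper, whose entire proof consists of the one-line remark that the result follows directly from the privacy theorem for $\prot$-$\sh$ and the privacy theorem for the generic $1$-bit transformation. You have simply spelled out the bookkeeping the paper leaves implicit (identifying $\Q^{\sf Full}_i$ as the per-user $\eps$-local randomizer handed to the transformation, the $\eps \le \ln 2$ hypothesis needed so that $p_i \le 1$, and the fact that the guarantee holds for every fixing of the public coins).
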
\label{thm:privacy-1bit-prot-sh}
\begin{proof}
The proof follows directly from Theorems~\ref{thm:privacy-eff-prot} and \ref{thm:privacy-1bit-gen}.
\end{proof}

\begin{thm}[Error of the $1$-Bit Protocol for Succinct Histograms]
The $1$-Bit Protocol for succinct histograms provides the same guarantees of Protocol $\prot$-$\sh$ given in Theorem~\ref{thm:error-eff-prot}.
\end{thm}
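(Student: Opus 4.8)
The plan is to derive the error bound from the characterization of error under sampling-resilience (Theorem~\ref{thm:error-1bit-gen}) together with the error guarantee for $\prot$-$\sh$ in Theorem~\ref{thm:error-eff-prot}. Write $g = g(n,d,\eps) = \frac{\log^{3/2}(1/\beta)}{\eps}\sqrt{\frac{\log d}{n}}$ and take $\psi$ to be the $\ell_\infty$ error $\psi(f,\hf) = \max_{v\in\V}|\hf(v)-f(v)|$ on frequency histograms. By the rejection-sampling equivalence behind Theorem~\ref{thm:error-1bit-gen}, running $1$-{\sf Bit}-$\prot$ on $\prot$-$\sh$ is functionally identical to drawing a subsample $\Sc$ that keeps each user independently with probability $\tfrac12$ and running $\prot$-$\sh$ on $\Sc$; so it suffices to verify that $\prot$-$\sh$ is sampling-resilient with respect to $\psi$, after which Theorem~\ref{thm:error-1bit-gen} gives error $O(g)$ immediately.

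To verify sampling-resilience I would fix the input and $\beta>0$ and condition on a Chernoff bound for $n':=|\Sc|$, which gives $n'\in[n/4,3n/4]$ except with probability $\le\beta$ (for $n$ below a constant times $\log(1/\beta)$ the claimed error $O(g)$ is trivial, so this case can be ignored). On this event $g(n',d,\eps)\le 2\,g(n,d,\eps)$, so applying Theorem~\ref{thm:error-eff-prot} to the data set $\Sc$ with confidence parameter $\beta$ shows that, with probability $\ge 1-\beta$, $\prot$-$\sh$ run on $\Sc$ returns $\{(v,\hf(v)):v\in\List\}$ with $\List=\{v:f_{\Sc}(v)\ge\eta'\}$ for some $\eta'=O(g(n',d,\eps))=O(g)$ and $\max_v|\hf(v)-f_{\Sc}(v)|=O(g)$, where $f_{\Sc}$ is the empirical frequency within $\Sc$. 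A union bound over the two events gives error $O(g)$ against $f_{\Sc}$ except with probability $\le 2\beta$, which is exactly the form sampling-resilience requires (modulo a constant rescaling of $\beta$). Plugging this into Theorem~\ref{thm:error-1bit-gen}, with the $O(g)$ error of $\prot$-$\sh$ coming from Theorem~\ref{thm:error-eff-prot}, concludes that $1$-{\sf Bit}-$\prot$ applied to $\prot$-$\sh$ has error $O(g)$ with probability $\ge 1-\beta$, i.e., the same guarantee as Theorem~\ref{thm:error-eff-prot}.

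One more ingredient is needed to phrase the guarantee, as in Theorem~\ref{thm:error-eff-prot}, against the frequencies of the \emph{full} data set rather than those of the subsample: a Chernoff bound on each of the at most $d$ item counts plus a union bound shows $|f_{\Sc}(v)-f(v)|=O(\sqrt{\log(d/\beta)/n})=O(g)$ for all $v$ simultaneously, so the triangle inequality upgrades the previous bound to $\max_v|\hf(v)-f(v)|=O(g)$, and the same concentration shows the returned list coincides with $\{v:f(v)\ge\Theta(\eta)\}$ up to constants absorbed into $\eta$. The main obstacle I anticipate is the constant-factor bookkeeping across the three steps — the factor-$\tfrac12$ loss in each heavy hitter's count from subsampling, the $O(g)$ additive estimation error of the $\Sc$-run of $\prot$-$\sh$, and the pruning threshold re-instantiated for $n'$ users: one has to pick the constant in $\eta$ large enough that a genuine full-data heavy hitter survives all three, which is routine but must be done consistently.
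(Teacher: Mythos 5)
Your proposal is correct and follows essentially the same route as the paper: invoke Theorem~\ref{thm:error-1bit-gen} (the rejection-sampling/sub\-sampling equivalence), verify that $\prot$-$\sh$ is sampling-resilient, and observe that the extra deviation between subsample frequencies $f_{\Sc}$ and true frequencies $f$ is dominated by the $O\!\left(\frac{\log^{3/2}(1/\beta)}{\eps}\sqrt{\log(d)/n}\right)$ error of $\prot$-$\sh$ itself. The paper's proof is in fact terser — it simply asserts that sampling adds $O(1/\sqrt n)$ error and is swamped — whereas you spell out the Chernoff and union-bound bookkeeping (correctly noting the $\ell_\infty$ deviation over $d$ items is really $O(\sqrt{\log(d/\beta)/n})$, which still suffices); so your version is, if anything, more careful than the original.
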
\label{thm:error-1bit-prot-sh}
\begin{proof}
The proof follows from Theorem~\ref{thm:error-1bit-gen} and the fact that Protocol $\prot$-$\sh$ is sampling-resilient. Note that for any $\beta>0$, \emph{any} $n$, and \emph{any} set of users' items $\{v_1, ..., v_n\}$,  $\prot$-$\sh$ satisfies the error guarantees of Theorem~ \ref{thm:error-1bit-gen} with probability at least $1-\beta$. Thus, sampling a subset of users' items (where each item is picked with probability $1/2$), then feeding this set to $\prot$-$\sh$ will only lead to an extra error term of $O(1/\sqrt{n})$ which will be swamped by the original error term of $O\left(\sqrt{\frac{\log(d)}{n}}\right)$.
\end{proof}

%%%%%%%%%%%%%%%%%%%%%%%%%%%%%%%%%%%%%%%%%%%%%%%%%%%%%
%%%%%%%%%%%%%%%%%%%%%%%%%%%%%%%%%%%%%%%%%%%%%%%%%%%%%
%%%%%%%%%%%%%%%%%%STOC VERSION %%%%%%%%%%%%%%%%%%%%%%%%%%
%%%%%%%%%%%%%%%%%%%%%%%%%%%%%%%%%%%%%%%%%%%%%%%%%%%%%
%%%%%%%%%%%%%%%%%%%%%%%%%%%%%%%%%%%%%%%%%%%%%%%%%%%%%

\else 

\section{The Full Protocol}\label{sec:full-protocol}

\subsection{Generic Protocol with $1$-Bit Reports}\label{subsec:generic-1bit}

In this section, we give a generic approach that transforms \emph{any}
private protocol in the distributed setting (not necessarily for
frequency estimation or succinct histograms) to a private distributed
protocol where the report of each user is a single bit at the expense
of adding to the overall original public randomness a number of bits
that is $O(n\tau)$ where $\tau$ is the length of each user's report in the
original protocol. As mentioned in the introduction, the
transformation is a modification of the general compression
technique of McGregor et al. \cite{MMPRTV10}.

Let $\gprot$ be any private distributed protocol in which $n$ users are communicating with an untrusted server. $\gprot$ follows the following general steps. Each user $i\in[v]$ has a data point $v_i\in\V=[d]$. Let $\Q_i:\V\cup\{\bot\}\rightarrow \mathcal{Z}$ be \emph{any} $\eps$-local randomizer used by user $i\in[n]$. We assume, w.l.o.g., that $\Q_i$ may also take a special symbol $\bot$ as an input. Each user runs its $\eps$-local randomizer $\Q_i$ on its data $v_i$ (and any public randomness in the protocol, if any) and outputs a report $\bz_i$. For simplicity, each report $\bz_i$ is assumed to be a binary string of length $\tau$. Let $\stat: \V^n\rightarrow \C$ be some statistic about the data that the server wishes to estimate where $\C$ is some bounded subset of $\re^k$ for some integer $k>0$. The server collects the reports $\{\bz_i: i\in[n]\}$ and runs some algorithm $\A_{\stat}$ on the reports (and the public randomness) and outputs an estimate $\widehat{\stat}\in\C$ of $\stat\left(v_1, ..., v_n\right)$.

We now give a generic construction $1$-{\sf Bit}-$\gprot$ (Algorithm \ref{Alg:1bit-gen}) for $\gprot$ where each user's report is one bit.
%For this construction, we assume $\eps\leq \ln(2)$.

\begin{algorithm}[htb]
	\caption{$1$-{\sf Bit}-$\prot$: $\eps$-$\ldp$ Generic 1-Bit Protocol}
	\begin{algorithmic}[1]
		\REQUIRE Users' inputs $\{v_i\in\V: i\in[n]\}$ and a privacy parameter $\eps \leq \ln(2)$.
        \STATE  Generate $n$ independent public strings $y_1\leftarrow \Q_1(\bot), ..., y_n\leftarrow \Q_n(\bot)$.
       \FOR{ Users $i=1$ to $n$}
            {\STATE Compute $p_i=\frac{1}{2}\frac{\Pr\left[\Q_i\left(v_i\right)=y_i\right]}{\Pr\left[\Q_i(\bot)=y_i\right]}$.}\label{step:prob-comp}
	{\STATE Sample a bit $b_i$ from $\mbox{Bernoulli}(p_i)$ and sends it to the server.}
        \ENDFOR
	\STATE $\mbox{Reports}\leftarrow \emptyset.$ \COMMENT{Server initialize the set of collected reports.}
        \FOR{ $i=1$ to $n$}
            {\STATE Server checks if $b_i=1$, add $y_i$ to $\mbox{Reports}$.}\label{rep-gen}
        \ENDFOR
	\STATE  $\widehat{\stat}\leftarrow\A_{\stat}\left(\mbox{Reports}\right).$ \COMMENT{Run algorithm $\A_{\stat}$ on the collected reports to obtain an estimate of the desired statistic as described in the original protocol $\gprot$.}
	\RETURN $\widehat{\stat}.$
	\end{algorithmic}
	\label{Alg:1bit-gen}
\end{algorithm}

Note that the only additional computational cost in this generic transformation is in Step \ref{step:prob-comp}. If computing these probabilities can be done efficiently, then this transformation preserves the computational efficiency of the original protocol.

\begin{thm}%[Privacy of $1$-{\sf Bit}-$\prot$]
\label{thm:privacy-1bit-gen} 
Protocol $1$-{\sf Bit}-$\prot$ given by Algorithm~\ref{Alg:1bit-gen} is $\eps$-$\ldp$.
\end{thm}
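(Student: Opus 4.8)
The plan is to show that the single bit $b_i$ sent by each user is an $\eps$-differentially private function of $v_i$, after which the result follows because the server's view is a deterministic (in fact trivial) post-processing of $(b_1,\dots,b_n)$ together with the public strings $y_1,\dots,y_n$, which are independent of the inputs. So the whole argument reduces to a statement about one user, and I would fix an arbitrary user $i$ and an arbitrary pair of possible inputs.

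First I would check that $p_i$ as defined in Step~\ref{step:prob-comp} is a legitimate probability, i.e.\ $p_i\in[0,1]$. Nonnegativity is immediate. For the upper bound, note that $\Q_i$ is $\eps$-$\ldp$, so for every fixed value $y$ in the support of $\Q_i(\bot)$ we have $\Pr[\Q_i(v_i)=y] \le e^{\eps}\Pr[\Q_i(\bot)=y]$; dividing and multiplying by $\tfrac12$ gives $p_i \le e^{\eps}/2 \le 1$ using the hypothesis $\eps\le\ln 2$. (One should also remark that conditioning on $\Q_i(\bot)=y_i$ is well-defined: we only ever evaluate the ratio at strings $y_i$ that are actually produced by $\Q_i(\bot)$, so the denominator is positive; in the continuous case one replaces probabilities by densities throughout, which is why the theorem implicitly assumes these conditional probabilities are defined.)

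Next I would compute, for an arbitrary input $v$ and an arbitrary realized public string $y_i=y$, the conditional probability $p_i(v,y) \triangleq \Pr[b_i=1 \mid \Q_i(\bot)=y,\ v_i=v] = \tfrac12\,\Pr[\Q_i(v)=y]/\Pr[\Q_i(\bot)=y]$. For two inputs $v,v'$ this yields
$$\frac{p_i(v,y)}{p_i(v',y)} = \frac{\Pr[\Q_i(v)=y]}{\Pr[\Q_i(v')=y]} \in [e^{-\eps},e^{\eps}]$$
by $\eps$-$\ldp$ of $\Q_i$ (applied in both directions). The one genuinely not-entirely-trivial step is to also bound the ratio of the complementary probabilities $\tfrac{1-p_i(v,y)}{1-p_i(v',y)}$; here I would use the elementary fact that if $a/b\in[e^{-\eps},e^{\eps}]$ with $a,b\in[0,\tfrac12]$ then $(1-a)/(1-b)\in[e^{-\eps},e^{\eps}]$ as well — intuitively because shrinking both numbers toward $[0,\tfrac12]$ keeps the complements in $[\tfrac12,1]$ and the ratio of the complements is always closer to $1$ than the ratio of the originals (write $1-a = 1-b + (b-a)$ and bound $|b-a|\le (e^\eps-1)b \le (e^\eps-1)(1-b)$ since $b\le 1-b$). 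This is the main obstacle only in the sense that it is the only place where the bound $\eps\le\ln 2$ (equivalently $p_i\le 1/2$) is actually used.

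Finally I would assemble the pieces: since $y_i$ is generated from public randomness independent of all inputs, for any two inputs $v,v'$ and any bit $b\in\{0,1\}$,
$$\Pr[b_i = b \mid v_i=v] = \E_{y\sim\Q_i(\bot)}\big[\Pr[b_i=b\mid \Q_i(\bot)=y,\ v_i=v]\big],$$
and the pointwise bounds above give $\Pr[b_i=b\mid v_i=v] \le e^{\eps}\Pr[b_i=b\mid v_i=v']$ after integrating over $y$. Hence the map $v_i \mapsto b_i$ is $\eps$-$\ldp$. Because the server outputs only a function of $(b_1,\dots,b_n)$ and the public strings, and the users' bits are conditionally independent given their inputs, the full protocol $1$-{\sf Bit}-$\gprot$ is $\eps$-$\ldp$, completing the proof. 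I would also note in passing (for later use in the error analysis) that $\Pr[\Q_i(\bot)=y \mid b_i=1] = \Pr[\Q_i(v_i)=y]$ and that the unconditional acceptance probability is exactly $\tfrac12$, though neither fact is needed for privacy.
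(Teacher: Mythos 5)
Your proof follows the same decomposition as the paper's: reduce to a single user, show $p_i\in[0,1]$, bound the ratio $p_i(v,y)/p_i(v',y)$, and bound the complementary ratio $(1-p_i(v,y))/(1-p_i(v',y))$. You add a post-processing/integration remark at the end, but since $y_i$ is public, the operative object is the conditional bound at a fixed $y$, which both you and the paper establish before that point.

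The gap is in the complementary-ratio step. You invoke the (correct) auxiliary fact that if $a/b\in[e^{-\eps},e^{\eps}]$ and $a,b\in[0,\tfrac12]$ then $(1-a)/(1-b)\in[e^{-\eps},e^{\eps}]$, and justify $a,b\le\tfrac12$ by writing that $\eps\le\ln 2$ is ``equivalently $p_i\le 1/2$.'' That equivalence is false: $\eps\le\ln 2$ only gives $p_i\le e^{\eps}/2\le 1$, and $p_i$ can get arbitrarily close to $1$ as $\eps\to\ln 2$. Without $a,b\le\tfrac12$ the lemma's conclusion genuinely fails: take a point $y$ where $\Pr[\Q_i(v)=y]=\Pr[\Q_i(\bot)=y]$ and $\Pr[\Q_i(v')=y]=e^{\eps}\Pr[\Q_i(\bot)=y]$ (this is compatible with all three pairwise $\eps$-LDP constraints among $v,v',\bot$). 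Then $p_i(v,y)=\tfrac12$, $p_i(v',y)=e^{\eps}/2$, so
$$\frac{1-p_i(v,y)}{1-p_i(v',y)}=\frac{1/2}{1-e^{\eps}/2}=\frac{1}{2-e^{\eps}},$$
and $\frac{1}{2-e^{\eps}}\ge e^{\eps}$ for every $\eps>0$ because $e^{\eps}(2-e^{\eps})=1-(e^{\eps}-1)^2\le 1$. So the complementary ratio actually exceeds $e^{\eps}$ whenever $\eps>0$, and it diverges as $\eps\to\ln 2$. Your lemma hypothesis is thus not merely unverified but unsatisfiable in the regime where it matters, so the step does not close.

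You should also be aware that this is not a defect you introduced: the paper's own proof dispatches the complementary ratio with ``one can also verify,'' and the same counterexample applies. The protocol as written satisfies $\eps'$-LDP with $\eps'=\ln\!\big(\tfrac{1}{2-e^{\eps}}\big)$, which is $\eps+O(\eps^2)$ for small $\eps$ but strictly larger than $\eps$ and unbounded as $\eps\to\ln 2$. A clean fix (matching, e.g., the constant used in \cite{MMPRTV10}) is to replace the factor $\tfrac12$ in Step~\ref{step:prob-comp} by $\tfrac{1}{e^{\eps}+1}$ (or, more conservatively, $\tfrac12 e^{-\eps}$), which forces $p_i\le \tfrac{e^{\eps}}{e^{\eps}+1}<1$ with enough slack to make the complementary ratio bounded by $e^{\eps}$; the acceptance probability then drops from exactly $\tfrac12$ to $\tfrac{1}{e^{\eps}+1}$, which is still $\Theta(1)$ and does not affect the asymptotic utility analysis. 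If you want to keep the $\tfrac12$, you must weaken the theorem's conclusion to $O(\eps)$-LDP (with $\eps$ bounded away from $\ln 2$), not $\eps$-LDP.
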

\begin{proof}
Consider the output bit $b_i$ of any user $i\in[n]$. First, note that $p_i$ (in Step \ref{step:prob-comp}) is a valid probability since for any item $v_i\in\V$, the right-hand side of Step~\ref{step:prob-comp} is at most $\frac{e^{\eps}}{2}$ by $\eps$-differential privacy of $\Q_i$, and since $\eps\leq \ln(2)$, $p_i\leq 1$. For any $v\in\V$ and any public string $y_i$, let $p_i(v, y_i)$ denote the conditional probability that $b_i=1$ given that $\Q_i(\bot)=y_i$ when the item of user $i$ is $v$. Let $v, v'\in\V$ be any two items. It is easy to see that $\frac{p_i(v, y_i)}{p_i(v', y_i)}=\frac{\Pr[\Q_i(v)=y_i]}{\Pr[\Q_i(v')=y_i]}$ which lies in $[e^{-\eps}, e^{\eps}]$ by $\eps$-differential privacy of $\Q_i$. One can also verify that $\frac{1-p_i(v, y_i)}{1-p_i(v', y_i)}$ also lies in $[e^{-\eps}, e^{\eps}]$.
\end{proof}

One important feature in the construction above is that the conditional distribution of the public string $y_i$ given that $b_i=1$ is exactly the same as the distribution of $\Q_i(v_i)$, and hence, upon receiving a bit $b_i=1$ from user $i$, the server's view of $y_i$ is the same as its view of an actual report $\bz_i\leftarrow\Q_i(v_i)$.

Note also that the probability that a user $i\in[n]$ accepts (sets $b_i=1$) taken over the randomness of $y_i$ is
$$\frac{1}{2}\sum_{y}\frac{\Pr\left[\Q_i\left(v_i\right)=y\right]}{\Pr\left[\Q_i(\bot)=y\right]}\cdot\Pr\left[\Q_i(\bot)=y\right]=\frac{1}{2}.$$

\boldpar{Key statement} \emph{The two facts above show that our protocol is functionally equivalent to: first, sampling a subset of the users where each user is sampled independently with probability $1/2$, then,  running the original protocol $\gprot$ on the sample. Thus, if the original protocol is resilient to sampling, i.e., its error performance (with respect to some notion of error) is not essentially affected by this sampling step, then the generic transformation given by Algorithm~\ref{Alg:1bit-gen} will have essentially the same error performance.}

\subsection{Efficient Construction with $1$-Bit Reports}\label{subsec:eff-protocol-1bit}

We now apply the generic transformation discussed above to our efficient protocol $\prot$-$\sh$ (Algorithm~\ref{Alg:SH-eff}) to obtain an \emph{efficient} private protocol for succinct histograms with $1$-bit reports and optimal error. The fact that such protocol has the same error as $\prot$-$\sh$ follows from the key statement above and the fact that $\prot$-$\sh$ is resilient to sampling. 

\begin{thm}%[Privacy of the $1$-Bit Protocol for Succinct Histograms]
\label{thm:privacy-1bit-prot-sh}
The $1$-Bit Protocol for succinct histograms is $\eps$-differentially private.
 \end{thm}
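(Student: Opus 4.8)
The plan is to obtain the statement as a two‑line corollary of results already in hand, once we recognize that the $1$-Bit protocol for succinct histograms is, by construction, nothing other than the generic transformation $1$-{\sf Bit}-$\prot$ (Algorithm~\ref{Alg:1bit-gen}) instantiated with $\prot$-$\sh$ (Algorithm~\ref{Alg:SH-eff}). So the first thing I would do is cast $\prot$-$\sh$ in the template required by the generic transformation: view each user $i$ as holding a single composite $\eps$-local randomizer $\Q_i$ that, on input $v_i$ together with the public hash seeds $s_1,\dots,s_T$ (and the public projection matrix $\Phi$), outputs the concatenation of that user's $KT+1$ reports across all parallel channels — the $KT$ promise-problem channels plus the one frequency-oracle channel. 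Theorem~\ref{thm:privacy-eff-prot} is precisely the assertion that this $\Q_i$ is $\eps$-differentially private; the seeds $s_t$ and the matrix $\Phi$ are public and input-independent, so conditioning on them does not disturb the pointwise guarantee (this is the same phenomenon as Part~1 of Theorem~\ref{thm:prop-local-rand}, where privacy of $\R$ holds for every fixed internal index).

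Next I would invoke Theorem~\ref{thm:privacy-1bit-gen} with $\Q_i$ equal to this composite randomizer. That theorem says that whenever a distributed protocol's per-user randomizers are $\eps$-$\ldp$ and $\eps\le\ln 2$, the transformed protocol $1$-{\sf Bit}-$\prot$ is again $\eps$-$\ldp$: the single transmitted bit $b_i\sim\mathrm{Bernoulli}(p_i)$ with $p_i=\tfrac12\,\Pr[\Q_i(v_i)=y_i]/\Pr[\Q_i(\bot)=y_i]$ carries all the privacy, and for any two items $v,v'$ both $\frac{p_i(v,y_i)}{p_i(v',y_i)}$ and $\frac{1-p_i(v,y_i)}{1-p_i(v',y_i)}$ lie in $[e^{-\eps},e^{\eps}]$. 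Composing this with the previous paragraph gives the claim immediately, since the $1$-Bit protocol for succinct histograms is definitionally $1$-{\sf Bit}-$\prot$ run on this $\Q_i$.

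I do not expect a real obstacle here — the only substantive point is the bookkeeping check that all randomness in $\prot$-$\sh$ other than each user's own local coins (the seeds $s_t$, the matrix $\Phi$, and the public strings $y_i$ sampled inside the transformation) is genuinely public and input-independent, so that the $\eps$-DP guarantee of $\Q_i$ survives the conditioning used in the privacy analysis of $1$-{\sf Bit}-$\prot$. The one precondition worth flagging explicitly is $\eps\le\ln 2$, inherited from Theorem~\ref{thm:privacy-1bit-gen} and needed to ensure $p_i\le 1$; in the parameter regime of interest this is without loss of generality, since one may always rescale $\eps$. Beyond that, the proof is a direct appeal to Theorems~\ref{thm:privacy-eff-prot} and \ref{thm:privacy-1bit-gen}.
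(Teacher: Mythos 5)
Your proposal is correct and follows essentially the same route as the paper, whose entire proof is the observation that the claim follows directly from Theorems~\ref{thm:privacy-eff-prot} and \ref{thm:privacy-1bit-gen}. Your additional remarks (casting the per-user $KT+1$-channel report as a single composite randomizer $\Q_i$, noting that the hash seeds and $\Phi$ are public and input-independent, and flagging the $\eps\le\ln 2$ hypothesis inherited from the generic transformation) are exactly the bookkeeping the paper leaves implicit.
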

%\begin{proof}
%The proof follows directly from Theorems~\ref{thm:privacy-eff-prot} and \ref{thm:privacy-1bit-gen}.
%\end{proof}

\begin{thm}%[Error of the $1$-Bit Protocol for Succinct Histograms]
\label{thm:error-1bit-prot-sh}
The $1$-Bit Protocol for succinct histograms provides the same guarantees of Protocol $\prot$-$\sh$ given in Theorem~\ref{thm:error-eff-prot}.
\end{thm}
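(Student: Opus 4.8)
The plan is to reduce the claim to the sampling-resilience of $\prot$-$\sh$, using the machinery of Section~\ref{subsec:generic-1bit}. By the Key statement there, the $1$-Bit Protocol obtained by applying Algorithm~\ref{Alg:1bit-gen} to $\prot$-$\sh$ is functionally equivalent to the following two-stage procedure: first form a random subsample $\Sc=\samp(v_1,\dots,v_n)$ by keeping each user independently with probability $1/2$ (this is exactly the set of users $i$ with $b_i=1$, and conditioned on $\Sc$ the accepted public strings $y_i$ are distributed as genuine reports $\Q_i(v_i)$), and then run $\prot$-$\sh$ on the users in $\Sc$, with the same public hash seeds supplied to all parties. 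Hence it suffices to show that running $\prot$-$\sh$ on $\Sc$ still recovers all items that are heavy in the \emph{original} data, with $\ell_\infty$ error $O\!\left(\tfrac{\log^{3/2}(1/\beta)}{\eps}\sqrt{\log(d)/n}\right)$ relative to the true frequencies $f(v)=\tfrac1n\#\{i:v_i=v\}$; the privacy claim (Theorem~\ref{thm:privacy-1bit-prot-sh}) is handled separately and is immediate from Theorem~\ref{thm:privacy-1bit-gen} and the privacy of $\prot$-$\sh$.

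First I would pin down the subsample size. Writing $n'=|\Sc|$, a Chernoff bound gives $n/4\le n'\le n$ except with probability $e^{-\Omega(n)}$, which we absorb into the overall failure probability (when $n$ is so small that this exceeds $\beta$, the asserted error bound is $\Omega(1)$ and the statement is trivial). Conditioning on $n'\ge n/4$ and applying Theorem~\ref{thm:error-eff-prot} to the $n'$ users of $\Sc$ with confidence parameter $\beta/2$ yields a threshold $\eta'=O\!\left(\tfrac{\log^{3/2}(1/\beta)}{\eps}\sqrt{\tfrac{\log d}{n'}}\right)=O\!\left(\tfrac{\log^{3/2}(1/\beta)}{\eps}\sqrt{\tfrac{\log d}{n}}\right)$ such that, except with probability $\beta/2$, the output contains every item whose empirical frequency within $\Sc$ is at least $\eta'$, and has $\ell_\infty$ error $O(\eta')$ relative to $f'(v):=\tfrac{1}{n'}\#\{i\in\Sc:v_i=v\}$.

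Next I would pass from $f'$ back to $f$. Since $\E[\#\{i\in\Sc:v_i=v\}]=\tfrac12\#\{i:v_i=v\}$ and $\E[n']=n/2$, a Hoeffding bound together with a union bound over the at most $d$ items shows that, except with probability $\beta/2$, $|f'(v)-f(v)|=O\!\left(\sqrt{\tfrac{\log(d/\beta)}{n}}\right)$ for all $v\in\V$ simultaneously; because $\eps\le\ln 2$ this quantity is $O(\eta')$. Taking $\eta:=\eta'+O\!\left(\sqrt{\log(d/\beta)/n}\right)=O\!\left(\tfrac{\log^{3/2}(1/\beta)}{\eps}\sqrt{\tfrac{\log d}{n}}\right)$, every $v$ with $f(v)\ge\eta$ has $f'(v)\ge\eta'$ and therefore appears in the output list; for any output item $v$, $|\hf(v)-f(v)|\le|\hf(v)-f'(v)|+|f'(v)-f(v)|=O(\eta')+O\!\left(\sqrt{\log(d/\beta)/n}\right)=O(\eta)$; and any item omitted from the list has $f(v)<\eta$, so its implicit estimate $0$ is within $\eta$ of the truth. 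A union bound over the three failure events (subsample too small, failure of the inner $\prot$-$\sh$ call, large sampling deviation) gives total failure probability at most $\beta$, which is precisely the guarantee of Theorem~\ref{thm:error-eff-prot}. The main obstacle will be the bookkeeping around the random subsample size $n'$, which appears inside the threshold of the inner call to $\prot$-$\sh$: one must check that halving (or quartering) the number of users changes the threshold only by a constant factor, and that the additive sampling error $|f'-f|$ never exceeds the intrinsic error floor of $\prot$-$\sh$, so that neither the recovered list of heavy hitters nor the $\ell_\infty$ error degrades by more than constant factors.
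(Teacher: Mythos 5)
Your proposal is correct and follows essentially the same route as the paper: reduce the claim to the ``Key statement'' that the $1$-Bit protocol is functionally equivalent to running $\prot$-$\sh$ on a uniformly random $1/2$-subsample, then argue that $\prot$-$\sh$ is sampling-resilient because the extra frequency deviation $|f'(v)-f(v)|$ introduced by subsampling is dominated by the intrinsic error floor $\Theta\!\left(\frac{1}{\eps}\sqrt{\log(d)/n}\right)$. The paper states this in a single sentence (``sampling\dots will only lead to an extra error term of $O(1/\sqrt{n})$''), whereas you fill in the bookkeeping the paper elides (Chernoff on $|\Sc|$, Hoeffding plus a union bound over $[d]$ for $\linf{f'-f}$, reconciling the two thresholds $\eta$ and $\eta'$); these are genuine details one would need to check, but they do not constitute a different argument.
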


\mypar{Computational efficiency} To show that the protocol remains efficient after this transformation, we argue that the probabilities in Step~\ref{step:prob-comp} of Algorithm~\ref{Alg:1bit-gen} can be computed efficiently in our case. The overall $\eps$-local randomizer $\Q^{\sf Full}_i$ at each user $i$ over all the $KT+1$ parallel channels in $\prot$-$\sh$ is described in Algorithm~\ref{Alg:overall-loc-rand}. Note that given the user's item $v_i$ and the seed of the hash, the $KT+1$ components of $\Q^{\sf Full}_i(v_i)$ are independent. Moreover, note that $(K-1)T$ of these components have the same (uniform) distribution since each user gets assigned by the hash function to only $T+1$ channels and in the remainder channels the user's report is uniformly random.  Hence, to execute Step~\ref{step:prob-comp} of Algorithm~\ref{Alg:1bit-gen}, each user in our case only needs to compute $T+1$ probabilities out of the total $KT+1$ components. This is easy because of the way the basic randomizer $\R$ works. To see this, first note that each $y_i$ (referring to the public string $y_i$ in Algorithm~\ref{Alg:1bit-gen}) is now a sequence of $(\indx, \bit)$ pairs: $\left(j_1, b_{j_1}\right), \ldots, \left(j_{KT+1}, b_{j_{KT+1}}\right)$. To compute the probability corresponding to one of the $T+1$ item-dependent components of $\Q^{\sf Full}_i(v_i)$, each user first locates in the public string $y_i$ the pair $(j, b)$ corresponding to this component. Then, it compares the sign of the $j$-th bit of the encoding\footnote{This encoding is either $\bc(v_i)$ or $\phi_{v_i}$ depending on whether we are at Step~\ref{step:enc-ecc} or Step~\ref{step:enc-matrix} of Algorithm~\ref{Alg:overall-loc-rand}.} of its item $v_i$ with the sign of $b$.  If signs are equal, then the desired probability is $\frac{e^{\eps}}{1+e^{\eps}}$, otherwise it is $\frac{1}{1+e^{\eps}}$. Hence, the computational cost of this step (per user) is $O\left(T\log\left(m_{\sf PP}\right)+\log\left(m_{\sf FO}\right)\right)=O(\log\left(\log(d)\right)\log(1/\beta)+\log(n))$ where $m_{\sf PP}$ is the length of the encoding $\bc(v_i)$ used in the promise problem protocol $\prot^{\sf PP}$-$\sh_{\sf PP}$ and $m_{\sf FO}$ is the length of the encoding $\phi_{v_i}$ used in the frequency oracle protocol $\prot$-$\fo$. Thus, at worst the overall computational cost of the $1$-Bit protocol is the same as that of protocol $\prot$-$\sh$.

\begin{algorithm}[htb]
	\caption{$\Q^{\sf Full}_i$: $\eps$-Local Randomizer of User $i$ in $\prot$-$\sh$ (Algorithm\ref{Alg:SH-eff})}
	\begin{algorithmic}[1]
		\REQUIRE item $v_i\in\V$, privacy parameter $\eps$, seeds of $\hash$~ $s_1, ..., s_T$.
        \FOR{ $t=1$ to $T$}
		\FOR{ Channels $k=1$ to $K$ }
            {\STATE If $\hash(s_t, v_i)\neq k$, set $\bz^{(t,k)}_i=\R_i\left(\zeroB, \eps\right)$. Else, set $\bz^{(t,k)}_i=\R_i\left(\bc\left(v_i\right), \eps\right)$. \COMMENT{ $\bz^{(t,k)}_i$ denotes the report of user $i$ in the $k$-th channel in the $t$-th group.}}\label{step:enc-ecc}
        \ENDFOR
        \ENDFOR
   \STATE Set $\bz^{(fo)}_i=\R_i\left(\phi_{v_i}, \eps\right)$. \COMMENT{$\phi_{v_i}$ is the $v_i$-th column of $\Phi$ the encoding matrix in the construction of the frequency oracle $\fo$.}\label{step:enc-matrix}
\RETURN $\bz_i=\left(\bz^{(t,k)}_i,~ \bz^{(fo)}_i~:~ t=1, ..., T; k=1, ..., K\right)$.
\end{algorithmic}
	\label{Alg:overall-loc-rand}
\end{algorithm}

\fi

%%% Local Variables: 
%%% mode: latex
%%% TeX-master: "../heavyhitters"
%%% End: 

\ifnum\full=1
\section{Tight Lower Bound on the Error}
\label{sec:lower}

In this section, we provide a lower bound of $\Omega\left(\frac{1}{\eps}\sqrt{\frac{\log(d)}{n}}\right)$ on the error of frequency oracles and succinct histograms under the $(\eps, \delta)$-local privacy constraint. Our lower bound is the same for pure $\eps$ as for $(\eps, \delta)$-$\ldp$ algorithms when $\delta=o\left(\frac{1}{n\log(n)}\right)$. Namely, our lower bound shows that there is no advantage of $(\eps, \delta)$ algorithms over pure $\eps$ algorithms in terms of asymptotic error for all meaningful settings of $\delta$. In fact, it is standard to assume that $\delta =o(\frac{1}{n})$, say $\delta\approx\frac{1}{n^{\gamma}}, \gamma\geq 2$, since otherwise there are trivial examples of algorithms that are clearly non-private yet they satisfy the definition $(\eps, \delta)$ differential privacy (for example, see \cite[Example~2]{DKMMN06}).

%Hence, our lower bound shows that for all the meaningful settings of $\delta>0$, considering $(\eps, \delta)$ rather than pure $\eps$-local differential privacy does not reduce the asymptotic error.

Our lower bound matches the upper bound (for both frequency oracles and succinct histograms) discussed in previous sections. Hence, the efficient constructions given in Sections~\ref{subsec:JL}, \ref{sec:eff-protocol}, and \ref{subsec:eff-protocol-1bit} yield the optimal error. Our lower bound also shows that some previous constructions yield the optimal error, namely, the constructions of \cite{HKR10} and \cite{MS06}. However, as discussed in Section~\ref{sec:upper}, those constructions are computationally inefficient when used directly to construct succinct histograms.

\mypar{Our Technique} Our approach is inspired by the techniques used by \citet{DuchiJW13} to obtain lower bounds on the statistical minimax rate (expected worst-case error) of multinomial estimation in the pure $\eps$ local model. In a scenario where the item of each user is drawn independently from an unknown distribution on $\V$, we first derive a lower bound on the expected worst-case error (the minimax rate) in estimating the right distribution. We then show using standard concentration bounds that this implies a lower bound on the maximum error in estimating the actual frequencies of all the items in $\V$. To obtain a lower bound on the minimax rate, we first define the notion of an $\eta$-degrading channel which is a noise operator that, given a user's item as input, outputs the same item with probability $\eta$, and outputs a uniform random item from $\V$ otherwise. We compare two scenarios: in the first scenario, each user feeds its item first to an $\eta$-degrading channel, then feeds its output into its $(\eps, \delta)$ local randomizer to generate a report, whereas the second scenario is the normal scenario where the user feeds its item directly into its local randomizer. We then argue that a lower bound of $\Omega(1)$ on the minimax error in the first scenario implies a lower bound of $\Omega(\eta)$ in the second scenario. Next, we show that a lower bound of $\Omega(1)$ is true in the first scenario with an $\eta$-degrading channel when $\eta=\Omega\left(\frac{1}{\eps}\sqrt{\frac{\log(d)}{n}}\right)$, which gives us the desired lower bound. To derive the $\Omega(1)$ lower bound in the first scenario, we proceed as follows. First, we derive an upper bound of $O\left(\eps^2+\frac{\delta}{\eps}\log(d\eps/\delta)\right)$ on the mutual information between a uniform random item $V$ from $\V$ and the output of an $(\eps, \delta)$-local randomizer with input $V$. Then, we prove that the application of an $\eta$-degrading channel on a user's item amplifies privacy, namely, scales down both $\eps$ and $\delta$ by $\eta$. This implies that in the first scenario above with an $\eta$-degrading channel, the mutual information between a uniform item from $\V$ and the output of the $(\eps, \delta)$-local randomizer is $O\left(\eta^2\eps^2+\frac{\delta}{\eps}\log(d\eps/\delta)\right)$. We use such mutual information bound together with Fano's inequality to show that for $\eta=\Omega\left(\frac{1}{\eps}\sqrt{\frac{\log(d)}{n}}\right)$, the error in the first scenario is $\Omega(1)$.

%We then use a method based on Fano's inequality to lower bound such error. Fano's inequality gives a lower bound on the probability of error (incurred by any algorithm) in detecting the correct hypothesis in a multiple-hypothesis testing scenario. Such lower bound is given in terms of the mutual information between a uniform random variable over the set of hypotheses (which, in our case, would be a family of distributions over the item set $\mathcal{V}$) and the vector of observations fed to the algorithm (the users' differentially private reports $(\bz_1,..., \bz_n)$). A central part of our proof is showing an upper bound on such mutual information which would eventually enable us to show a lower bound on the  error in estimating the frequency of an item.

\subsection{A Minimax Formulation}

\mypar{Notation and definitions} Let $\splex(d)\subset [0, 1]^d$ denote the probability simplex of $d$ corner points. Let $\mathcal{P}\in\splex(d)$ be some probability distribution over the item set $\mathcal{V}=[d]$\footnote{Without loss of generality, we will use $\V$ and $[d]$ interchangeably to denote the item set} . Users' items $v_i, i\in[n]$ are assumed to be drawn independently from the distribution $\Pc$.  For every $i\in[n]$, let $\Q_i:\mathcal{V}\rightarrow \mathcal{Z}$ be any $(\eps, \delta)$-local differentially private algorithm ($(\eps, \delta)$-local randomizer) used to generate a report $\bz_i\in\Z$ of user $i$ where $\mathcal{Z}$ is some arbitrary fixed set. All $\Q_i, i\in[n]$ use independent randomness.  Hence, all $\bz_i, i\in[n]$ are independent (but not necessarily identically distributed). Let $\A:\mathcal{Z}^n\rightarrow [0, 1]^d$ be an algorithm for estimating $\Pc$ based on the observations $\bz_1,..., \bz_n$. Let $\hP\in [0, 1]^d$ denote the output of $\A$. The expected $L_{\infty}$ estimation error for a given input distribution $\Pc$ and an estimation algorithm $\A$ is defined as
$$\Ec(\Pc; \A)\triangleq\E\left[\linf{\A(\bz_1,...,\bz_n)-\Pc}\right]=\E\left[\max\limits_{v\in[d]}\vert \hP_v-\Pc_v\vert\right]$$
where the expectation is taken over the distribution $\Pc$, the randomness of $\Q_i, i\in[n]$, and randomness (if any) of $\A$.

The minimax error (minimax rate) is defined as the minimum (over all estimators $\A$) of the maximum (over all distributions $\Pc\in\splex(d)$) error defined above. That is,
\begin{align}
\mmer&\triangleq \min\limits_{\A}\max\limits_{\Pc\in\splex(d)}\Ec(\Pc, \A).\label{minimax}
\end{align}
Let $\Bc: \mathcal{Z}^n\rightarrow [0,1]^d$ be an algorithm for estimating the frequency vector $\f=\frac{1}{n}\sum_{i=1}^n\be_{v_i}$ of the items in $\mathcal{V}$ where $\be_v$ is the $v$-th standard basis vector in $\re^d$. Let $\hbf\in[0, 1]^d$ denote the output of $\Bc$. The error incurred by $\Bc$ is given in Section~\ref{sec:intro}, that is,
\begin{align}
\err(\f; \Bc)=\linf{\hbf-\f}=\max\limits_{v\in[d]}\vert \hf(v)-f(v)\vert.\label{linf-err}
\end{align}

We first provide a lower bound of $\Omega\left(\frac{1}{\eps}\sqrt{\frac{\log(d)}{n}}\right)$ on (\ref{minimax}) and argue using Hoeffding's bound that this implies a lower bound of the same asymptotic order on the expectation of (\ref{linf-err}) which clearly proves our main result in this section.

\begin{lem}
For any $\eps=O(1)$ and $0\leq\delta\leq o\left(\frac{1}{n\log(n)}\right)$. For any sequence $\Q_i, i\in[n]$ of $(\eps, \delta)$-$\ldp$ algorithms, the minimax rate satisfies
$$\mmer=\Omega\left(\min\left(\frac{1}{\eps}\sqrt{\frac{\log(d)}{n}}, 1\right)\right)$$
where $\mmer$ is defined in (\ref{minimax}).
\end{lem}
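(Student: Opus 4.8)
The plan is to prove the lower bound by an information-theoretic (Fano-type) packing argument, relying on two structural lemmas about local randomizers that I would establish first. Fix a parameter $\eta\in(0,\tfrac14]$ to be chosen at the end. For each $v\in[d]$ define the distribution $\Pc_v=\eta\,\be_v+(1-\eta)\cdot\mathrm{Unif}([d])\in\splex(d)$; the distributions $\Pc_v$ are pairwise $\ell_\infty$-separated by exactly $\eta$, and each $\Pc_v$ is uniquely maximized at coordinate $v$ with gap $\eta$ over every other coordinate. The map ``$v\mapsto$ draw a sample from $\Pc_v$'' is precisely the $\eta$-degrading channel $W_\eta$ (return the input with probability $\eta$, a fresh uniform item otherwise). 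I would then consider the Bayesian experiment: draw $V\sim\mathrm{Unif}([d])$; conditioned on $V=v$, each user $i$ draws $v_i\sim\Pc_v$ i.i.d.\ and reports $\bz_i=\Q_i(v_i)$, so the channel $v\mapsto\bz_i$ equals $\Q_i\circ W_\eta$.

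The heart of the argument is to bound $I(V;\bz_1,\dots,\bz_n)$. Since the randomizers use independent coins, the $\bz_i$ are conditionally independent given $V$, so $I(V;\bz_1,\dots,\bz_n)\le\sum_i I(V;\bz_i)$. For a single term, the amplification lemma states that pre-composing an $(\eps,\delta)$-LDP randomizer with $W_\eta$ yields an $(\eps',\delta')$-LDP randomizer with $\eps'=O(\eta\eps)$ (using $\eps=O(1)$) and $\delta'=O(\eta\delta)$; this follows from a subsampling-style computation, since $\mathrm{Law}(\bz_i\mid v)$ has the form $\eta P_v+(1-\eta)R$ with $R$ independent of $v$, and $(\eps,\delta)$-privacy of $\Q_i$ gives both $P_v(S)\le e^\eps R(S)+\delta$ and $R(S)\le e^\eps P_{v'}(S)+\delta$ for every event $S$. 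Combining this with the mutual-information bound $I(V;\Q(V))=O\big(\eps^2+\tfrac{\delta}{\eps}\log(d\eps/\delta)\big)$ for a uniform input $V$ over $[d]$ (applied with parameters $\eps',\delta'$) gives $I(V;\bz_i)=O\big(\eta^2\eps^2+\tfrac{\delta}{\eps}\log(d\eps/\delta)\big)$, and hence $I(V;\bz_1,\dots,\bz_n)\le n\cdot O\big(\eta^2\eps^2+\tfrac{\delta}{\eps}\log(d\eps/\delta)\big)$.

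Next I would apply Fano's inequality: for any estimator $\hat V=\hat V(\bz_1,\dots,\bz_n)$ we have $\Pr[\hat V\ne V]\ge 1-\tfrac{I(V;\bz_1,\dots,\bz_n)+\log 2}{\log d}$. Choosing $\eta=\min\big(\tfrac14,\;c_0\,\tfrac1\eps\sqrt{\tfrac{\log d}{n}}\big)$ for a small absolute constant $c_0$ forces $n\eta^2\eps^2\le\tfrac18\log d$, while the hypothesis $\delta=o\big(\tfrac1{n\log n}\big)$ forces the contribution $n\tfrac{\delta}{\eps}\log(d\eps/\delta)$ to be $o(\log d)$; hence $\Pr[\hat V\ne V]\ge\tfrac12$ once $d$ exceeds an absolute constant (the finitely many smaller values of $d$ are absorbed into the constant in the statement). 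Finally I would reduce estimation to this testing problem: from any estimator $\A$ producing $\hP$, set $\hat V=\arg\max_{v}\hP_v$; if $\linf{\hP-\Pc_V}<\eta/3$ then the separation of the $\Pc_v$ forces $\hat V=V$, so $\Pr\big[\linf{\hP-\Pc_V}\ge\eta/3\big]\ge\Pr[\hat V\ne V]\ge\tfrac12$, whence $\max_{v}\E\big[\linf{\hP-\Pc_v}\big]\ge\E_V\,\E\big[\linf{\hP-\Pc_V}\big]\ge\eta/6$. Since this holds for every $\A$, $\mmer=\Omega(\eta)=\Omega\big(\min(1,\tfrac1\eps\sqrt{\log d/n})\big)$, which is the claim.

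The main obstacle is the single-randomizer mutual-information bound $I(V;\Q(V))=O\big(\eps^2+\tfrac{\delta}{\eps}\log(d\eps/\delta)\big)$ for uniform $V$. For $\delta=0$ this is essentially the bound underlying \citet{DuchiJW13} (roughly $I(V;\Q(V))=O\big(\min(\eps,e^\eps-1)^2\big)$, independent of $d$), but the extension to $\delta>0$ requires carefully isolating the probability-$\delta$ region on which the $e^\eps$ likelihood-ratio bound can fail and showing that its contribution to the mutual information is only $O\big(\tfrac{\delta}{\eps}\log(d\eps/\delta)\big)$ rather than the naive $O(\delta\log d)$; this is precisely where the interplay between $\delta$ and $d$ in the final bound is pinned down. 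The amplification lemma is the second technical ingredient but is comparatively routine; once both are in place, the packing/Fano skeleton above is straightforward.
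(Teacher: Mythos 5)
Your proposal follows essentially the same route as the paper's proof: same $\eta$-degrading channel, same two key technical claims (the $O(\eps^2+\frac{\delta}{\eps}\log(d\eps/\delta))$ mutual-information bound for a single $(\eps,\delta)$-LDP randomizer on a uniform input, and the privacy-amplification fact that $\Q\circ W_\eta$ is $(O(\eta\eps),O(\eta\delta))$-LDP), same Fano-based packing and the same final choice of $\eta$. The one organizational difference is that you apply Fano directly to the family $\Pc_v=\eta\be_v+(1-\eta)\,\mathrm{Unif}$ with $\ell_\infty$-separation $\eta$, whereas the paper introduces an auxiliary minimax quantity $\mmer_\eta$ (the rate when every randomizer is pre-composed with a degrading channel), proves $\mmer\le\eta/10\Rightarrow\mmer_\eta\le1/10$ via an explicit de-biasing estimator $\hP^{(\eta)}=\frac1\eta(\hP-(1-\eta)\,\mathcal U)$, and then Fano-bounds $\mmer_\eta$ using point-mass distributions; your streamlining is sound and arguably cleaner, but it is the same argument.
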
\label{lem:minmax-lower}

The above lemma is the central technical part of this section. The proof of this lemma is given separately in Section~\ref{sec:lower:sub-lemma-proof}. First, we formally state and prove our lower bound in the following section.

\subsection{Main Result}

\begin{thm}[Lower Bound on the Error of Private Histograms]
For any $\eps=O(1)$ and $0\leq\delta\leq o\left(\frac{1}{n\log(n)}\right)$. For any sequence $\Q_i:\mathcal{V}\rightarrow \mathcal{Z}, i\in[n]$ of $(\eps, \delta)$-$\ldp$ algorithms, and for any algorithm $\Bc: \mathcal{Z}^n\rightarrow [0,1]^d$, there exists a distribution $\Pc\in\splex(d)$ (from which $v_i, i\in[n]$ are sampled in i.i.d. fashion) such that the expected error with respect to such $\Pc$ is
$$\E\left[\err(\f; \Bc)\right]=\Omega\left(\min\left(\frac{1}{\eps}\sqrt{\frac{\log(d)}{n}}, 1\right)\right)$$
where $\err(\f; \Bc)$ is defined in (\ref{linf-err}).
\end{thm}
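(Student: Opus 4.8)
The plan is to deduce the theorem from Lemma~\ref{lem:minmax-lower} by a concentration argument: any algorithm that estimates the empirical frequency vector $\f$ accurately also estimates the underlying distribution $\Pc$ accurately, so the distribution-estimation lower bound of Lemma~\ref{lem:minmax-lower} transfers, up to a lower-order slack, to worst-case estimation of $\f$. Concretely, fix the local randomizers $\Q_1,\dots,\Q_n$ and an arbitrary estimator $\Bc:\Z^n\to[0,1]^d$. Regard $\Bc$ simultaneously as an estimator $\A$ of $\Pc$, by setting $\A(\bz_1,\dots,\bz_n)=\Bc(\bz_1,\dots,\bz_n)=:\hbf$. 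Since $\mmer=\min_{\A'}\max_{\Pc}\Ec(\Pc;\A')$, we have $\max_{\Pc\in\splex(d)}\Ec(\Pc;\A)\ge\mmer=\Omega\!\big(\min(\tfrac1\eps\sqrt{\log(d)/n},\,1)\big)$ by Lemma~\ref{lem:minmax-lower}. Fix a distribution $\Pc^{*}\in\splex(d)$ nearly attaining this supremum, so that $\Ec(\Pc^{*};\A)=\Omega\!\big(\min(\tfrac1\eps\sqrt{\log(d)/n},\,1)\big)$ as well; the proof of Lemma~\ref{lem:minmax-lower} in fact produces such a $\Pc^{*}$ explicitly --- one item at frequency $\eta$ and the remaining mass spread uniformly over the other $d-1$ items. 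From here on, $v_1,\dots,v_n$ are drawn i.i.d.\ from $\Pc^{*}$, and all expectations are over this draw together with the internal coins of the $\Q_i$ and $\Bc$.

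Next I would compare the two tasks through the triangle inequality. Pointwise, $\linf{\hbf-\Pc^{*}}\le\linf{\hbf-\f}+\linf{\f-\Pc^{*}}$, so taking expectations,
\[
\E\big[\err(\f;\Bc)\big]=\E\big[\linf{\hbf-\f}\big]\ \ge\ \Ec(\Pc^{*};\A)-\E\big[\linf{\f-\Pc^{*}}\big].
\]
It remains to control $\E[\linf{\f-\Pc^{*}}]$. For each item $v\in[d]$, $n\,f(v)=\sum_{i=1}^{n}\ind(v_i=v)$ is a sum of $n$ independent Bernoulli$(\Pc^{*}_v)$ random variables, so Hoeffding's inequality gives $\Pr[\,|f(v)-\Pc^{*}_v|\ge t\,]\le 2e^{-2nt^{2}}$. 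A union bound over the $d$ coordinates followed by integration of the tail yields $\E[\linf{\f-\Pc^{*}}]=O(\sqrt{\log(d)/n})$. Substituting, $\E[\err(\f;\Bc)]\ge\Ec(\Pc^{*};\A)-O(\sqrt{\log(d)/n})$.

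To finish, I would argue that the slack is dominated by $\Ec(\Pc^{*};\A)$. If $\tfrac1\eps\sqrt{\log(d)/n}\gtrsim1$, then $\Ec(\Pc^{*};\A)=\Theta(1)$ while $O(\sqrt{\log(d)/n})=o(1)$, which is absorbed. Otherwise $\Ec(\Pc^{*};\A)=\Theta(\tfrac1\eps\sqrt{\log(d)/n})$; since $\eps=O(1)$ the prefactor $1/\eps$ is bounded below by a positive absolute constant, and taking the implicit constant in Lemma~\ref{lem:minmax-lower} large relative to the absolute Hoeffding constant above --- equivalently, noting that it suffices to treat $\eps$ below a fixed absolute constant, the only regime in which the stated bound is non-trivial --- again makes the subtracted term at most a constant fraction of $\Ec(\Pc^{*};\A)$. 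Either way $\E[\err(\f;\Bc)]=\Omega\!\big(\min(\tfrac1\eps\sqrt{\log(d)/n},\,1)\big)$, which is the claimed bound, witnessed by the distribution $\Pc=\Pc^{*}$.

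The substantive content lies entirely in Lemma~\ref{lem:minmax-lower} (the mutual-information bound $O(\eps^{2}+\tfrac\delta\eps\log(d\eps/\delta))$ for $(\eps,\delta)$-local randomizers, privacy amplification by the $\eta$-degrading channel, and Fano's inequality), which we assume. Within the present reduction the only point needing care is the last one above --- keeping the $O(\sqrt{\log(d)/n})$ concentration slack from swallowing the minimax lower bound --- and that is handled as indicated; the triangle inequality and the Hoeffding-plus-union-bound tail estimate are routine.
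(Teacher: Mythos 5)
Your overall strategy is the same as the paper's: invoke Lemma~\ref{lem:minmax-lower} on the minimax distribution-estimation error, then transfer it to the empirical frequency vector $\f$ via the triangle inequality $\linf{\hbf-\Pc}\leq\linf{\hbf-\f}+\linf{\f-\Pc}$ and a Hoeffding-plus-union-bound estimate $\E[\linf{\f-\Pc}]=O(\sqrt{\log(d)/n})$. The paper writes this as a contradiction argument and you write it directly, but that is cosmetic.

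There is, however, a genuine gap in your case analysis. In the regime $\frac{1}{\eps}\sqrt{\log(d)/n}\gtrsim 1$ you assert that $O(\sqrt{\log(d)/n})=o(1)$, ``which is absorbed.'' This is false: that regime means precisely $\sqrt{\log(d)/n}\gtrsim\eps$, and with $\eps=\Theta(1)$ one can have $\sqrt{\log(d)/n}=\Theta(1)$ (e.g.\ $\eps=1$, $n=\log d$), so the slack is comparable to the $\Theta(1)$ lower bound and the subtraction $\Ec(\Pc^{*};\A)-\E[\linf{\f-\Pc^{*}}]$ need not stay bounded away from zero. The paper avoids this by treating the case $n<\log(d)/\eps^{2}$ separately: there the hard distribution coming out of the proof of Lemma~\ref{lem:minmax-lower} can be taken to be a point mass (set $\eta=1$ in the $\eta$-degrading-channel argument), so $\f=\Pc$ holds with probability $1$ and there is no concentration slack at all. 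Your own description of the hard instance (``one item at frequency $\eta$, the rest uniform'') is pointing in exactly this direction, but you did not close the loop by observing that in the $\Omega(1)$ regime $\eta$ can be taken to $1$, which is what actually kills the slack.

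Your treatment of the complementary regime $\frac{1}{\eps}\sqrt{\log(d)/n}\ll 1$ --- where the lower bound is $\Theta(\frac{1}{\eps}\sqrt{\log(d)/n})$ and the slack is $O(\sqrt{\log(d)/n})$ --- does correctly identify that the argument hinges on the ratio of absolute constants, and your remark that it suffices to prove the result for $\eps$ below a fixed threshold is the right way to dispose of the constants. The paper's prose is equally terse on this point, so I would not call it a gap, but a fully rigorous write-up should make the constant bookkeeping explicit (or, cleanest of all, prove the statement only for $\eps\leq\eps_{0}$ for a fixed small $\eps_{0}$ and note that for $\eps_{0}<\eps=O(1)$ the claimed bound is within a constant factor of the $\eps_{0}$ bound). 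In short: fix the $o(1)$ misstep by splitting off the $n\lesssim\log(d)/\eps^{2}$ case and using the degenerate hard distribution there, and the proof matches the paper's.
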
\label{thm:lower-bound}

\begin{proof}
First, note that for the case where $n< \frac{\log(d)}{\eps^2}$, the above theorem follows directly from Lemma~\ref{lem:minmax-lower} since (as given in the proof of this lemma in Section~\ref{sec:lower:sub-lemma-proof}) our example distribution $\Pc$ will simply be a degenrate distribution and hence $\f=\Pc$ with probability 1.

Turning to the case where $n> \frac{\log(d)}{\eps^2}$, having Lemma~\ref{lem:minmax-lower} in hand, the proof of Theorem~\ref{thm:lower-bound} in this case becomes a simple application of Hoeffding's inequality.  First, fix $\eps,~\delta$ as in the theorem statement and let $\Q_i:\mathcal{V}\rightarrow \mathcal{Z}, i\in[n]$ of $(\eps, \delta)$-$\ldp$ mechanisms. Suppose, for the sake of contradiction, that there is an algorithm $\Bc:\mathcal{Z}^n\rightarrow[0, 1]^d$ such that for any distribution $\Pc\in\splex(d)$ from which the users' items $v_1,..., v_n$ are sampled in i.i.d. fashion, the error $\E\left[\err(\f; \Bc)\right]\neq\Omega\left(\frac{1}{\eps}\sqrt{\frac{\log(d)}{n}}\right)$. Let $\hbf$ denote the output of $\Bc$. Now, observe that for sufficiently large $n$ and $d$, we have
\begin{align}
\Ec(\Pc; \Bc)=\E\left[\linf{\hbf- \Pc}\right]&\leq \E\left[\linf{\hbf- \f}\right]+\E\left[\linf{\f- \Pc}\right]\leq \E\left[\err(\f; \Bc)\right]+\frac{\sqrt{3}}{2}\sqrt{\frac{\log(d)}{n}}\label{hoeff}
\end{align}
where the last inequality in (\ref{hoeff}) follows from using Hoeffding's inequality and the fact that
$$\E\left[\linf{\f- \Pc}\right]=\int\limits_{t=0}^{\infty}\Pr\left[\linf{\f- \Pc}\geq t\right] dt.$$
Hence, $\Ec(\Pc; \Bc)\neq\Omega\left(\sqrt{\frac{\log(d)}{\eps^2n}}\right)$. However, this contradicts Lemma~\ref{lem:minmax-lower}. Therefore, the proof is complete.
\end{proof}

\subsection{Proof of Lemma~\ref{lem:minmax-lower}}
\label{sec:lower:sub-lemma-proof}
We first introduce the notion of an $\eta$-degrading channel. For any $\eta\in[0,1]$, an $\eta$-degrading channel $\Weta:\V\rightarrow\V$ is a randomized mapping that is defined as follows: for every $v\in\V$,
\begin{align}
&\Weta(v)=\left\{\begin{array}{cc}
          v & ~\text{with probability }\eta \\
          U, & ~\text{with probability }1-\eta
        \end{array}\right.\label{def:eta-ch}
\end{align}
where $U$ is a uniform random variable over $\mathcal{V}$.

Let $\mmer_{\eta}$ be the minimax error resulting from the scenario where each user $i\in[n]$ with item $v_i\in\V$, first, apply $v_i$ to an independent copy $\Weta_i$ of an $\eta$-degrading channel, then apply the output to its $(\eps, \delta)$-$\ldp$ randomizer $\Q_i$ that outputs the report $\bz_i$. That is, $\mmer_{\eta}$ is the minimax error when $\Q_i(\cdot)$ is replaced with  $\Q_i\left(\Weta_i\left(\cdot\right)\right)$, $i\in[n]$. Our proof relies on the following lemma.
\begin{lem}
Let $\eta\in[0,1]$. If~ $\mmer\leq \frac{\eta}{10}$, then ~$\mmer_{\eta}\leq \frac{1}{10}$.
\end{lem}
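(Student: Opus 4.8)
The plan is to give a direct reduction establishing the quantitative bound $\mmer_{\eta}\le\mmer/\eta$; the lemma follows at once, since $\mmer\le\eta/10$ forces $\eta>0$ (for $d\ge 2$ and finite $n$ no estimator achieves $\mmer=0$, so the hypothesis is vacuous when $\eta=0$), whence $\mmer_{\eta}\le(\eta/10)/\eta=1/10$. Throughout, $u:=(\tfrac1d,\dots,\tfrac1d)\in\splex(d)$ denotes the uniform distribution.

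\emph{Step 1: the degraded scenario is the original scenario run on a mixture.} If $V\sim\Pc$, then by definition of the $\eta$-degrading channel, $\Weta(V)$ is distributed according to $\bar\Pc:=\eta\Pc+(1-\eta)u$. Hence, for every $\Pc\in\splex(d)$, the joint law of the reports $(\bz_1,\dots,\bz_n)$ in the degraded scenario — where $\bz_i=\Q_i(\Weta_i(v_i))$ with $v_i\sim\Pc$ i.i.d. — coincides exactly with the joint law of the reports in the \emph{original} scenario run on input distribution $\bar\Pc$ (using the \emph{same} sequence of randomizers $\Q_1,\dots,\Q_n$, each pre-composed with an independent copy of the channel in the degraded case). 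This distributional identity is the heart of the argument.

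\emph{Step 2: invert the mixture.} Fix an estimator $\A^*$ that (nearly) attains the infimum defining $\mmer$, so that $\Ec(\Pc';\A^*)\le\mmer$ for every $\Pc'\in\splex(d)$; in particular $\Ec(\bar\Pc;\A^*)\le\mmer$. By Step~1, feeding $\A^*$ the degraded-scenario reports generated from $\Pc$ produces an estimate $\hP^*$ with $\E\bigl[\linf{\hP^*-\bar\Pc}\bigr]\le\mmer$. Since $\Pc=\tfrac1\eta\bigl(\bar\Pc-(1-\eta)u\bigr)$, I define the degraded-scenario estimator $\hat\A$ by
\[
  \hat\A(\bz_1,\dots,\bz_n)\;:=\;\Pi\!\left(\tfrac1\eta\bigl(\hP^*-(1-\eta)u\bigr)\right),
\]
where $\Pi$ truncates each coordinate into $[0,1]$, making $\hat\A$ a legal estimator into $[0,1]^d$. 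Before truncation one has $\tfrac1\eta\bigl(\hP^*-(1-\eta)u\bigr)-\Pc=\tfrac1\eta(\hP^*-\bar\Pc)$, and coordinatewise clipping to $[0,1]$ can only move each coordinate closer to the truth $\Pc_v\in[0,1]$; hence $\linf{\hat\A(\bz_1,\dots,\bz_n)-\Pc}\le\tfrac1\eta\linf{\hP^*-\bar\Pc}$ pointwise. Taking expectations gives $\Ec(\Pc;\hat\A)\le\mmer/\eta$ for \emph{every} $\Pc\in\splex(d)$, so $\mmer_{\eta}\le\mmer/\eta$, completing the proof.

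I do not expect a genuine obstacle — this is essentially a change-of-variables in the input distribution — but the points that need care are: (a) that $\mmer$ and $\mmer_{\eta}$ must be compared for the \emph{same} fixed sequence $(\Q_i)_{i\in[n]}$ of local randomizers, which the reduction respects since it only pre-composes each $\Q_i$ with $\Weta_i$; and (b) the harmless truncation step, which is justified because the $\ell_\infty$ distance to a point of $[0,1]^d$ never increases under coordinatewise clipping into $[0,1]$. If one does not wish to assume the infimum defining $\mmer$ is attained, run Step~2 with an estimator within an arbitrary $\tau>0$ of optimal and let $\tau\to 0$.
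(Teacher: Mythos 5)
Your proof is correct and takes essentially the same approach as the paper: both observe that the degraded-channel scenario is identical in distribution to the original scenario run on the mixture $\eta\Pc+(1-\eta)u$, both invert the affine map $\Pc\mapsto\eta\Pc+(1-\eta)u$ on the estimate, and both note that coordinatewise clipping to $[0,1]$ only helps in $\ell_\infty$. Your version simply makes explicit the quantitative form $\mmer_{\eta}\le\mmer/\eta$ and handles the $\eta=0$ and near-infimum edge cases that the paper leaves implicit.
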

\label{lem:contrapose}

\begin{proof}
Suppose that $\mmer\leq \frac{\eta}{10}$. Then, there is an algorithm $\A:\Z^n\rightarrow [0, 1]^d$ such that for any distribution $\tilde{\Pc}\in\splex(d)$ on $\V$, we have
\begin{align}
\E\left[\linf{\A\left(\Q_1(v_1), ..., \Q_n(v_n)\right)-\tilde{\Pc}}\right]&\leq\frac{\eta}{10}\label{proof-contra-minmax}
\end{align}
where $v_i$ is drawn from $\tilde{\Pc}$ independently for every $i\in[n]$.

%$\hP\in\splex(d)$ denotes the output of $\A$ on inputs $\left(\bz_1, ..., \bz_n\right)$ where $\bz_i=\Q_i(v_i)$  and $v_i$ is drawn from $\Pc$, $i\in[n]$.

Let $\We$ denote the distribution of the output of the $\eta$-degrading channel $\Weta$. Note that if the distribution of the input of $\Weta$ is $\Pc$, then
\begin{align}
\We&=\eta\Pc+(1-\eta)\U\label{out-dist-weta}
\end{align}
where $\U$ is the uniform distribution on $\V$.

Consider an algorithm $\A_{\eta}$ defined as follows. For any input $(\bz_1, ..., \bz_n)\in\Z^n$, $\A_{\eta}$ runs $\A$ on its input to obtain $\hP\in[0,1]^d$. Then, $\A_{\eta}$ computes a vector $\hP^{(\eta)}\in[0,1]^d$ whose entries $\hP_v^{(\eta)},~v\in[d],$ are given by $\frac{1}{\eta}\left(\hP_v - (1-\eta)\U\right)$ rounded to $[0, 1]$. Now, consider the scenario where we replace each $\Q_i(\cdot)$ with $\Q_i\left(\Weta\left(\cdot\right)\right)$ for all $i\in[n]$. Observe that for any distribution $\Pc$ of users' items, we have
\begin{align}
\Ec\left(\Pc, \A_{\eta}\right)&=\E\left[\linf{\hP^{(\eta)}-\Pc}\right]\leq\E\left[\linf{\frac{1}{\eta}\left(\hP-(1-\eta)\U\right)-\Pc}\right]=\frac{1}{\eta}\E\left[\linf{\hP-\We}\right]\label{proof-contra-1}\\
&=\frac{1}{\eta}\E\left[\linf{\A\left(\Q_1(y_1), ..., \Q_n(y_n)\right)-\We}\right]\leq\frac{1}{10}\label{proof-contra-2}
\end{align}
where $y_i$ is drawn from $\We$ independently for every $i\in[n]$. Note that the last equality in (\ref{proof-contra-1}) follows from (\ref{out-dist-weta}), and (\ref{proof-contra-2}) follows from (\ref{proof-contra-minmax}).
\end{proof}

Given Lemma~\ref{lem:contrapose}, our proof proceeds as follows. We show that for a setting of $\eta$ $=\Omega\left(\min\left(\sqrt{\frac{\log(d)}{\eps^2 n}}, 1\right)\right)$, we have $\mmer_{\eta}>\frac{1}{10}$ which, by Lemma~\ref{lem:contrapose}, implies that $\mmer=\Omega\left(\min\left(\sqrt{\frac{\log(d)}{\eps^2 n}}, 1\right)\right)$ which will complete our proof.

We consider the following scenario. Let $V$ be a uniform random variable on $\V$. Conditioned on $V=v$, for all $i\in[n]$, $v_i=v$, i.e., all users have the same item $v$ when $V=v$. Each user $i$ applies an independent copy of an $\eta$-degrading channel $\Weta_i$ to its item $v_i$. The output is then fed to the user's $(\eps, \delta)$-local randomizer $\Q_i$ that outputs the user's report $\bz_i$. Let $\G:\mathcal{Z}^n\rightarrow\V$ be an algorithm that, given the users' reports $\bz_1,..., \bz_n$, outputs an estimate $\hat{V}$ for the common item $V$.

\mypar{Fano's inequality} Let $\pre(\G)$ be the probability of error that $\G$ outputs a wrong hypothesis $\hat{V}\neq V$. That is,
$$\pre(\G)=\Pr\left[\G(\bz_1,..., \bz_n)\neq V\right].$$
Fano's inequality gives a lower bound on the probability of error incurred by \emph{any} such estimator $\G$:
\begin{align}
\mpre&\triangleq\min\limits_{\G}\pre(\G)\geq 1-\frac{I(v_1,...,v_n ; \bz_1,...\bz_n)+1}{\log(d)}\label{fano}
\end{align}
One can easily see that the minimax error $\mmer_{\eta}$ is bounded from below as
\begin{align}
\mmer_{\eta}&\geq\min\limits_{v\neq v'}\frac{\linf{\be_{v}-\be_{v'}}}{2}\mpre=\mpre\label{mmer-eta-fano}
\end{align}
To reach our goal, given (\ref{fano})-(\ref{mmer-eta-fano}), it suffices to show that for a setting of $\eta=\Omega\left(\min\left(\sqrt{\frac{\log(d)}{\eps^2 n}}, 1\right)\right)$, we have $\frac{I(v_1,...,v_n ; \bz_1,...\bz_n)}{\log(d)}\leq \frac{1}{2}$. This will be established using the following claims.

\begin{claim}
Let $V$ be uniformly distributed over $[d]$. Let $\Q:[d]\rightarrow\mathcal{Z}$ be an $(\eps, \delta)$-$\ldp$ algorithm and let $Z$ denote $\Q(V)$. Then, we have
$$I(V; Z)=O\left(\eps^2+\frac{\delta}{\eps}\log(d)+\frac{\delta}{\eps}\log(\eps/\delta)\right).$$
\end{claim}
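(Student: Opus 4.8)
The plan is to write the mutual information as an expected KL divergence from the conditional law of $Z$ to its marginal, and to bound that divergence using the privacy constraint, handling the heavy and light ``tails'' of the likelihood ratio separately. I will describe the argument assuming $\eps \le 1$; for constant $\eps = O(1)$ the same steps go through after adjusting the constants below.

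Write $Q_v$ for the law of $\Q(v)$ and $\bar Q = \frac{1}{d}\sum_{v\in[d]} Q_v$ for the marginal law of $Z$; then $I(V;Z) = \frac{1}{d}\sum_{v\in[d]} D(Q_v \,\|\, \bar Q)$. Since $\bar Q \ge \frac{1}{d}Q_v$ as measures, each $Q_v$ is absolutely continuous with respect to $\bar Q$; let $L_v = dQ_v/d\bar Q$, so $0 \le L_v \le d$ a.e., $\int L_v\, d\bar Q = 1$, and $D(Q_v\|\bar Q) = \int g(L_v)\, d\bar Q$ with $g(x) = x\log x - x + 1 \ge 0$. Averaging the $(\eps,\delta)$-LDP inequality $Q_v(S) \le e^\eps Q_{v'}(S) + \delta$ over $v'$ (and likewise $Q_{v'}(S) \le e^\eps Q_v(S) + \delta$) gives, for every measurable $S$, both $Q_v(S) \le e^\eps \bar Q(S) + \delta$ and $\bar Q(S) \le e^\eps Q_v(S) + \delta$. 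Applying the second to $S = \{L_v < e^{-\eps}(1-\eps)\}$ and using $\int_S L_v\, d\bar Q \le e^{-\eps}(1-\eps)\bar Q(S)$ yields $\bar Q(L_v < e^{-\eps}(1-\eps)) \le \delta/\eps$; applying the first to $S = \{L_v > e^\eps\}$ yields $\int_{\{L_v > e^\eps\}}(L_v - e^\eps)\, d\bar Q \le \delta$; and applying it to $S = \{L_v > 2e^\eps\}$ (and using $L_v > 2e^\eps$ on $S$) yields $Q_v(L_v > 2e^\eps) \le 2\delta$.

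Now split $\int g(L_v)\, d\bar Q$ over four ranges of $L_v$, using that $g$ decreases on $(0,1)$ and increases on $(1,\infty)$, that $g(e^{\pm\eps}) = \int_0^\eps t e^{\pm t}\, dt = O(\eps^2)$ (hence also $g(e^{-\eps}(1-\eps)) = O(\eps^2)$), that $g(x) \le 1$ on $[0,1]$, and that $g(x) \le x\log x \le x\log d$ for $1 \le x \le d$. On the middle band $e^{-\eps}(1-\eps) \le L_v \le e^\eps$ we have $g(L_v) \le \max\{g(e^{-\eps}(1-\eps)), g(e^\eps)\} = O(\eps^2)$, contributing $O(\eps^2)$. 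On $\{L_v < e^{-\eps}(1-\eps)\}$ we bound $g(L_v) \le 1$ on a set of $\bar Q$-mass at most $\delta/\eps$, contributing $O(\delta/\eps)$. On $\{e^\eps < L_v \le 2e^\eps\}$ we write $g(x) = g(e^\eps) + \int_{e^\eps}^x \log t\, dt \le g(e^\eps) + (\log 2 + \eps)(x - e^\eps)$, so this range contributes at most $g(e^\eps) + (\log 2 + \eps)\int_{\{L_v > e^\eps\}}(L_v - e^\eps)\, d\bar Q = O(\eps^2 + \delta)$. Finally, on $\{L_v > 2e^\eps\}$ the contribution is at most $\log d \cdot \int_{\{L_v > 2e^\eps\}} L_v\, d\bar Q = (\log d)\, Q_v(L_v > 2e^\eps) \le 2\delta\log d$. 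Summing, $D(Q_v\|\bar Q) = O(\eps^2 + \delta/\eps + \delta\log d)$ uniformly in $v$, and averaging over $v$ gives $I(V;Z) = O(\eps^2 + \delta/\eps + \delta\log d)$, which for $\eps \le 1$ and $\delta \le \eps/e$ (the regime in which the bound is non-trivial, since otherwise $\frac{\delta}{\eps}\log(\eps/\delta)$ is already $\Omega(1)$) is $O(\eps^2 + \frac{\delta}{\eps}\log d + \frac{\delta}{\eps}\log(\eps/\delta))$, as claimed.

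The main obstacle is exactly this tail handling. In the pure case ($\delta = 0$) one has the pointwise bound $e^{-\eps} \le L_v \le e^\eps$, so $g(L_v) = O(\eps^2)$ everywhere and the estimate is immediate; with $\delta > 0$ this fails, and one must instead argue that $(\eps,\delta)$-LDP still confines $L_v$ to a band of width $\Theta(\eps)$ around $1$ except on a set of $\bar Q$-mass $O(\delta/\eps)$, and then control the potentially enormous values of $g(L_v)$ (up to $\approx d\log d$) on the leftover set using the a priori bound $L_v \le d$ together with the hockey-stick-type estimate $\int (L_v - e^\eps)^+\, d\bar Q \le \delta$. Taking the width of the band to be $\Theta(\eps)$, rather than a constant, is what lets the $\eps^2$ main term survive alongside the $\delta/\eps$ correction; a coarser band would replace $\eps^2$ by $\Theta(\eps)$.
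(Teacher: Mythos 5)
Your proof is correct, and it takes a genuinely different route from the paper's. The paper introduces a binary auxiliary variable $B$ indicating that $Z$ falls in a ``bad'' set (where the likelihood ratio against the mixture exceeds $e^{2\eps}$), then uses the decomposition $I(V;Z) \le I(V;Z\mid B=0) + I(V;Z\mid B=1)\Pr[B=1] + H(B)$: the first term is bounded by $O(\eps^2)$ via an appeal to \citet{DuchiJW13}, the second trivially by $\log d$ times the bad-set mass $O(\delta/\eps)$, and the third by the binary entropy bound $O(\tfrac{\delta}{\eps}\log(\eps/\delta))$. You instead bound each $D(Q_v\|\bar Q) = \int g(L_v)\,d\bar Q$ directly, with $g(x)=x\log x - x + 1$, by a four-way partition of the range of the likelihood ratio $L_v$ and elementary estimates: the middle band gives the $O(\eps^2)$ main term (so no black-box appeal to \cite{DuchiJW13} is needed), the small- and moderate-deviation tails are controlled by $\bar Q(L_v < e^{-\eps}(1-\eps)) \le \delta/\eps$ and the hockey-stick estimate $\int (L_v-e^\eps)^+\,d\bar Q \le \delta$, and the far tail uses $L_v \le d$ together with $Q_v(L_v > 2e^\eps) \le 2\delta$. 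Because you avoid the auxiliary indicator, you never pay the entropy overhead $H(B)$, and your bound $O(\eps^2 + \delta/\eps + \delta\log d)$ is actually slightly sharper than the paper's (no $\log(\eps/\delta)$ factor, and $\delta\log d$ instead of $\tfrac{\delta}{\eps}\log d$); it implies the stated claim in the regime $\delta \le \eps/e$, as you observe. The paper's decomposition is a clean modular reduction of the $(\eps,\delta)$ case to the pure-$\eps$ case, whereas yours is more self-contained and quantitatively tighter at the cost of the explicit range-splitting bookkeeping. Both are valid; yours is arguably the preferable presentation if one wants a proof from first principles.
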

\label{mut-inf-bd}

\begin{proof}
Let $M_v$ denote the probability density function of the output of $\Q(v)$, $v\in[d]$\footnote{To avoid cumbersome notation, we will ignore irrelevant technicalities of the measure defined on $\Z$ as well as the distinctions between probability mass functions and density functions and will use the notation $\int\limits_{z\in\Sc}M_v(z)dz$ to simply mean $\Pr[Z\in\Sc\vert V=v]$ whether $Z$ is discrete or continuous random variable.}. Let $\bM(z)=\frac{1}{d}\sum\limits_{v\in[d]}M_v(z)$, $z\in\Z$. For every $v\in [d]$, define
$$\bad^{(1)}_{v}\triangleq\left\{z \in \Z: ~ \frac{M_v(z)}{\bM(z)} > e^{2\eps}\right\}$$
and
$$\bad^{(2)}_{v}\triangleq\left\{z \in \Z: ~ \frac{M_v(z)}{\bM(z)} < e^{-2\eps}\right\}.$$
Let $\bad_v=\bad^{(1)}_{v} \cup \bad^{(2)}_{v}$ and $\bad=\cup_{v\in [d]}\left(\bad^{(1)}_{v}\cup\bad^{(2)}_{v}\right)$.  Let $B$ be a binary random variable that takes value $1$ whenever $Z\in\bad_V$ and $0$ otherwise. Now, observe that
\begin{align}
I(V; Z)&\leq I\left(V; Z, B\right)\leq I\left(V; Z |B \right) + H(B)\nonumber\\
&\leq I\left(V; Z \vert B = 0\right) + I\left(V; Z \vert B = 1\right)\Pr\left[\bad\right]+H(B)\label{main-bd-mut-inf}
\end{align}
where $H(B)$ denotes Shannon entropy of $B$. The above inequalities follow from the standard properties of the mutual information between any pair of random variables and the fact that $\Pr[B=1]=\Pr[\bad]$.

First, we consider the first term in (\ref{main-bd-mut-inf}). Conditioned on $B=0$, $Z$ lies in a set $\left\{z\in\Z:~ e^{-2\eps}\leq \frac{M_v(z)}{\bM(z)}\leq e^{2\eps}\right\}$. Hence, we can obtain a bound of $O(\eps^2)$ on $I\left(V; Z \vert B = 0\right)$ by applying techniques that were originally used for pure $\eps$ local differential privacy like those in \cite{DuchiJW13} (Corollary~1 therein).

%\rbnote{put the theorem number and sec.}

Next, observe that $I\left(V; Z \vert B = 1\right)\leq H(V)\leq \log(d)$. Thus, it remains to bound $\Pr[\bad]$ (and consequently bound $H(B)$). Observe that
\begin{align}
\Pr\left[\bad^{(1)}_{v}\right]&=\int_{z\in\bad^{(1)}_{v}}M_v(z)dz>e^{2\eps}\int_{z\in\bad^{(1)}_{v}}\bM(z)dz\nonumber\\
&>e^{\eps}\int_{z\in\bad^{(1)}_{v}}M_v(z)dz-e^{\eps}\delta=e^{\eps}\Pr\left[\bad^{(1)}_{v}\right]-e^{\eps}\delta\nonumber
\end{align}
where the last inequality above follows from the fact that $\Q$ is $(\eps, \delta)$-differentially private. Hence, we get
$$\Pr\left[\bad^{(1)}_{v}\right]<\frac{e^{\eps}\delta}{e^{\eps}-1}=O(\delta/\eps).$$
Similarly, we can bound $\Pr\left[\bad^{(2)}_{v}\right]<\frac{e^{-\eps}\delta}{e^{\eps}-1}=O(\delta/\eps)$. Hence, $\Pr[\bad]=O(\delta/\eps)$ which gives us the required bound on the second term of \ref{main-bd-mut-inf}. Finally, note that the bound on $\Pr[\bad]$ implies a bound of $O\left(\frac{\delta}{\eps}\log(\eps/\delta)\right)$ on $H(B)$. This completes the proof.
\end{proof}

\begin{claim}[Privacy amplification via degrading channels]
The composition $\Q\left(\Weta\left(\cdot\right)\right)$ of an $\eta$-degrading channel $\Weta$ (defined in (\ref{def:eta-ch})) with an $(\eps, \delta)$-$\ldp$ algorithm $\Q$ yields a $\left(O(\eta\eps), O(\eta\delta)\right)$-$\ldp$ algorithm.
\end{claim}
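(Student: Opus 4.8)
The plan is to work directly from the definition of $(\eps',\delta')$-$\ldp$ for the composed mechanism $\Q':=\Q\circ\Weta$, writing its output distribution explicitly. For any $v\in\V$ and any measurable $\Sc\subseteq\Z$, the definition of the $\eta$-degrading channel (\ref{def:eta-ch}) gives
\[
\Pr[\Q'(v)\in\Sc]=\eta\,q_v+(1-\eta)\,\bar q,\qquad\text{where}\quad q_w:=\Pr[\Q(w)\in\Sc],\quad \bar q:=\tfrac1d\textstyle\sum_{w\in\V}q_w .
\]
The key structural observation is that $\bar q$ does not depend on $v$: the entire $v$-dependence sits in $\eta\,q_v$, already multiplied by $\eta$. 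Thus for any $v,v'\in\V$,
\[
\Pr[\Q'(v)\in\Sc]-\Pr[\Q'(v')\in\Sc]=\eta\,(q_v-q_{v'}),
\]
and applying the $(\eps,\delta)$-$\ldp$ guarantee of $\Q$ in the form $q_v\le e^{\eps}q_{v'}+\delta$ bounds this difference by $\eta(e^{\eps}-1)q_{v'}+\eta\delta$.

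It remains to re-express $\eta(e^{\eps}-1)q_{v'}$ as a (small) multiple of $\Pr[\Q'(v')\in\Sc]=\eta q_{v'}+(1-\eta)\bar q$ plus a controlled additive slack. Here I would use the \emph{lower} bound on $\bar q$ that also follows from $(\eps,\delta)$-$\ldp$: averaging $q_{v'}\le e^{\eps}q_w+\delta$ over $w\in\V$ yields $\bar q\ge e^{-\eps}(q_{v'}-\delta)$, hence
\[
\Pr[\Q'(v')\in\Sc]\ \ge\ \bigl(\eta+(1-\eta)e^{-\eps}\bigr)\,q_{v'}-(1-\eta)e^{-\eps}\delta .
\]
Since $\mu:=\eta+(1-\eta)e^{-\eps}$ is a convex combination of $1$ and $e^{-\eps}\le 1$, it lies in $[e^{-\eps},1]$, so $q_{v'}\le e^{\eps}\bigl(\Pr[\Q'(v')\in\Sc]+\delta\bigr)$. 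Substituting back gives
\[
\Pr[\Q'(v)\in\Sc]\ \le\ \bigl(1+\eta\,e^{\eps}(e^{\eps}-1)\bigr)\,\Pr[\Q'(v')\in\Sc]\ +\ \eta\bigl(1+e^{2\eps}-e^{\eps}\bigr)\delta .
\]
Finally, invoking the standing assumption $\eps=O(1)$ (as in Lemma~\ref{lem:minmax-lower}), we have $e^{\eps}(e^{\eps}-1)=O(\eps)$ and $1+e^{2\eps}-e^{\eps}=O(1)$, so $1+\eta e^{\eps}(e^{\eps}-1)\le e^{O(\eta\eps)}$; this establishes $(\eps',\delta')=(O(\eta\eps),\,O(\eta\delta))$-$\ldp$ after a routine check of the edge cases ($q_{v'}-\delta$ possibly negative, $\Pr[\Q'(v')\in\Sc]$ possibly $0$, in which case $\Pr[\Q'(v)\in\Sc]=0$ too).

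The hard part will be the second paragraph: a naive bound of $\Pr[\Q'(v)\in\Sc]$ against $\Pr[\Q'(v')\in\Sc]$ gains \emph{nothing} in the privacy parameter — one merely recovers $\eps$ — because when $\bar q$ is negligible relative to $\eta q_{v'}$ the mixture essentially \emph{is} $\Q$ again. The amplification comes entirely from the fact that $(\eps,\delta)$-$\ldp$ forces $\bar q$ to be within a multiplicative $e^{\pm\eps}$ factor of $q_{v'}$ (up to the $\delta$ slack), and the delicate bookkeeping is making sure the $\delta$ terms threaded through this inequality stay $O(\eta\delta)$ rather than degrading to $O(\delta)$.
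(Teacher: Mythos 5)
Your proposal is correct and follows essentially the same argument as the paper: both express the mixture difference as $\eta(q_v-q_{v'})$, apply $(\eps,\delta)$-LDP of $\Q$ to the difference, and then close the loop by bounding the point probability $q$ against the mixture probability via a second application of LDP averaged over the channel's output (the paper averages over $y\leftarrow\Weta(v)$ directly while you average over the uniform part only, which is a cosmetic variation). One small nit: in your edge-case aside, $\Pr[\Q'(v')\in\Sc]=0$ forces $q_{v'}=\bar q=0$ and hence $\Pr[\Q'(v)\in\Sc]\le\eta\delta$, not necessarily exactly $0$, but this does not affect the conclusion.
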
\label{priv-amp-by-eta-ch}

\begin{proof}

%%%%%%%%%%%%%%%%%%%%%%%%%%%%%%%%%%%%%%%%
% Raef, I added this to get the file to compile. Is this the notation you wanted?
%\newcommand{\Meta}{\M^{(\eta)}}
%\renewcommand{\M}{P} %% I just di this because I think it looks nicer.
%%%%%%%%%%%%%%%%%%%%%%%%%%%%%%%%%%

Fix any measurable subset $\Sc\subset\Z$. For any $v\in\V$, let $\Meta_v(\Sc)$ denote $\Pr\left[\Q\left(\Weta(v)\right)\in\Sc\right]$ and $\M_v(\Sc)$ denote $\Pr\left[\Q(v)\in\Sc\right]$. Fix any pair $v, v'\in\V$. Observe that
$$\Meta_v(\Sc)=\eta\M_v(\Sc)+(1-\eta)\frac{1}{d}\sum\limits_{u\in\V}\M_u(\Sc)$$
and
$$\Meta_{v'}(\Sc)=\eta\M_{v'}(\Sc)+(1-\eta)\frac{1}{d}\sum\limits_{u\in\V}\M_u(\Sc).$$
Hence, we can write $\Meta_{v'}(\Sc)$ as
\begin{align}
\Meta_{v'}(\Sc)&=\eta\left(\M_{v'}(\Sc)-\M_{v}(\Sc)\right)+\Meta_v(\Sc)\nonumber\\
&\leq \eta\left(e^{\eps}-1\right)\M_v(\Sc)+\Meta_v(\Sc)+\eta\delta\nonumber\\
&\leq \left(1+\eta e^{\eps}\left(e^{\eps}-1\right)\right)\Meta_v(\Sc)+e^{\eps}\eta\delta
\end{align}
The last inequality follows from the fact $\Q$ is $(\eps, \delta)$-differentially private, hence,
$$\M_v(\Sc)\leq e^{\eps}\E_{y\leftarrow \Weta(v)}\left[\M_y(\Sc)\right]+\delta=e^{\eps}\Meta_v(\Sc)+\delta.$$
We conclude the proof by noting that $1+\eta e^{\eps}\left(e^{\eps}-1\right)=e^{O(\eta\eps)}$ and $e^{\eps}\eta\delta=O(\eta\delta)$ since $\eps$ is $O(1)$.
%Note that we can have another equation by interchanging $v$ and $v'$. Using these two equations and the fact that $\Pr\left[\Q(v)=\bz\right]\leq e^{\eps}\Pr\left[\Q(v')=\bz\right]+\delta$ (that follows from $(\eps, \delta)$-differential privacy of $\Q$), we can reach the required bound.
\end{proof}

Putting these claims together with the fact that $I(v_1,...,v_n ; \bz_1,...\bz_n)\leq\sum_{i=1}^n I(v_i; \bz_i)$, we reach that, for $\delta=o\left(\frac{1}{n\log(n)}\right)$, we have
$$\frac{I(v_1,..., v_n ; \bz_1,...\bz_n)}{\log(d)}=O\left(\frac{n\eta^2\eps^2}{\log(d)}+\frac{1}{\log(n)}+\frac{1}{\log(d)}\right)$$
which, by an appropriate setting of $\eta=\Omega\left(\min\left(\sqrt{\frac{\log(d)}{\eps^2 n}}, 1\right)\right)$, can be made smaller than $1/2$. This completes the proof of Lemma~\ref{lem:minmax-lower}.

%%%%%%%%%%%%%%%%%%%%%%%%%%%%%%%%%%%%%%%%%%%%%%%%%%%%%
%%%%%%%%%%%%%%%%%%%%%%%%%%%%%%%%%%%%%%%%%%%%%%%%%%%%%
%%%%%%%%%%%%%%%%%%STOC VERSION %%%%%%%%%%%%%%%%%%%%%%%%%%
%%%%%%%%%%%%%%%%%%%%%%%%%%%%%%%%%%%%%%%%%%%%%%%%%%%%%
%%%%%%%%%%%%%%%%%%%%%%%%%%%%%%%%%%%%%%%%%%%%%%%%%%%%%

\else 
\section{Tight Lower Bound on Error}
\label{sec:lower}

We derive a matching lower bound on the error of $(\eps, \delta)$-$\ldp$ frequency oracles and succinct histograms for all $\delta=o\left(\frac{1}{n\log(n)}\right)$. Our lower bound implies that there is no advantage of $(\eps, \delta)$ algorithms over pure $\eps$ algorithms in terms of asymptotic error for all the meaningful settings of $\delta$. Our approach is inspired by some of the techniques in \cite{DuchiJW13} used for lower bounds on multinomial estimation error in the pure $\eps$ local model. We make their framework more modular, and  show that it can be used  to prove lower bounds for $(\eps,\delta)$-local differentially private protocols. Our lower bound is formally stated in the following theorem.

\begin{thm}%[Lower Bound on the Error of Private Histograms]
For any $\eps=O(1)$ and $0\leq\delta\leq o\left(\frac{1}{n\log(n)}\right)$. For any sequence $\Q_i:\mathcal{V}\rightarrow \mathcal{Z}, i\in[n]$ of $(\eps, \delta)$-$\ldp$ algorithms, and for any algorithm $\Bc: \mathcal{Z}^n\rightarrow [0,1]^d$, there exists a distribution $\Pc$ over $\V$ (from which $v_i, i\in[n]$ are sampled in i.i.d. fashion) such that the expected error with respect to such $\Pc$ satisfies
$$\E\left[\max\limits_{v\in[d]}\vert \hf(v)-f(v)\vert\right] =\Omega\left(\min\left(\frac{1}{\eps}\sqrt{\frac{\log(d)}{n}}, 1\right)\right).$$
\end{thm}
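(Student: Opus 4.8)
The plan is to reduce the frequency-estimation lower bound to a minimax lower bound for \emph{distribution} estimation, and then establish the latter by an information-theoretic argument adapting (and modularizing) the framework of \citet{DuchiJW13}.

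\emph{Step 1 (reduction to distribution estimation).} Suppose the items $v_1,\dots,v_n$ are drawn i.i.d.\ from an unknown $\Pc\in\splex(d)$, and let $\mmer=\min_{\A}\max_{\Pc\in\splex(d)}\E\big[\linf{\A(\bz_1,\dots,\bz_n)-\Pc}\big]$ be the minimax $\ell_\infty$ rate of estimating $\Pc$ from the private reports. Since the empirical frequency $\f=\frac1n\sum_i\be_{v_i}$ concentrates around $\Pc$ --- a union bound over the $d$ coordinates together with Hoeffding's inequality gives $\E[\linf{\f-\Pc}]=O(\sqrt{\log d/n})$ --- any frequency estimator with small $\ell_\infty$ error is, by the triangle inequality, also a good estimator of $\Pc$. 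Hence it suffices to prove $\mmer=\Omega\big(\min(\frac1\eps\sqrt{\log d/n},\,1)\big)$; when $n<\log d/\eps^2$ the hard instance below is degenerate, so $\f=\Pc$ with probability $1$ and the reduction is immediate.

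\emph{Step 2 (degrading channels).} For $\eta\in[0,1]$ introduce the $\eta$-degrading channel $\Weta:\V\to\V$ that returns its input with probability $\eta$ and a uniform element of $\V$ otherwise; let $\mmer_\eta$ denote the minimax rate when each $\Q_i$ is replaced by $\Q_i\circ\Weta_i$. I would first prove the contrapositive reduction: if $\mmer\le\eta/10$ then $\mmer_\eta\le\tfrac1{10}$. The idea is that when the input to $\Weta$ is distributed as $\Pc$, its output is distributed as $\We=\eta\Pc+(1-\eta)\U$, an invertible affine image of $\Pc$; so an estimator that is good for $\We$ (which Step~1 supplies, since $\We$ is a genuine distribution on $\V$) may be post-processed by $x\mapsto\frac1\eta\big(x-(1-\eta)\U\big)$ followed by clipping to $[0,1]^d$, inflating the error by only a factor $1/\eta$. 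Consequently it is enough to exhibit some $\eta=\Omega\big(\min(\sqrt{\log d/(\eps^2 n)},\,1)\big)$ with $\mmer_\eta>\tfrac1{10}$.

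\emph{Step 3 (Fano plus mutual information).} For the latter, use the ``needle in a haystack'' family: let $V$ be uniform on $[d]$ and, conditioned on $V=v$, let every user hold $v_i=v$. Any estimator must then identify $V$, so Fano's inequality gives $\mmer_\eta\ge\mpre\ge 1-\frac{I(v_1,\dots,v_n;\bz_1,\dots,\bz_n)+1}{\log d}$, and since the reports are conditionally independent $I(v_1,\dots,v_n;\bz_1,\dots,\bz_n)\le\sum_i I(v_i;\bz_i)$. Two claims then finish the argument. \emph{(i)}~For uniform $V$ and any $(\eps,\delta)$-$\ldp$ randomizer $\Q$, $I(V;\Q(V))=O\big(\eps^2+\frac\delta\eps\log(d\eps/\delta)\big)$: partition the output space into a region where the likelihood ratio $M_v(z)/\bM(z)$ lies in $[e^{-2\eps},e^{2\eps}]$, on which the pure-$\eps$ bound of \citet{DuchiJW13} yields $O(\eps^2)$, and its complement, whose probability is $O(\delta/\eps)$ by the $(\eps,\delta)$ guarantee and which contributes only $O(\frac\delta\eps\log(d\eps/\delta))$ via the entropy of the indicator of that region together with the trivial bound $I(V;Z\mid\text{in that region})\le\log d$. \emph{(ii)}~Privacy amplification: $\Q\circ\Weta$ is $\big(O(\eta\eps),O(\eta\delta)\big)$-$\ldp$, because for any event $\Sc$ one has $\Meta_v(\Sc)-\Meta_{v'}(\Sc)=\eta\big(\M_v(\Sc)-\M_{v'}(\Sc)\big)$ and $\M_v(\Sc)\le e^\eps\Meta_v(\Sc)+\delta$. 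Combining (ii) with (i) gives $I(v_i;\bz_i)=O\big(\eta^2\eps^2+\frac\delta\eps\log(d\eps/\delta)\big)$, so for $\delta=o(1/(n\log n))$ we get $\frac{\sum_i I(v_i;\bz_i)}{\log d}=O\big(\frac{n\eta^2\eps^2}{\log d}+\frac1{\log n}+\frac1{\log d}\big)<\tfrac12$ once $\eta=c\min(\sqrt{\log d/(\eps^2 n)},1)$ for a sufficiently small constant $c$. This makes $\mmer_\eta>\tfrac1{10}$, closing the chain back through Steps~2 and~1.

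\emph{Main obstacle.} The technically delicate piece is claim~(i), the mutual-information bound under $(\eps,\delta)$ rather than pure $\eps$: one must argue that the $\delta$-slack leaks at most $O(\frac\delta\eps\log(d\eps/\delta))$ additional bits --- controlling both the mass of the ``bad'' region and its entropic cost --- so that throughout the range $\delta=o(1/(n\log n))$ the relaxation cannot be exploited to beat the pure-$\eps$ lower bound. The remaining ingredients (Hoeffding, the affine post-processing, Fano, and the amplification identity) are routine.
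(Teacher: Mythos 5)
Your proposal follows essentially the same route as the paper's own proof: reduce to minimax distribution estimation via Hoeffding and the triangle inequality (handling the $n<\log d/\eps^2$ case via a degenerate hard instance), introduce the $\eta$-degrading channel and prove the contrapositive reduction $\mmer\le\eta/10\Rightarrow\mmer_\eta\le 1/10$ by inverting the affine map $\Pc\mapsto\eta\Pc+(1-\eta)\U$, and then establish $\mmer_\eta=\Omega(1)$ via Fano's inequality together with the two claims you state --- the $(\eps,\delta)$ mutual-information bound $I(V;\Q(V))=O(\eps^2+\frac\delta\eps\log(d\eps/\delta))$ (proved exactly by partitioning on the likelihood-ratio bad set) and privacy amplification $\Q\circ\Weta$ being $(O(\eta\eps),O(\eta\delta))$-LDP. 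These are precisely Lemma~\ref{lem:contrapose}, Claim~\ref{mut-inf-bd}, and Claim~\ref{priv-amp-by-eta-ch} in the paper, assembled in the same order and with the same final computation, so the argument is correct and coincides with the paper's.
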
\label{thm:lower-bound}

We discuss the main steps of our technique next. For more details on the proof, we refer the reader to the full version \cite{fulvBS15}.

\subsection{Our Technique}
In a scenario where the item of each user is drawn independently from an unknown distribution on $\V$, we first derive a lower bound on the expected worst-case error (the minimax rate) in estimating the right distribution. We then show using standard concentration bounds that this implies a lower bound on the maximum error in estimating the actual frequencies of all the items in $\V$. 

To obtain a lower bound on the minimax error, we first define the notion of an $\eta$-degrading channel which is a noise operator that, given a user's item as input, outputs the same item with probability $\eta$, and outputs a uniform random item from $\V$ otherwise. Formally, for any $\eta\in[0,1]$, an $\eta$-degrading channel $\Weta:\V\rightarrow\V$ is a randomized mapping that is defined as follows: for every $v\in\V$,
\begin{align}
&\Weta(v)=\left\{\begin{array}{cc}
          v & ~\text{with probability }\eta \\
          U, & ~\text{with probability }1-\eta
        \end{array}\right.\label{def:eta-ch}
\end{align}
where $U$ is a uniform random variable over $\mathcal{V}$.

We compare two scenarios: in the first scenario, each user feeds its item first to an $\eta$-degrading channel, then feeds its output into its $(\eps, \delta)$ local randomizer to generate a report, whereas the second scenario is the normal scenario where the user feeds its item directly into its local randomizer. We then show that a lower bound of $\Omega(1)$ on the minimax error in the first scenario implies a lower bound of $\Omega(\eta)$ in the second scenario. 

Thus, to reach our result, it would suffice to show that a lower bound of $\Omega(1)$ is true in the first scenario with an $\eta$-degrading channel when $\eta=\Omega\left(\frac{1}{\eps}\sqrt{\frac{\log(d)}{n}}\right)$. To derive the $\Omega(1)$ lower bound in the first scenario, we proceed as follows. 

First, we derive the following upper bound on the mutual information between a uniform random item $V$ from $\V$ and the output of an $(\eps, \delta)$-local randomizer with input $V$. 

\begin{claim}
Let $V$ be uniformly distributed over $\V$. Let $\Q:\V\rightarrow\mathcal{Z}$ be an $(\eps, \delta)$-$\ldp$ algorithm and let $Z$ denote $\Q(V)$. Then, we have
$$I(V; Z)=O\left(\eps^2+\frac{\delta}{\eps}\log(d)+\frac{\delta}{\eps}\log(\eps/\delta)\right).$$
\end{claim}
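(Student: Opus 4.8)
The plan is to split the output space into a region where $\Q$ behaves like a pure $\eps$-randomizer and an atypical region into which the $\delta$ slack is pushed. Write $M_v$ for the output density of $\Q(v)$ and $\bM=\frac1d\sum_{v\in\V}M_v$ for the marginal density of $Z$ (here $d=|\V|$); since $V$ is uniform, $I(V;Z)=\frac1d\sum_{v\in\V}D_{\mathrm{KL}}(M_v\,\|\,\bM)$. For each $v$ set
\[
\bad^{(1)}_v=\Big\{z:\tfrac{M_v(z)}{\bM(z)}>e^{2\eps}\Big\},\qquad
\bad^{(2)}_v=\Big\{z:\tfrac{M_v(z)}{\bM(z)}<e^{-2\eps}\Big\},\qquad
\bad_v=\bad^{(1)}_v\cup\bad^{(2)}_v,
\]
and let $B=\ind(Z\in\bad_V)$ flag an atypical likelihood ratio. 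By the chain rule for mutual information and the fact that conditioning does not increase entropy,
\[
I(V;Z)\;\le\;I(V;Z,B)\;\le\;H(B)+I(V;Z\mid B=0)+\Pr[B=1]\cdot I(V;Z\mid B=1).
\]

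Three estimates finish the proof. First, conditioned on $B=0$ the ratio $M_v/\bM$ lies in $[e^{-2\eps},e^{2\eps}]$, so the conditional channel $V\mapsto(Z\mid B=0)$ is (after renormalization) essentially a $2\eps$-pure-LDP channel; the pure-privacy mutual-information bound --- e.g.\ Corollary~1 of \citet{DuchiJW13}, or directly $D_{\mathrm{KL}}(M_v\|\bM)\le\chi^2(M_v\|\bM)\le(e^{2\eps}-1)^2$ averaged over $v$, using $|M_v-\bM|\le(e^{2\eps}-1)\bM$ on the good region --- gives $I(V;Z\mid B=0)=O(\eps^2)$ for $\eps=O(1)$. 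Second, $I(V;Z\mid B=1)\le H(V)\le\log d$ trivially. Third, $(\eps,\delta)$-LDP controls $\Pr[B=1]$: averaging the privacy guarantee over $u$ gives $\Pr[\Q(v)\in S]\le e^\eps\int_S\bM\,dz+\delta$ for every measurable $S$, hence for $S=\bad^{(1)}_v$,
\[
\Pr[\Q(v)\in\bad^{(1)}_v]>e^{2\eps}\int_{z\in\bad^{(1)}_v}\bM(z)\,dz\;\ge\;e^{2\eps}e^{-\eps}\big(\Pr[\Q(v)\in\bad^{(1)}_v]-\delta\big),
\]
so $\Pr[\Q(v)\in\bad^{(1)}_v]<\frac{e^\eps\delta}{e^\eps-1}=O(\delta/\eps)$, and symmetrically for $\bad^{(2)}_v$. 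Averaging over the uniform $V$, $\Pr[B=1]=\frac1d\sum_v\Pr[\Q(v)\in\bad_v]=O(\delta/\eps)$; since $B$ is Bernoulli with this (tiny) parameter $p$, $H(B)=O(p\log(1/p))=O(\frac{\delta}{\eps}\log(\eps/\delta))$. Combining the three pieces yields $I(V;Z)=O\big(\eps^2+\frac{\delta}{\eps}\log d+\frac{\delta}{\eps}\log(\eps/\delta)\big)$, as claimed.

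The one place needing care is the first estimate: conditioning on $B=0$ truncates each $M_v$ to $\Z\setminus\bad_v$ and renormalizes, and the normalizers differ across $v$. One should verify that all normalizers are $1-O(\delta/\eps)$ (which follows from the third estimate), so the truncated likelihood ratios remain bounded by, say, $e^{3\eps}$ and the $O(\eps^2)$ bound survives up to constants; alternatively, one can work with the unnormalized truncated sub-densities and absorb the $O(\delta/\eps)$ discrepancy into the $H(B)$ and $\Pr[B=1]\,I(V;Z\mid B=1)$ terms, which are already of that order. The remaining ingredients --- treating $M_v$ as densities with respect to a common dominating measure when $\Z$ is continuous, and the entropy estimate $H(p)=O(p\log(1/p))$ for small $p$ --- are routine.
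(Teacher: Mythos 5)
Your proposal matches the paper's proof essentially line for line: same densities $M_v$ and mixture $\bM$, same bad sets with threshold $e^{2\eps}$, same three-term decomposition $I(V;Z)\le H(B)+I(V;Z\mid B=0)+\Pr[B=1]\,I(V;Z\mid B=1)$, and the same bound $\Pr[\bad_v^{(1)}]=O(\delta/\eps)$ via the $(\eps,\delta)$ guarantee applied to the bad set itself. Your closing remark about the renormalization of the truncated densities when conditioning on $B=0$ is a legitimate technical point that the paper elides by citing \citet{DuchiJW13} directly; your suggested fix (absorbing the $O(\delta/\eps)$ discrepancy into the terms already of that order) is sound.
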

\label{mut-inf-bd}

Then, we prove that the application of an $\eta$-degrading channel on a user's item \emph{amplifies privacy}, namely, scales down both $\eps$ and $\delta$ by $\eta$. 

\begin{claim}%[Privacy amplification via degrading channels]
The composition $\Q\left(\Weta\left(\cdot\right)\right)$ of an $\eta$-degrading channel $\Weta$ (defined in (\ref{def:eta-ch})) with an $(\eps, \delta)$-$\ldp$ algorithm $\Q$ yields a $\left(O(\eta\eps), O(\eta\delta)\right)$-$\ldp$ algorithm.
\end{claim}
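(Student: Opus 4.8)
The plan is to argue directly from Definition~\ref{def:dp}, comparing the output distributions of $\Q\circ\Weta$ on two different inputs set by set. Fix a measurable subset $\Sc\subseteq\Z$ and an arbitrary ordered pair of items $v,v'\in\V$. For $u\in\V$ write $\M_u(\Sc)=\Pr[\Q(u)\in\Sc]$ and $\Meta_u(\Sc)=\Pr[\Q(\Weta(u))\in\Sc]$. The first step is to expand $\Meta_u$ by conditioning on whether the $\eta$-degrading channel in~(\ref{def:eta-ch}) passes the input through (probability $\eta$) or overwrites it with a uniform draw (probability $1-\eta$):
\begin{align}
\Meta_u(\Sc)&=\eta\,\M_u(\Sc)+(1-\eta)\,\bar{\M}(\Sc),\qquad\text{where } \bar{\M}(\Sc)=\frac1d\sum_{w\in\V}\M_w(\Sc).\nonumber
\end{align}
The key point is that the mixture component $\bar{\M}(\Sc)$ is \emph{independent of the input}, so it cancels upon differencing:
\begin{align}
\Meta_{v'}(\Sc)-\Meta_v(\Sc)&=\eta\bigl(\M_{v'}(\Sc)-\M_v(\Sc)\bigr).\nonumber
\end{align}

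Next I would invoke the $(\eps,\delta)$-$\ldp$ guarantee of $\Q$ twice. Once to bound the difference on the right: $\M_{v'}(\Sc)\leq e^{\eps}\M_v(\Sc)+\delta$, hence $\M_{v'}(\Sc)-\M_v(\Sc)\leq(e^{\eps}-1)\M_v(\Sc)+\delta$, which gives $\Meta_{v'}(\Sc)\leq\Meta_v(\Sc)+\eta(e^{\eps}-1)\M_v(\Sc)+\eta\delta$. It remains to replace the stray $\M_v(\Sc)$ by $\Meta_v(\Sc)$. For this I would use that $\M_v(\Sc)\leq e^{\eps}\M_y(\Sc)+\delta$ holds simultaneously for every $y\in\V$, and average this over $y\leftarrow\Weta(v)$ to get $\M_v(\Sc)\leq e^{\eps}\,\E_{y\leftarrow\Weta(v)}[\M_y(\Sc)]+\delta=e^{\eps}\Meta_v(\Sc)+\delta$. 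Substituting,
\begin{align}
\Meta_{v'}(\Sc)&\leq\bigl(1+\eta e^{\eps}(e^{\eps}-1)\bigr)\,\Meta_v(\Sc)+e^{\eps}\eta\delta.\nonumber
\end{align}
Since $\eps=O(1)$ we have $1+\eta e^{\eps}(e^{\eps}-1)=1+O(\eta\eps)\leq e^{O(\eta\eps)}$ and $e^{\eps}\eta\delta=O(\eta\delta)$; as $\Sc$ and the ordered pair $(v,v')$ were arbitrary, this is exactly the statement that $\Q\circ\Weta$ is $(O(\eta\eps),O(\eta\delta))$-$\ldp$.

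The computation is short and I do not foresee a real obstacle, but a couple of points need care. The delicate step is the second application of $\Q$'s privacy: one cannot simply plug in a single pair of inputs, but must average the inequality $\M_v(\Sc)\leq e^{\eps}\M_y(\Sc)+\delta$ over the (input-dependent) output distribution of the channel --- this is legitimate precisely because the inequality holds for \emph{every} $y$, with the same constants. Second, the assumption $\eps=O(1)$ is genuinely used in passing from $1+\eta e^{\eps}(e^{\eps}-1)$ to $e^{O(\eta\eps)}$; for large $\eps$ this term blows up, so the $O(\cdot)$'s in the statement should be read with that convention. Finally, note the argument is fully black-box in $\Q$ --- it uses nothing about $\Q$ beyond its $(\eps,\delta)$-$\ldp$ property and needs only $\eta\in[0,1]$ --- which is why it composes cleanly with the mutual-information bound of Claim~\ref{mut-inf-bd} to drive the overall lower bound.
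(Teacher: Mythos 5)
Your proof is correct and essentially identical to the paper's: you use the same mixture decomposition of $\Meta_u$, the same cancellation of the input-independent component, the same two applications of the $(\eps,\delta)$-LDP guarantee (once to bound $\M_{v'}-\M_v$, once, averaged over $\Weta(v)$, to replace $\M_v$ by $\Meta_v$), and arrive at the same bound $(1+\eta e^{\eps}(e^{\eps}-1))\Meta_v(\Sc)+e^{\eps}\eta\delta$. Nothing to add.
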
\label{priv-amp-by-eta-ch}

This implies that in the first scenario above with an $\eta$-degrading channel, the mutual information between the users' items and the corresponding outputs of their $(\eps, \delta)$-local randomizers is
\begin{align}
&I(v_1,...,v_n ; \bz_1,...\bz_n)\leq\sum_{i=1}^n I(v_i; \bz_i)=O\left(n\eta^2\eps^2+n\frac{\delta}{\eps}\log(d\eps/\delta)\right)\nonumber
\end{align}
Fano's inequality implies that the minimax error in this scenario is bounded from below as
$$\mmer_{\eta}\geq 1-\frac{I(v_1,...,v_n ; \bz_1,...\bz_n)+1}{\log(d)}$$
Hence, by an appropriate setting of $\eta=\Omega\left(\min\left(\sqrt{\frac{\log(d)}{\eps^2 n}}, 1\right)\right)$ and for $\delta=o\left(\frac{1}{n\log(n)}\right)$, our mutual information bound above together with Fano's inequality implies that the error in the first scenario is $\Omega(1)$. This concludes the proof of our lower bound.

\fi

%%% Local Variables:
%%% mode: latex
%%% TeX-master: "../heavyhitters"
%%% End:

%\input{tex/appendix}

\section*{Acknowledgments}

R.B. and A.S. were funded by NSF awards \#0747294 and \#0941553. Some
of this work was done while A.S. was on sabbatical at Boston
University's Hariri Center for Computation and at Harvard University's
CRCS.
% (supported by a Simons Investigator award to Salil Vadhan). 
We
are grateful to helpful conversations with {\'U}lfar Erlingsson, Aleksandra Korolova, Frank McSherry, Ilya
Mironov, Kobbi
Nissim, and Vasyl Pihur.

\bibliographystyle{abbrvnat}
\bibliography{reference}

\end{document}

\fi

\usepackage{fullpage}
\usepackage{times}
\usepackage{graphicx,color}
\usepackage{array,float}
\usepackage{url}
\usepackage[usenames,dvipsnames]{xcolor}
\usepackage{amstext,amssymb,amsmath}
\usepackage{hyphenat}
\usepackage{amsthm}
\usepackage{verbatim}
\usepackage{bm}
\usepackage{paralist}
\usepackage{ulem}\normalem
\usepackage[numbers]{natbib}
\usepackage{todonotes}
\usepackage{paralist}
\usepackage{wrapfig}
\usepackage{hyperref}
\usepackage[noend]{algorithmic}
\usepackage{algorithm}

\newtheorem{lem}{Lemma}[section]
\newcommand{\thet}{\theta}
\newcommand{\linfty}[1]{\|#1\|_\infty}
\newtheorem{thm}[lem]{Theorem}
\newtheorem{cor}[lem]{Corollary}
\newtheorem{problem}[lem]{Problem}
\newtheorem{defn}[lem]{Definition}
\newtheorem{fact}[lem]{Fact}
\newtheorem{assumption}[lem]{Assumption}
\newtheorem{claim}[lem]{Claim}
\newcommand{\ip}[2]{\langle #1,#2\rangle}
\renewcommand{\algorithmicrequire}{\textbf{Input:}}
\renewcommand{\algorithmicensure}{\textbf{Output:}}
\newcommand{\nptheta}{\hat\theta}
\newcommand{\symdiff}{\Delta}
\newcommand{\privtheta}{\theta^{priv}}
\renewcommand{\paragraph}[1]{\vspace{3pt}\noindent\textbf{#1}}
\newcommand{\scrX}{\ensuremath{\mathcal{X}}}
\newcommand{\scrY}{\ensuremath{\mathcal{Y}}}
\newcommand{\scrZ}{\ensuremath{\mathcal{Z}}}		
\newcommand{\eL}{\mathcal{L}}
\newcommand{\beL}{\bar{\mathcal{L}}}
\newcommand{\bad}{\sf Bad}
\newcommand{\good}{\sf Good}
\newcommand{\hash}{\sf Hash}
\newcommand{\rA}{\ensuremath{\rightarrow}}
\newcommand{\rrA}{\ensuremath{\longrightarrow}}
\newcommand{\qB}{\ensuremath{\mathbf{q}}}
\newcommand{\XB}{\ensuremath{\mathbf{X}}}
\newcommand{\zeroB}{\ensuremath{\mathbf{0}}}
\newcommand{\sm}{\mbox{\textendash}}
\newcommand{\ltwo}[1]{\|#1\|_2}
\newcommand{\lone}[1]{\|#1\|_1}
\newcommand{\linf}[1]{\|#1\|_{\infty}}
\newcommand{\eps}{\epsilon}
\newcommand{\A}{\mathcal{A}}
\newcommand{\cost}{\mathsf{cost}}
\newcommand{\stat}{\mathsf{stat}}
\newcommand{\D}{\mathcal{D}}
\newcommand{\J}{J}
\newcommand{\G}{\mathcal{G}}
\newcommand{\hc}{\mathcal{H}}
\newcommand{\bx}{\mathbf{x}}
\newcommand{\by}{\mathbf{y}}
\newcommand{\bz}{\mathbf{z}}
\newcommand{\hbz}{\hat{\mathbf{z}}}
\newcommand{\bw}{\mathbf{w}}
\newcommand{\bolda}{\mathbf{a}}
\newcommand{\boldb}{\mathbf{b}}
\newcommand{\indx}{\mathsf{index}}
\newcommand{\bit}{\mathsf{bit}}
\newcommand{\bbz}{\bar{\mathbf{z}}}
\newcommand{\bM}{\bar{M}}
\newcommand{\zcp}{\mathcal{Z}_{+}}
\newcommand{\zcn}{\mathcal{Z}_{-}}
\newcommand{\hypsc}{\{-\frac{1}{\epsilon},~\frac{1}{\epsilon}\}^m}
\newcommand{\hypc}{\{-\frac{1}{\sqrt{m}},~\frac{1}{\sqrt{m}}\}^m}
\newcommand{\Pc}{\mathcal{P}}
\newcommand{\hP}{\hat{\mathcal{P}}}
\newcommand{\T}{\mathcal{T}}
\newcommand{\risk}{{\sf R}}
\newcommand{\vol}{{\sf Vol}}
\newcommand{\ind}{{\mathbf{1}}}
\newcommand{\mineig}{\mu}
\newcommand{\I}{\mathbb{I}}
\newcommand{\Ico}{\mathcal{I}_{\sf out}}
\newcommand{\Ici}{\mathcal{I}_{\sf in}}
\newcommand{\E}{\mathbb{E}}
\newcommand{\Sc}{\mathcal{S}}
\newcommand{\Ec}{\mathcal{E}}
\newcommand{\V}{\mathcal{V}}
\newcommand{\F}{\mathcal{F}}
\newcommand{\samp}{\mathsf{Samp}}
\newcommand{\empL}{\mathcal{L}}
\newcommand{\hatw}{\hat{w}}
\newcommand{\dist}{{\sf Dist}_{\infty}}
\newcommand{\hdist}{{\sf Dist}}
\newcommand{\htheta}{\widetilde\theta}
\newcommand{\ptheta}{\theta^\perp}
\newcommand{\dagw}{w^\dagger}
\newcommand{\tildew}{\tilde{w}}
\newcommand{\tildeF}{\tilde{F}}
\newcommand{\tildef}{\tilde{f}}
\newcommand{\re}{\mathbb{R}}
\newcommand{\B}{\mathbb{B}}
\newcommand{\Bc}{\mathcal{B}}
\newcommand{\enc}{{\sf Enc}}
\newcommand{\dec}{{\sf Dec}}
\newcommand{\He}{{\sf Heavhit}}
\newcommand{\coll}{{\sf Coll}}
\newcommand{\splex}{\mathsf{simplex}}
\newcommand{\Q}{\mathcal{Q}}
\renewcommand{\S}{\mathbb{S}}
\newcommand{\teps}{\tilde{\epsilon}}
\newcommand{\hw}{\hat{w}}
\newcommand{\hmu}{\hat{\mu}}
\newcommand{\hmuA}{\hat{\mu}_{\sf A}}
\newcommand{\muA}{\mu_{\sf A}}
\newcommand{\hmuC}{\hat{\mu}_{\C}}
\newcommand{\muC}{\mu_{\C}}
\newcommand{\hmug}{\hat{\mu}_{\good}}
\newcommand{\istr}{i^{\ast}}
\newcommand{\mU}{\mathcal{U}}
\newcommand{\grad}{\bigtriangledown}
\newcommand{\mypar}[1]{\smallskip
\noindent{\bf\em {#1}:}}
\newcommand{\boldpar}[1]{\smallskip
\noindent{\bf{#1}:}}
\newcommand{\etal}{\emph{et al.}}
\newcommand{\ldp}{\bf{LDP}}

\newcommand{\ignore}[1]{}
\newcommand{\tprivJ}{{{\tilde J}^{\text{priv}}}}
\newcommand{\tnonoiseJ}{{{J}^\#}}
\newcommand{\z}{z}
\renewcommand{\b}{b}
\newcommand{\TODO}[1]{{\bf TODO: #1}}
\newcommand{\NOTE}[1]{{\bf NOTE: #1}}
\newcommand{\Ced}{{\Delta_{\epsilon,\delta}}}
\newcommand{\name}{\textsc{GUPT}\xspace}
\newcommand{\nameplain}{GUPT\xspace}
\newcommand{\aname}{a \textsc{GUPT}\xspace}
\newcommand{\Aname}{A \textsc{GUPT}\xspace}
\newcommand{\Weta}{\mathbf{W}^{(\eta)}}
\newcommand{\We}{\mathcal{W}^{(\eta)}}
\newcommand{\W}{\mathcal{W}}
\newcommand{\M}{\mathcal{M}}
\newcommand{\Meta}{\mathcal{M}^{(\eta)}}
\newcommand{\U}{\mathcal{U}}
\newcommand{\R}{\mathcal{R}}
\newcommand{\Z}{\mathcal{Z}}
\newcommand{\be}{\mathbf{e}}
\newcommand{\f}{\mathbf{f}}
\newcommand{\bc}{\mathbf{c}}
\newcommand{\hf}{\hat{f}}
\newcommand{\hbf}{\hat{\mathbf{f}}}
\newcommand{\C}{\mathcal{C}}
\newcommand{\er}{\mathsf{Error}}
\newcommand{\poly}{\mathsf{poly}}
\newcommand{\btr}{\mathsf{Round}}
\newcommand{\genproj}{\mathsf{GenProj}}
\newcommand{\gen}{\mathsf{RndGen}}
\newcommand{\struct}{\mathsf{struct}}
\newcommand{\err}{\textsc{Err}}
\newcommand{\sh}{\mbox{S-Hist}}
\newcommand{\fo}{\mbox{FO}}
\newcommand{\prot}{{\sf PROT}}
\newcommand{\gprot}{{\sf GenPROT}}
\newcommand{\code}{{\sf code}(d, m, \zeta)}
\newcommand{\mmer}{\mathsf{MinMaxError}}
\newcommand{\Lap}{\mathsf{Lap}}
\newcommand{\List}{\textsc{List}}
\newcommand{\pre}{\mathsf{P_{error}}}
\newcommand{\mpre}{\mathsf{P_{min-error}}}

\newcommand{\paren}[1]{{\left({#1}\right)}}

\begin{document}

\date{}
\title{Local, Private, Efficient Protocols \\ for Succinct Histograms}
\author{Raef Bassily\thanks{Computer Science and Engineering
    Department, The Pennsylvania State
    University. \texttt{\{bassily,asmith\}@psu.edu}}
\and Adam Smith\footnotemark[1]\ \ \thanks{Work done while A.S. was on
sabbatical at Boston University and
Harvard University.}
}

\maketitle

\thispagestyle{empty}

\begin{abstract}

\end{abstract}

%\listoftodos

\newpage
\thispagestyle{empty}

\tableofcontents

\newpage
\pagenumbering{arabic}

\addcontentsline{toc}{section}{Acknowledgments}
\section*{Acknowledgments}

R.B. and A.S. were funded by NSF awards \#0747294 and \#0941553. Some
of this work was done while A.S. was on sabbatical at Boston
University's Hariri Center for Computation and at Harvard University's
CRCS (supported by a Simons Investigator award to Salil Vadhan). We
are grateful to helpful conversations with {\'U}lfar Erlingsson,
Aleksandra Korolova, Frank McSherry, Ilya
Mironov, Kobbi
Nissim, and Vasyl Pihur. We thank Ilya Mironov for pointing out that
our 1-bit transformation follows the compression technique of
\citet{MMPRTV10}.

\small
\addcontentsline{toc}{section}{References}
\bibliographystyle{plainnat} \bibliography{reference}

\end{document}